\newcommand{\mathsym}[1]{{}}
\newcommand{\unicode}[1]{{}}
\theoremstyle{plain}
\newtheorem{theorem}{Theorem}
\newtheorem{lemma}[theorem]{Lemma}
\newtheorem{proposition}[theorem]{Proposition}
\numberwithin{theorem}{section}
\theoremstyle{definition}
\theoremstyle{remark}
\newtheorem{remark}[theorem]{Remark}
\renewcommand{\theequation}{\thesection.\arabic{equation}}
\newcommand{\Z}{\mathbb Z}
\newcommand{\+}{\!+\!}
\newcommand{\half}{\tfrac{1}{2}}
\renewcommand{\leq}{\leqslant}
\renewcommand{\geq}{\geqslant}
\begin{document}

%\setlength{\headheight}{20pt}

%\begin{titlepage}
\title[Loop equations for the Laguerre and Jacobi $\beta$ ensembles]{Large $N$ expansions for the
Laguerre and Jacobi $\beta$ ensembles from the loop equations}
\author{Peter J. Forrester}
\address{School of Mathematics and Statistics, 
ARC Centre of Excellence for Mathematical
 and Statistical Frontiers,
University of Melbourne, Victoria 3010, Australia}
\email{pjforr@unimelb.edu.au}
\author{Anas A. Rahman}
\address{School of Mathematics and Statistics, 
ARC Centre of Excellence for Mathematical
 and Statistical Frontiers,
University of Melbourne, Victoria 3010, Australia}
\email{anas.rahman@live.com.au}
\author{Nicholas S. Witte}
\address{
Institute of Fundamental Sciences,
Massey University
Private Bag 11222
Palmerston North 4442
New Zealand}
\email{N.S.Witte@massey.ac.nz}

\date{\today}

%\eads{\mailto{pjforr@unimelb.edu.au}}

\begin{abstract}
The $\beta$-ensembles of random matrix theory with classical weights have many special properties.
One is that the loop equations specifying the resolvent and corresponding multipoint correlators permit a
derivation at general order of the correlator via Aomoto's method from the theory of the Selberg
integral. We use Aomoto's method to derive the full hierarchy of loop equations for Laguerre and Jacobi
$\beta$ ensembles, and use these to systematically construct the explicit form of the $1/N$ expansion at
low orders. This allows us to give the explicit form of corrections to the global density, and allows various moments to be
computed, complementing results available in the literature motivated by problems in quantum transport.

\end{abstract}

%\subjclass[2000]{15A52, 33C45, 33E17, 42C05, 60K35, 62E15}
%\ams{15A52, 33C45, 33E17, 42C05, 60K35, 62E15}
%\keywords{random matrices, eigenvalue distribution, Wishart matrices, Painlev\'e equations, isomonodromic deformations}
%\noindent{\it Keywords\/}: random matrices, eigenvalue distribution, Wishart matrices, Painlev\'e equations, isomonodromic deformations.

%\end{titlepage}
\maketitle

%\tableofcontents

\section{Introduction}\label{s1}
Many applications of random matrices, for example in theoretical physics \cite{GMW98,Be97} and multivariate statistics \cite{Mu82} make use of the explicit functional form for the eigenvalue probability density function (PDF). In this context, let $H$ be a Hermitian random matrix with eigenvalue/eigenvector decomposition $H=ULU^{\dagger}$, where $L$ is the diagonal matrix of eigenvalues $\{\lambda_j\}$ and $U$ the corresponding unitary matrix of normalised eigenvectors. Let the entries of $H$ be either real ($\beta=1$), complex ($\beta=2$) or real quaternion ($\beta=4$) with the latter represented as $2\times2$ matrices (see e.g. \cite[Eq. (1.20)]{Fo10}); the label $\beta$ -- sometimes referred to as the Dyson index -- is the number of independent real numbers in an off diagonal entry. A fundamental result of random matrix theory (see e.g. \cite[Prop. 1.3.4]{Fo10}) is that the volume form associated with $H$, $(\mathrm{d}H)=\prod_{i=1}^N\mathrm{d}x_{ii}\prod_{1\leq j<k\leq N}\prod_{s=1}^{\beta}\mathrm{d}x_{jk}^{(\beta)}$, which is the product of the differentials of the independent real numbers in each entry, transforms in terms of the eigenvalues and eigenvectors according to
\begin{equation}\label{1.1}
(\mathrm{d}H)=\prod_{1\leq j<k\leq N}|\lambda_k-\lambda_j|^{\beta}\,\prod_{j=1}^N\mathrm{d}\lambda_j\,\left(U^{\dagger}\mathrm{d}U\right).
\end{equation}

One sees immediately from \eqref{1.1} that the simplest eigenvalue PDFs coming from this setting are the functional forms
\begin{equation}\label{1.2}
\frac{1}{C_N}\prod_{l=1}^Nw(\lambda_l)\,\prod_{1\leq j<k\leq N}|\lambda_k-\lambda_j|^{\beta},
\end{equation}
where $C_N$ denotes the normalisation and $w(\lambda)$ is referred to as the weight function. Random matrices with eigenvalue PDF specified by \eqref{1.2} are said to belong to a $\beta$-ensemble, indexed by $\beta$ and $w$, and denoted by $ME_{\beta,N}[w]$ (here $ME$ stands for matrix ensemble). Within this class, the so-called classical weights
\begin{equation}\label{1.3}
w(x)=\left\{\begin{array}{ll}e^{-x^2},&\quad\textrm{Gaussian}\\x^{\alpha_1}e^{-x}\chi_{x>0},&\quad\textrm{Laguerre}\\x^{\alpha_1}(1-x)^{\alpha_2}\chi_{0<x<1},&\quad\textrm{Jacobi}\\1/\left((1+\mathrm{i}x)^{\alpha}(1-\mathrm{i}x)^{\bar{\alpha}}\right),&\quad\textrm{Cauchy}\end{array}\right.,
\end{equation}
where $\chi_A=1$ if $A$ is true and $\chi_A=0$ otherwise, exhibit some distinguished properties. One is that for all the cases \eqref{1.3}, $C_N$ in \eqref{1.2} can be evaluated as an explicit product of gamma functions for general $\beta>0$, by virtue of the Selberg integral \cite{Se44}, \cite{FW07p}, \cite[Ch. 4]{Fo10}. Another is that the averaged characteristic polynomial can, for general $\beta>0$, be evaluated in terms of the $N$\textsuperscript{th} degree polynomial associated with these weights, e.g. a Jacobi polynomial in the Jacobi case \cite{Ao87}.

For $\beta=1$, $2$ and $4$ the $\beta$-ensemble with Gaussian weight results from a PDF on the space of Hermitian matrices with real, complex and real quaternion entries proportional to $\exp(-\textrm{Tr}\,H^2)$. Analogous specifications of PDFs on Hermitian matrices implying the other three weights in \eqref{1.3} can be given. For general $\beta>0$ the $\beta$-ensembles with classical weights can be realised as the eigenvalue PDF of certain tridiagonal and Hessenberg matrices \cite{DE02}, \cite{KN04}; see also \cite{Fo10}.

Consider the Cauchy weight $\beta$-ensemble and set $\alpha=b_1+\mathrm{i}b_2+\beta(N-1)/2$. Under the change of variables $\lambda_l=\mathrm{i}(1-e^{\mathrm{i}\theta_l})/(1+e^{\mathrm{i}\theta_l})$, $-\pi<\theta_l\leq\pi$, corresponding to a stereographic projection from the real line to the unit circle, this transforms to
\begin{equation}\label{1.4}
\frac{1}{C_N}\prod_{l=1}^Ne^{b_2\theta_l}|1+e^{\mathrm{i}\theta_l}|^{2b_1}\,\prod_{1\leq j<k\leq N}|e^{\mathrm{i}\theta_k}-e^{\mathrm{i}\theta_j}|^{\beta}
\end{equation}
(see e.g. \cite[Eq. (3.124)]{Fo10}). The eigenvalue PDF (\ref{1.4}) specifies a $\beta$-ensemble for unitary matrices, and has been referred to as the circular Jacobi $\beta$-ensemble \cite[\S 3.9]{Fo10}.

In previous works by two of the present authors, a detailed analysis of the Gaussian $\beta$-ensemble \cite{WF14} and circular $\beta$-ensemble \eqref{1.4} \cite{WF15} was carried out in the so-called global regime. This is the large $N$ limit, taken simultaneous to a scaling of the eigenvalues, so that the eigenvalue support is a finite interval. The main method used in the analysis was the loop equation formalism \cite{Mi83}, \cite{AM90}, which provides for a systematic $1/N$ expansion of the spectral density. In the Gaussian case, but restricted to $\beta=1$, $2$ and $4$, this was supplemented by the derivation of a specific $3$\textsuperscript{rd} order (for $\beta=2$) and $5$\textsuperscript{th} order (for $\beta=1$ and $4$) linear differential equation satisfied by the spectral density. It is the purpose of the present paper to undertake an analogous study of the Laguerre and Jacobi $\beta$-ensembles, thereby making available in the literature the details of the loop equations, and their primary consequences, in all of the classical cases.

Particularly for the Dyson indices $\beta=1$, $2$ or $4$, moments of the spectral density for the Laguerre and Jacobi $\beta$-ensembles are relevant to analysing the conductance and Wigner delay time associated with quantum transport and quantum cavities \cite{VV08}, \cite{No08}, \cite{LV11}, \cite{MS11}, \cite{MS12}, \cite{CMSV16a}, \cite{CMSV16b}. However, we don't pursue this further here, but instead focus on demonstrating that there is a common mathematical framework underlying our ability to analyse the $\beta$-ensembles with classical weights using loop equations, namely integration by parts in the theory of the Selberg integral \cite{Ao87}, \cite[Ch. 4]{Fo10}.

After introducing the relevant correlators in \S 2, and notation for their $1/N$ expansion, \S 3 focuses first on the
derivation of the loop equations for the Laguerre $\beta$-ensemble using Aomoto's method from Selberg integral
theory, and then on detailing some of the consequences. The primary consequence is the determination of the
explicit functional form for the first few terms in the $1/N$ expansion of the resolvent of the spectral density, or equivalently the corresponding expansion of the moment generating function. Section 4 gives an extension to the Jacobi case. The
final step in the derivation of the loop equations in both the Laguerre and Jacobi cases requires reducing a hierarchy of equations involving $n$-point
correlators down to a form involving only connected correlators. This is done by induction, with the details
presented in Appendix A. In distinction to the Gaussian and Laguerre cases, the moments $m_k^{(J)}$ in the
Jacobi $\beta$-ensemble are rational functions rather than polynomials in $1/N$. Thus even for small $k$ their exact
form can only be accessed indirectly by a finite $1/N$ expansion. Nonetheless, methods based on Jack polynomials have 
been used in an earlier work to specify $m_1^{(J)}$ and $m_2^{(J)}$ \cite{MR15}. In Appendix B we similarly specify
$m_3^{(J)}$ and furthermore relate the $m_k^{(J)}$ with $\alpha_1, \alpha_2$ specialised to moments in the circular
$\beta$-ensemble known from \cite{WF15}.

\setcounter{equation}{0}
\section{Resolvent, Connected Correlators and Large $N$ Form}\label{s2}
Generally the eigenvalue density corresponding to \eqref{1.2}, $\rho_{(1)}(\lambda)$, is defined so that $\int_a^b\rho_{(1)}(\lambda)\mathrm{d}\lambda$ is equal to the expected number of eigenvalues between $a$ and $b$. Its Stieltjes transform -- also referred to as the resolvent -- is given by
\begin{equation}\label{2.1}
\overline{W}_1(x)=\int_{-\infty}^{\infty}\frac{\rho_{(1)}(\lambda)}{x-\lambda}\mathrm{d}\lambda.
\end{equation}
Introducing the notation $\langle\,\cdot\,\rangle_{ME_{\beta,N}[w]}$ to denote the average with respect to the $\beta$-ensemble weight $w$  \eqref{1.2}, this can be written
\begin{equation}\label{2.2}
\overline{W}_1(x)=\left\langle\sum_{j=1}^N\frac{1}{x-\lambda_j}\right\rangle_{ME_{\beta,N}[w]}.
\end{equation}
For a large class of weights $w$, there is a scale $x = c_N s$ such that in the variable $s$ and as $N \to \infty$ the eigenvalue support is
a finite interval, and moreover $\overline{W}_1(c_N s)$ can be expanded as a series in $1/N$ \cite{BG12},
\begin{equation}\label{2.2a}
c_N \overline{W}_1(c_N s) = N \sum_{l=0}^\infty {W_1^l(s) \over (N \sqrt{\kappa})^l}, \qquad \kappa = \beta/2.
\end{equation}

The average \eqref{2.2} is an example of a one-point correlator. To calculate $\{ W_n^l(s) \}$ via loop equations requires consideration of the corresponding general $n$-point correlator
\begin{equation}\label{2.3}
\overline{U}_n(x_1,\ldots,x_n)=\left\langle\sum_{j_1,\ldots,j_n=1}^N\frac{1}{(x_1-\lambda_{j_1})\cdots(x_n-\lambda_{j_n})}\right\rangle_{ME_{\beta,N}[w]}.
\end{equation}
Moreover, these quantities must be arranged to occur only in certain linear combinations $\overline{W}_n$, the connected components of $\overline{U}_n$, specified by
\begin{align}
\overline{W}_1(x)&=\overline{U}_1(x)\nonumber
\\ \overline{W}_2(x_1,x_2)&=\overline{U}_2(x_1,x_2)-\overline{U}_1(x_1)\overline{U}_1(x_2)\nonumber
\\ \overline{W}_3(x_1,x_2,x_3)&=\overline{U}_3(x_1,x_2,x_3)-\overline{U}_2(x_1,x_2)\overline{U}_1(x_3)-\overline{U}_2(x_1,x_2)\overline{U}_1(x_2)\nonumber
\\&\quad-\overline{U}_2(x_2,x_3)\overline{U}_1(x_1)+2\overline{U}_1(x_1)\overline{U}_1(x_2)\overline{U}_1(x_3)\label{2.4}
\end{align}
and in general
\begin{equation}\label{2.5}
\overline{W}_n(x_1,\ldots,x_n)=\sum_{m=1}^n\sum_G(-1)^{m-1}(m-1)!\prod_{j=1}^m\overline{U}_{|G_j|}(x_{g_j(1)},\ldots,x_{g_j(|G_j|)}),
\end{equation}
where the sum over $G$ is over all subdivisions of $\{1,\ldots,n\}$ into $m$ subsets $G_j=\{g_j(1),\ldots,g_j(|G_j|)\}$. It is also true that $\overline{U}_n$ can be written in terms of the $W_k$ according to the inductive formula (see e.g. \cite[pp. 8-9]{WF15})
\begin{align}
\overline{U}_n(x_1,J_n)=\overline{W}_n(x_1,J_n)+\sum_{\emptyset\neq J\subseteq J_n}\overline{W}_{n-|J|}(x_1,J_n\setminus J)\overline{U}_{|J|}(J),\label{2.6}
\\J_n=(x_2,\ldots,x_n),\quad J_1=\emptyset.\nonumber
\end{align}

The crucial feature of the connected correlators $\overline{W}_n$ is that, in distinction to the unconnected correlators $\overline{U}_n$, for large $N$ they permit an expansion decaying like $N^{2-n}$ \cite{BG12}
\begin{equation}\label{2.7}
c_N^n\overline{W}_n(c_Ns_1,\ldots,c_Ns_n)=N^{2-n}\kappa^{1-n}\sum_{l=0}^{\infty}\frac{W_n^l(s_1,\ldots,s_n)}{(N\sqrt{\kappa})^l},\quad\kappa=\frac{\beta}{2},
\end{equation}
in the global regime, the latter referring to the scaling $\lambda_j = c_j s$ 
so that the eigenvalue support is a finite interval as already introduced in
(\ref{2.2a}); in fact (\ref{2.2a}) corresponds to the case $n=1$.
The $W_n^l$ are independent of $N$, but may depend on $\kappa$. The structure of \eqref{2.7} leads to a closed system of equations for the determination of any one of the $W_n^l$, and in particular for $\{W_1^l\}$.

Of the quantities in (\ref{2.7}), $W_2^0(s_1,s_2)$ is special. Thus, as first derived in a loop equation study of Hermitian matrix models
\cite{AJM90}, for settings that the global eigenvalue density is supported on a single interval $(a,b)$, $W_2^0$ for general $\kappa$ is expected to exhibit the
universal form
\begin{equation}\label{2.7a}
W_2^0(s_1,s_2) = {s_1 s_2 - (a+b) (s_1 + s_2)/2 + ab \over
2 (s_1 - s_2)^2 \sqrt{(s_1 - a) ( s_1 - b) (s_2 - a) (s_2 - b)}} - {1 \over 2 (s_1 - s_2)^2}.
\end{equation}
Our results below indeed conform with (\ref{2.7a}) in all cases considered.

\setcounter{equation}{0}
\section{Loop Equations for the Laguerre $\beta$-Ensemble}\label{s3}
\subsection{Aomoto's Method}\label{s3.1}
Let $e_p(x_1,\ldots,x_n)=\sum_{1\leq j_1<\cdots<j_p\leq n}\prod_{l=1}^px_{j_l}$ denote the $p$\textsuperscript{th} elementary symmetric function. Aomoto \cite{Ao87} (see also \cite[\S4.6]{Fo10}) showed how integration by parts could be used to deduce a first order recurrence in $p$ for $\langle e_p\rangle_{ME_{\beta,N}[\lambda^{\alpha_1}(1-\lambda)^{\alpha_2}]}$. Here we will show how this same general strategy, applied to the averages
\begin{equation}\label{3.1}
\left\langle\sum_{j_1=1}^N\frac{\partial}{\partial\lambda_{j_1}}\sum_{j_2,\ldots,j_n=1}^N\frac{1}{(x_2-\lambda_{j_2})\cdots(x_n-\lambda_{j_n})}\right\rangle^+_{ME_{\beta,N}[\lambda^{\alpha_1}e^{-\lambda}]}
\end{equation}
and
\begin{equation}\label{3.2}
\left\langle\sum_{j_1=1}^N\frac{\partial}{\partial\lambda_{j_1}}\frac{1}{x_1-\lambda_{j_1}}\sum_{j_2,\ldots,j_n=1}^N\frac{1}{(x_2-\lambda_{j_2})\cdots(x_n-\lambda_{j_n})}\right\rangle^+_{ME_{\beta,N}[\lambda^{\alpha_1}e^{-\lambda}]}
\end{equation}
leads us to the general loop equation for the Laguerre $\beta$-ensemble. Here, the superscript $+$ indicates that the derivative operation $\frac{\partial}{\partial \lambda_{j_1}}$ is to act on all functions of $\lambda_{j_1}$ to the right, including those present in the PDF when viewing the average as an integral. In particular, one must take special care in the cases $j_p=j_1$ for $p\neq1$. We take this opportunity to replace the notation $\langle\,\cdot\,\rangle_{ME_{\beta,N}[\lambda^{\alpha_1}e^{-\lambda}]}$ with $\langle\,\cdot\,\rangle^{(L)}$.

\begin{proposition}\label{prop3.1}
Define $\overline{U}_0 = 1$. For $\alpha_1>0$ at least we have
\begin{align}
0&=-\chi_{n\neq1}\sum_{k=2}^n\frac{\partial}{\partial x_k}\overline{U}_{n-1}(x_2,\ldots,x_n)+\alpha_1\left\langle\sum_{j_1=1}^N\frac{1}{\lambda_{j_1}}\sum_{j_2,\ldots,j_n=1}^N\frac{1}{(x_2-\lambda_{j_2})\cdots(x_n-\lambda_{j_n})}\right\rangle^{(L)}\nonumber
\\&\quad-N\overline{U}_{n-1}(x_2,\ldots,x_n).\label{3.3}
\end{align}
\end{proposition}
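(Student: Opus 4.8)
\emph{Proof proposal.} The plan is to obtain \eqref{3.3} by the integration-by-parts step at the heart of Aomoto's method. Write $f = f(\lambda_1,\dots,\lambda_N) = \sum_{j_2,\dots,j_n=1}^N\prod_{p=2}^n(x_p-\lambda_{j_p})^{-1}$, so that $\langle f\rangle^{(L)}=\overline{U}_{n-1}(x_2,\dots,x_n)$, with the convention $f\equiv 1=\overline{U}_0$ when $n=1$. Viewing the $+$-average in \eqref{3.1} as an integral, $\sum_{j_1}\tfrac{\partial}{\partial\lambda_{j_1}}$ acts on $f$ times the (unnormalised) PDF $\prod_{l=1}^N\lambda_l^{\alpha_1}e^{-\lambda_l}\prod_{j<k}|\lambda_k-\lambda_j|^\beta$, and each summand is then the integral over $\lambda_{j_1}\in(0,\infty)$ of a total derivative, hence a difference of boundary values. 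For $\alpha_1>0$ the factor $\lambda_{j_1}^{\alpha_1}$ annihilates the contribution at $\lambda_{j_1}=0$ and $e^{-\lambda_{j_1}}$ the one at $\lambda_{j_1}=\infty$, so \eqref{3.1} equals $0$. (The apparent poles of $f$ at $\lambda_{j_1}=x_p$ are dealt with by keeping the external variables off $[0,\infty)$ and continuing afterwards; this is the customary reading of such identities and explains the hedge ``at least''.)

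Next I would expand the total derivative by the product rule \emph{before} integrating, splitting the vanishing average into three contributions according to whether $\partial/\partial\lambda_{j_1}$ lands on $f$, on $\prod_l\lambda_l^{\alpha_1}e^{-\lambda_l}$, or on $\prod_{j<k}|\lambda_k-\lambda_j|^\beta$. The Vandermonde contribution is $\beta\,\big\langle f\sum_{j_1}\sum_{k\neq j_1}(\lambda_{j_1}-\lambda_k)^{-1}\big\rangle^{(L)}$, and it vanishes identically because $\sum_{j_1}\sum_{k\neq j_1}(\lambda_{j_1}-\lambda_k)^{-1}=0$ (the ordered pairs $(j_1,k)$ and $(k,j_1)$ cancel, since $f$ does not single out $\lambda_{j_1}$); this is exactly why the loop equation issuing from \eqref{3.1}, unlike the one issuing from \eqref{3.2}, carries no $\beta$-dependent product term. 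The weight contribution is $\big\langle f\sum_{j_1}(\alpha_1\lambda_{j_1}^{-1}-1)\big\rangle^{(L)}$: the $-1$ part gives $-N\overline{U}_{n-1}(x_2,\dots,x_n)$ and the $\alpha_1$ part is, verbatim, the second term on the right-hand side of \eqref{3.3}.

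For the remaining contribution I would differentiate $\prod_{p=2}^n(x_p-\lambda_{j_p})^{-1}$ with respect to $\lambda_{j_1}$: by the chain rule each $p\in\{2,\dots,n\}$ with $j_p=j_1$ converts the factor $(x_p-\lambda_{j_1})^{-1}$ into $(x_p-\lambda_{j_1})^{-2}$, and this handles automatically the case where several of the $j_p$ coincide with $j_1$. Summing over $j_1$ removes the constraint $j_p=j_1$, and using $(x_p-\lambda_{j_p})^{-2}=-\tfrac{\partial}{\partial x_p}(x_p-\lambda_{j_p})^{-1}$ gives $\sum_{j_1}\tfrac{\partial f}{\partial\lambda_{j_1}}=-\sum_{p=2}^n\tfrac{\partial f}{\partial x_p}$ (and $0$ when $n=1$). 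Averaging yields $-\chi_{n\neq1}\sum_{k=2}^n\tfrac{\partial}{\partial x_k}\overline{U}_{n-1}(x_2,\dots,x_n)$, and adding the three contributions produces \eqref{3.3}.

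I expect the main obstacle to be not any single computation but the careful bookkeeping of the coincidence cases $j_p=j_1$ in the differentiation of $f$ (the chain rule does the right thing, but carrying this out by hand while tracking the sums over $j_1,\dots,j_n$ is error-prone), together with making the vanishing of the boundary terms fully rigorous --- this is precisely where the hypothesis $\alpha_1>0$ enters, and where one must be mindful of the position of $x_2,\dots,x_n$ relative to the support $(0,\infty)$.
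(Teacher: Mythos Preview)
Your proposal is correct and follows essentially the same route as the paper: set the total-derivative average \eqref{3.1} to zero by the vanishing of boundary terms (using $\alpha_1>0$ at $0$ and the exponential decay at $\infty$), then expand via the product rule into the contributions from $f$, the weight, and the Vandermonde, with the Vandermonde piece vanishing by antisymmetry. Your observation that $\sum_{j_1}\sum_{k\ne j_1}(\lambda_{j_1}-\lambda_k)^{-1}=0$ holds pointwise (before averaging) is a slightly crisper way of saying what the paper phrases as antisymmetry under $\lambda_{j_1}\leftrightarrow\lambda_p$, and your explicit verification of $\sum_{j_1}\partial_{\lambda_{j_1}}f=-\sum_{p=2}^n\partial_{x_p}f$ fills in a step the paper leaves to the reader.
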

\begin{proof}
By the fundamental theorem of calculus in the variable $\lambda_j$, since with the assumption $\alpha_1>0$ the integrand vanishes at $\lambda_j=0$, and it also vanishes as $\lambda_j\rightarrow\infty$, the average \eqref{3.1} equals zero. On the other hand we can perform the differentiation directly, using the explicit functional form following from \eqref{1.2} for $ME_{\beta,N}[x^\alpha_1e^{-x}]$, allowing us to conclude
\begin{align*}
0&=\sum_{k=2}^n\left\langle\sum_{j_1,\ldots,\hat{j_k},\ldots,j_n=1}^N\frac{\chi_{n\neq1}}{(x_2-\lambda_{j_2})\cdots(x_{k-1}-\lambda_{j_{k-1}})(x_k-\lambda_{j_1})^2(x_{k+1}-\lambda_{j_{k+1}})\cdots(x_n-\lambda_{j_n})}\right\rangle^{(L)}
\\&\quad+\beta\left\langle\sum_{j_1,\ldots,j_n=1}^N\frac{1}{(x_2-\lambda_{j_2})\cdots(x_n-\lambda_{j_n})}\sum_{\substack{p=1\\p\neq j_1}}^N\frac{1}{\lambda_{j_1}-\lambda_p}\right\rangle^{(L)}
\\&\quad+\alpha_1\left\langle\sum_{j_1,\ldots,j_n=1}^N\frac{1}{\lambda_{j_1}}\frac{1}{(x_2-\lambda_{j_2})\cdots(x_n-\lambda_{j_n})}\right\rangle^{(L)}-\left\langle\sum_{j_1,\ldots,j_n=1}^N\frac{1}{(x_2-\lambda_{j_2})\cdots(x_n-\lambda_{j_n})}\right\rangle^{(L)},
\end{align*}
where the notation $\hat{j_k}$ denotes that the series does not sum over $j_k$. Comparison with \eqref{3.3} and recalling \eqref{2.3} shows that the only remaining task is to show that the second average vanishes. This holds true since the quantity in the average is antisymmetric upon interchange of $\lambda_{j_1}$ and $\lambda_p$.
\end{proof}

Applying analogous reasoning, beginning with \eqref{3.2} gives a companion identity to \eqref{3.3}.
\begin{proposition}\label{prop3.2}
In the same setting as Proposition \ref{prop3.1}
\begin{align}
0&=-\frac{\partial}{\partial x_1}\overline{U}_n(x_1,\ldots,x_n)-\chi_{n\neq1}\sum_{k=2}^n\frac{\partial}{\partial x_k}\left\{\frac{\overline{U}_{n-1}(x_1,\ldots,\hat{x}_k,\ldots,x_n)-\overline{U}_{n-1}(x_2,\ldots,x_n)}{x_k-x_1}\right\}\nonumber
\\&\quad+\alpha_1\left\langle\sum_{j_1,\ldots,j_n=1}^N\frac{1}{\lambda_{j_1}(x_1-\lambda_{j_1})\cdots(x_n-\lambda_{j_n})}\right\rangle^{(L)}-\overline{U}_n(x_1,\ldots,x_n)\nonumber
\\&\quad+\beta\left\langle\sum_{j_1,\ldots,j_n=1}^N\frac{1}{(x_1-\lambda_{j_1})\cdots(x_n-\lambda_{j_n})}\sum_{\substack{p=1\\p\neq j_1}}^N\frac{1}{\lambda_{j_1}-\lambda_p}\right\rangle^{(L)},\label{3.4}
\end{align}
where the notation $\hat{x}_k$ in $\overline{U}_{n-1}(x_1,\ldots,\hat{x}_k,\ldots,x_n)$ denotes that $x_k$ is not an argument of $\overline{U}_{n-1}$.
\end{proposition}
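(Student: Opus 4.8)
The plan is to mimic the proof of Proposition~\ref{prop3.1}, simply carrying the extra resolvent factor $1/(x_1-\lambda_{j_1})$ along through every step. First I would note that the average \eqref{3.2}, written as an integral over $(0,\infty)^N$ against the Laguerre PDF, vanishes: for each $j_1$ the $\lambda_{j_1}$-integral of a total $\lambda_{j_1}$-derivative is a boundary term, which vanishes at $\lambda_{j_1}=0$ because $\alpha_1>0$ makes $\lambda_{j_1}^{\alpha_1}\to 0$, and vanishes as $\lambda_{j_1}\to\infty$ by the $e^{-\lambda_{j_1}}$ decay (with $x_1,\dots,x_n$ taken off the spectrum so the resolvent factors stay bounded). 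This is exactly the argument used for \eqref{3.1}.

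Next I would carry out the differentiation $\partial/\partial\lambda_{j_1}$ explicitly via the product rule, collecting three groups of contributions. First, the derivative hitting $1/(x_1-\lambda_{j_1})$ gives $\langle\sum 1/((x_1-\lambda_{j_1})^2(x_2-\lambda_{j_2})\cdots(x_n-\lambda_{j_n}))\rangle^{(L)}=-\partial_{x_1}\overline{U}_n(x_1,\dots,x_n)$. Second, the derivative hitting a factor $1/(x_k-\lambda_{j_k})$, which is nontrivial only on the coincidence $j_k=j_1$, gives for each $k=2,\dots,n$ a term $\langle\sum 1/\big((x_1-\lambda_{j_1})(x_k-\lambda_{j_1})^2\big)\prod_{l\neq 1,k}1/(x_l-\lambda_{j_l})\rangle^{(L)}$. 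Third, the derivative hitting the PDF brings down its logarithmic derivative $\alpha_1/\lambda_{j_1}-1+\beta\sum_{p\neq j_1}1/(\lambda_{j_1}-\lambda_p)$, producing immediately the $\alpha_1$-average, the term $-\overline{U}_n(x_1,\dots,x_n)$, and the $\beta$-average appearing in \eqref{3.4}.

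The step that needs care is reorganising the coincidence terms from the second group into the difference quotient in \eqref{3.4}. Here I would use the partial fraction identity $1/\big((x_1-\lambda)(x_k-\lambda)\big)=(x_k-x_1)^{-1}\big(1/(x_1-\lambda)-1/(x_k-\lambda)\big)$ together with $1/\big((x_1-\lambda)(x_k-\lambda)^2\big)=-\partial_{x_k}\big[1/\big((x_1-\lambda)(x_k-\lambda)\big)\big]$. After relabelling the free summation index, the $k$-th coincidence term becomes $-\partial_{x_k}\big\{(x_k-x_1)^{-1}\big(\overline{U}_{n-1}(x_1,\dots,\hat{x}_k,\dots,x_n)-\overline{U}_{n-1}(x_2,\dots,x_n)\big)\big\}$, where the first $\overline{U}_{n-1}$ is obtained by keeping the argument $x_1$ and deleting $x_k$, and the second by keeping $x_k$ (which merges back into $\{x_2,\dots,x_n\}$) and deleting $x_1$; summing over $k$, and noting this contribution is present only when $n\neq 1$, gives the second line of \eqref{3.4}. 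Collecting the three groups and equating to zero yields \eqref{3.4}.

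I expect the main obstacle to be purely a matter of bookkeeping: keeping straight which factors depend on $\lambda_{j_1}$ under the various index coincidences, and correctly identifying the two copies of $\overline{U}_{n-1}$ in the difference quotient. It is worth flagging one structural difference from Proposition~\ref{prop3.1}: there the $\beta$ double sum vanished because, once the remaining resolvent factors were summed into a function symmetric in all the $\lambda$'s, $\sum_{j_1}\sum_{p\neq j_1}1/(\lambda_{j_1}-\lambda_p)=0$ by antisymmetry; here the extra factor $1/(x_1-\lambda_{j_1})$ is pinned to the index $j_1$, breaking this antisymmetry, so the $\beta$-average survives and appears explicitly in \eqref{3.4}.
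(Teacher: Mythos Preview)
Your proposal is correct and follows exactly the approach the paper intends: the paper does not spell out a proof of Proposition~\ref{prop3.2} but simply states that ``applying analogous reasoning, beginning with \eqref{3.2} gives a companion identity to \eqref{3.3},'' and your write-up supplies precisely those details. Your treatment of the coincidence terms via the partial-fraction/$\partial_{x_k}$ manipulation and your remark on why the $\beta$-sum no longer cancels are both accurate and match how the identity is used downstream.
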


We observe that upon use of the simple partial fraction formula
\begin{equation}\label{3.4a}
\frac{1}{\lambda_{j_1}(x_1-\lambda_{j_1})}=\frac{1}{x_1}\left(\frac{1}{\lambda_{j_1}}+\frac{1}{x_1-\lambda_{j_1}}\right)
\end{equation}
in the third line of \eqref{3.4}, the identity \eqref{3.3} can be used to express the relevant average entirely in terms of $\overline{U}$'s,
\begin{align}
&\alpha_1\left\langle\sum_{j_1,\ldots,j_n=1}^N\frac{1}{\lambda_{j_1}(x_1-\lambda_{j_1})\cdots(x_n-\lambda_{j_n})}\right\rangle^{(L)}\nonumber
\\&=\frac{\alpha_1}{x_1}\overline{U}_n(x_1,\ldots,x_n)+\chi_{n=1}\frac{N}{x_1}+\chi_{n\neq1}\frac{N}{x_1}\overline{U}_{n-1}(x_2,\ldots,x_n)\nonumber
\\&\quad+\chi_{n\neq1}\sum_{k=2}^n\frac{\partial}{\partial x_k}\overline{U}_{n-1}(x_2,\ldots,x_n)\label{3.5}
\end{align}

We're still faced with the task of identifying the final average in \eqref{3.4} in terms of the $\overline{U}$'s.
\begin{lemma}
With $\kappa=\beta/2$ we have
\begin{multline}\label{3.6}
\beta\left\langle\sum_{j_1,\ldots,j_n=1}^N\frac{1}{(x_1-\lambda_{j_1})\cdots(x_n-\lambda_{j_n})}\sum_{\substack{p=1\\p\neq j_1}}\frac{1}{\lambda_{j_1}-\lambda_p}\right\rangle^{(L)}
\\=\kappa\overline{U}_{n+1}(x_1,x_1,x_2,\ldots,x_n)+\kappa\frac{\partial}{\partial x_1}\overline{U}_n(x_1,\ldots,x_n).
\end{multline}
\end{lemma}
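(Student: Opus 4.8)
The plan is to manipulate the double sum $\sum_{j_1}\sum_{p\neq j_1}\frac{1}{\lambda_{j_1}-\lambda_p}$ by symmetrising over the pair $(j_1,p)$, exactly as in the classical derivation of loop equations for the Gaussian and circular ensembles (cf.\ the references \cite{WF15}, \cite{Mi83}). Since the factor $\frac{1}{(x_1-\lambda_{j_1})\cdots(x_n-\lambda_{j_n})}$ distinguishes the index $j_1$ from the others, I would first isolate the part of this product that depends on $\lambda_{j_1}$, namely $\frac{1}{x_1-\lambda_{j_1}}$, and write the average as
\[
\beta\Big\langle \sum_{j_2,\ldots,j_n}\frac{1}{(x_2-\lambda_{j_2})\cdots(x_n-\lambda_{j_n})}\sum_{j_1}\sum_{p\neq j_1}\frac{1}{(x_1-\lambda_{j_1})(\lambda_{j_1}-\lambda_p)}\Big\rangle^{(L)}.
\]
The key algebraic identity is the partial fraction decomposition in $\lambda_{j_1}$: summing $\frac{1}{(x_1-\lambda_{j_1})(\lambda_{j_1}-\lambda_p)} + \frac{1}{(x_1-\lambda_p)(\lambda_p-\lambda_{j_1})}$ over the unordered pair produces, after combining, a term involving $\frac{1}{(x_1-\lambda_{j_1})(x_1-\lambda_p)}$. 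Concretely, $\frac{1}{(x_1-a)(a-b)}+\frac{1}{(x_1-b)(b-a)} = \frac{1}{(x_1-a)(x_1-b)}$, so that $\sum_{j_1}\sum_{p\neq j_1}\frac{1}{(x_1-\lambda_{j_1})(\lambda_{j_1}-\lambda_p)} = \frac12\sum_{j_1}\sum_{p\neq j_1}\frac{1}{(x_1-\lambda_{j_1})(x_1-\lambda_p)}$.

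Next I would re-introduce the diagonal term $p=j_1$: completing the double sum to run over all $j_1,p$ adds $\frac12\sum_{j_1}\frac{1}{(x_1-\lambda_{j_1})^2}$, which must then be subtracted. The unrestricted double sum $\sum_{j_1,p}\frac{1}{(x_1-\lambda_{j_1})(x_1-\lambda_p)}$, together with the remaining factors $\frac{1}{(x_2-\lambda_{j_2})\cdots(x_n-\lambda_{j_n})}$, is precisely $\overline{U}_{n+1}(x_1,x_1,x_2,\ldots,x_n)$ by definition \eqref{2.3} (two arguments set equal to $x_1$). Meanwhile $-\frac12\sum_{j_1}\frac{1}{(x_1-\lambda_{j_1})^2}$ times the other factors equals $\frac12\frac{\partial}{\partial x_1}\overline{U}_n(x_1,\ldots,x_n)$, since $\frac{\partial}{\partial x_1}\frac{1}{x_1-\lambda_{j_1}} = -\frac{1}{(x_1-\lambda_{j_1})^2}$ and the derivative passes through the average and through the $j_1$-independent factors. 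Multiplying through by $\beta = 2\kappa$ converts the two $\frac12$'s into $\kappa$'s, yielding exactly the right-hand side of \eqref{3.6}.

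The main subtlety — and the step I would treat most carefully — is the bookkeeping of coincident indices. One must be sure that when $j_1$ equals one of $j_2,\ldots,j_n$ nothing special happens (it does not: those indices are free summation variables and the identity \eqref{2.3} for $\overline{U}_{n+1}$ already accounts for all coincidences among arguments), and, more importantly, that the symmetrisation $\frac{1}{(x_1-\lambda_{j_1})(\lambda_{j_1}-\lambda_p)}\mapsto\frac12\frac{1}{(x_1-\lambda_{j_1})(x_1-\lambda_p)}$ is valid term-by-term under the sum over the \emph{unordered} pair $\{j_1,p\}$ with $j_1\neq p$ — here the pole at $\lambda_{j_1}=\lambda_p$ cancels in the pairwise sum, so there is no issue of principal values, and the manipulation is a genuine polynomial identity once common denominators are cleared. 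Everything else is the elementary partial-fraction identity above and the elementary observation that $\partial_{x_1}$ acting on $\overline{U}_n$ reproduces the squared factor, so I expect no further obstacles.
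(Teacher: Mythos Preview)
Your proof is correct and follows essentially the same route as the paper: the paper's proof is a one-line sketch that symmetrises in $(j_1,p)$ via the identity $\frac{1}{(x_1-\lambda_{j_1})(\lambda_{j_1}-\lambda_p)}\mapsto\frac{1}{2}\frac{1}{(x_1-\lambda_{j_1})(x_1-\lambda_p)}$ and then says ``the RHS follows,'' leaving implicit exactly the diagonal bookkeeping (adding back $p=j_1$ to form $\overline{U}_{n+1}$ and recognising the correction as $\partial_{x_1}\overline{U}_n$) that you spell out carefully.
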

\begin{proof}
Interchanging $\lambda_{j_1}$ and $\lambda_p$ allows the replacement
\begin{equation*}
\frac{1}{(x-\lambda_{j_1})(\lambda_{j_1}-\lambda_p)}\mapsto\frac{1}{2}\frac{1}{\lambda_{j_1}-\lambda_p}\left(\frac{1}{x_1-\lambda_{j_1}}-\frac{1}{x_1-\lambda_{j_p}}\right)=\frac{1}{2}\frac{1}{(x_1-\lambda_{j_1})(x_1-\lambda_{j_p})}
\end{equation*}
in the LHS of \eqref{3.6}, and the RHS follows.
\end{proof}

Substituting \eqref{3.5} and \eqref{3.6} in \eqref{3.4} all terms now involve the correlators \eqref{2.3}, and we have
\begin{align}
0&=\left[(\kappa-1)\frac{\partial}{\partial x_1}+\left(\frac{\alpha_1}{x_1}-1\right)\right]\overline{U}_n(x_1,J_n)+ \frac{N}{x_1}\overline{U}_{n-1}(J_n)\nonumber
\\&\quad+\chi_{n\neq1}\sum_{k=2}^n\frac{\partial}{\partial x_k}\left\{\frac{\overline{U}_{n-1}(x_1,\ldots,\hat{x}_k,\ldots,x_n)-\overline{U}_{n-1}(J_n)}{x_1-x_k}+\frac{1}{x_1}\overline{U}_{n-1}(J_n)\right\}\nonumber
\\&\quad+\kappa\overline{U}_{n+1}(x_1,x_1,J_n),\label{3.7}
\end{align}
where $J_n$ is as in \eqref{2.6}. Moreover, this can be written in terms of the connected correlators $\{\overline{W}_k\}$ to give us the hierarchy of loop equations for the Laguerre $\beta$-ensemble.

\begin{proposition}\label{prop3.4}
With $\kappa=\beta/2$ and $J_n$ as in \eqref{2.6}, for $n\in\mathbb{Z}^+$ we have
\begin{align}
0&=\left[(\kappa-1)\frac{\partial}{\partial x_1}+\left(\frac{\alpha_1}{x_1}-1\right)\right]\overline{W}_n(x_1,J_n)+\chi_{n=1}\frac{N}{x_1}\nonumber
\\&\quad+\chi_{n\neq1}\sum_{k=2}^n\frac{\partial}{\partial x_k}\left\{\frac{\overline{W}_{n-1}(x_1,\ldots,\hat{x}_k,\ldots,x_n)-\overline{W}_{n-1}(J_n)}{x_1-x_k}+\frac{1}{x_1}\overline{W}_{n-1}(J_n)\right\}\nonumber
\\&\quad+\kappa\left[\overline{W}_{n+1}(x_1,x_1,J_n)+\sum_{J\subseteq J_n}\overline{W}_{|J|+1}(x_1,J)\overline{W}_{n-|J|}(x_1,J_n\setminus J)\right].\label{3.8}
\end{align}
\end{proposition}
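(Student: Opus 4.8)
The plan is to deduce \eqref{3.8} from the unconnected-correlator identity \eqref{3.7} by substituting the moment--cumulant relations \eqref{2.4}--\eqref{2.5}, conveniently used in the recursive form \eqref{2.6}, and then cancelling all but the wanted terms by induction on $n$. For the base case $n=1$ one has $\overline{U}_1=\overline{W}_1$, $\overline{U}_0=1$ and $\overline{U}_2(x_1,x_1)=\overline{W}_2(x_1,x_1)+\overline{W}_1(x_1)^2$, so \eqref{3.7} at $n=1$ turns directly into \eqref{3.8} at $n=1$: the product term is $\overline{W}_1(x_1)\overline{W}_1(x_1)$, the lone $J=\emptyset$ contribution to the sum over $J\subseteq J_1=\emptyset$, and $N\overline{U}_0/x_1$ becomes the $\chi_{n=1}N/x_1$ term.

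For the inductive step assume \eqref{3.8} holds at every level strictly below $n$. In \eqref{3.7} replace every $\overline{U}_m$ by its expansion \eqref{2.6} into connected correlators, noting that $(\kappa-1)\partial/\partial x_1$, $(\alpha_1/x_1-1)$ and each $\partial/\partial x_k$ commute past any factor independent of the differentiated variable. The one term needing care is $\kappa\,\overline{U}_{n+1}(x_1,x_1,J_n)$: apply \eqref{2.6} with the first copy of $x_1$ distinguished and the other $n$ arguments being $(x_1,J_n)$, split the resulting sum according to whether the second copy of $x_1$ lies in the set $J$ peeled off, and iterate \eqref{2.6} on any $\overline{U}$ that still carries $x_1$. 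Grouping the outcome according to the connected cluster of variables attached to the distinguished variable $x_1$, one finds that \eqref{3.7} takes the shape
\begin{equation*}
0 \;=\; \mathcal{L}_n(x_1,J_n) \;+\; \sum_{\emptyset\neq J\subseteq J_n}\overline{U}_{|J|}(J)\,\mathcal{L}_{n-|J|}(x_1,J_n\setminus J),
\end{equation*}
where $\mathcal{L}_m$ denotes the right-hand side of \eqref{3.8} at level $m$. Here the ``diagonal'' piece $\mathcal{L}_n$ picks up the $\overline{W}_n$-linear term, the divided-difference $\overline{W}_{n-1}$ terms, the collision term $\kappa\,\overline{W}_{n+1}(x_1,x_1,J_n)$, and the quadratic sum $\kappa\sum_{J\subseteq J_n}\overline{W}_{|J|+1}(x_1,J)\overline{W}_{n-|J|}(x_1,J_n\setminus J)$, the latter coming precisely from those terms of the cumulant expansion of $\overline{U}_{n+1}(x_1,x_1,J_n)$ in which every variable of $J_n$ belongs to a cluster containing a copy of $x_1$. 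By the inductive hypothesis every $\mathcal{L}_{n-|J|}$ with $J\neq\emptyset$ vanishes, so the whole sum drops out, leaving $\mathcal{L}_n(x_1,J_n)=0$, which is \eqref{3.8}.

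The main obstacle is establishing the displayed regrouping with the correct combinatorial coefficients and no residue. The delicate points are: (i) the repeated argument in $\overline{U}_{n+1}(x_1,x_1,J_n)$, where the two copies of $x_1$ may share a connected cluster or sit in separate ones; (ii) the divided-difference terms $\big(\overline{U}_{n-1}(x_1,\ldots,\hat x_k,\ldots,x_n)-\overline{U}_{n-1}(J_n)\big)/(x_1-x_k)$ and $\overline{U}_{n-1}(J_n)/x_1$, whose partition expansions must be matched term by term against the corresponding $\overline{W}$-structures inside $\mathcal{L}_n$ and the $\mathcal{L}_{n-|J|}$; and (iii) the term $N\overline{U}_{n-1}(J_n)/x_1$, which for $n\geq2$ is absorbed entirely by the single summand $J=J_n$ of the off-diagonal sum (since only $\mathcal{L}_1$ carries a bare $N/x_1$), in agreement with the factor $\chi_{n=1}$ appearing in \eqref{3.8}. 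This induction is carried out in full detail in Appendix A.
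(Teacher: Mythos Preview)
Your proposal is correct and follows essentially the same route as the paper: the base case $n=1$ is handled by direct substitution, and the inductive step proceeds by rewriting \eqref{3.7} as $\mathcal{L}_n + \sum_{\emptyset\neq J\subseteq J_n}\overline{U}_{|J|}(J)\,\mathcal{L}_{n-|J|}(x_1,J_n\setminus J)$ and then invoking the hypothesis to kill the sum --- precisely the subtractions of $\overline{U}_{|J|}(J)\,I(x_1,J_n\setminus J)$ that the paper carries out in Appendix~\ref{appA}. Your identification of the three delicate points (the repeated argument in $\overline{U}_{n+1}$, the divided-difference terms, and the absorption of $N\overline{U}_{n-1}(J_n)/x_1$ by the $J=J_n$ summand) matches exactly the manipulations detailed there.
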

\begin{proof}
In the case $n=1$, \eqref{3.7} reads
\begin{equation*}
0=\left[(\kappa-1)\frac{\partial}{\partial x_1}+\left(\frac{\alpha_1}{x_1}-1\right)\right]\overline{U}_1(x_1)+\frac{N}{x_1}+\kappa\overline{U}_2(x_1,x_1).
\end{equation*}
Recalling \eqref{2.4}, in terms of the $\overline{W}$'s this reads
\begin{equation}\label{3.9}
0=\left[(\kappa-1)\frac{\partial}{\partial x_1}+\left(\frac{\alpha_1}{x_1}-1\right)\right]\overline{W}_1(x_1)+\frac{N}{x_1}+\kappa\left[\overline{W}_2(x_1,x_1)+\overline{W}_1(x_1)\overline{W}_1(x_1)\right]
\end{equation}
which agrees with \eqref{3.8} for $n=1$.

For $n=2$, we again begin by replacing the $\overline{U}$'s by the $\overline{W}$'s according to \eqref{2.4}. We then subtract \eqref{3.9} multiplied by $\overline{W}_1(x_2)$ to arrive at \eqref{3.8}.

The general $n$ case of \eqref{3.8} can be deduced from \eqref{3.7} using induction on the cases $n-1,n-2,\ldots,1$, with $n=1$ and $n=2$ as the base cases. This requires use of \eqref{2.6}. The details are given in  Appendix \ref{appA}; see also \cite{Ra16}.
\end{proof}

\begin{remark}
For the Gaussian $\beta$-ensemble $ME_{\beta,N}[e^{-x^2}]$ the $n$\textsuperscript{th} loop equation is \cite{BEMN10}, \cite{BMS11}, \cite{WF14}
\begin{align}
0&=\left[(\kappa-1)\frac{\partial}{\partial x_1}-2x\right]\overline{W}_n(x_1,J_n)+\chi_{n=1}(2N)\nonumber
\\&\quad+\chi_{n\neq1}\sum_{k=2}^n\frac{\partial}{\partial x_k}\left\{\frac{\overline{W}_{n-1}(x_1,\ldots,\hat{x}_k,\ldots,x_n)-\overline{W}_{n-1}(J_n)}{x_1-x_k}\right\}\nonumber
\\&\quad+\kappa\left[\overline{W}_{n+1}(x_1,x_1,J_n)+\sum_{J\subseteq J_n}\overline{W}_{|J|+1}(x_1,J)\overline{W}_{n-|J|}(x_1,J_n\setminus J)\right].\label{3.10}
\end{align}
The structural similarity with \eqref{3.8} is evident.
\end{remark}

\subsection{Global Scaling and a Triangular System for $\{W_k^l\}$}\label{s3.2}
We see that the first loop equation \eqref{3.9} involves $\overline{W}_1$ and $\overline{W}_2$, while \eqref{3.8} shows that the $n$\textsuperscript{th} loop equation involves $\{\overline{W}_j\}_{j=1}^{n+1}$. Thus there are always more unknowns than equations. This circumstance changes if we consider expansions of the loop equations in powers of $1/N$. To be able to expand the $\overline{W}_j$ in this manner, we know from \eqref{2.7} that global scaling of the eigenvalues is required so that for large $N$ the eigenvalue density is supported on a compact interval. Since the Laguerre $\beta$-ensemble $ME_{\beta,N}[x^{\alpha_1}e^{-x}]$ has its leading order eigenvalue support on $(0,4N\kappa)$ \cite[Prop. 3.2.3]{Fo10}, we thus set $C_N=N\kappa$ in \eqref{2.7} and we further define the global scaled connected correlator $W_n$ by
\begin{equation}\label{3.11}
W_n(s_1,\ldots,s_n)=(N\kappa)^n\overline{W}_n(N\kappa s_1,\ldots,N\kappa s_n).
\end{equation}
In terms of this scaled correlator, the $n$\textsuperscript{th} loop equation reads
\begin{align}
0&=\left[\frac{1}{N}\left(1-\frac{1}{\kappa}\right)\frac{\partial}{\partial s_1}+\left(\frac{\alpha_1}{N\kappa s_1}-1\right)\right]W_n(s_1,\ldots,s_n)+\chi_{n=1}\frac{N}{s_1}\nonumber
\\&\quad+\frac{\chi_{n\neq1}}{N\kappa}\sum_{k=2}^n\frac{\partial}{\partial s_k}\left\{\frac{W_{n-1}(s_1,\ldots,\hat{s}_k,\ldots,s_n)-W_{n-1}(s_2,\ldots,s_n)}{s_1-s_k}+\frac{1}{s_1}W_{n-1}(s_2,\ldots,s_n)\right\}\nonumber
\\&\quad+\frac{1}{N}\left[W_{n+1}(s_1,s_1,s_2,\ldots,s_n)+\sum_{J\subseteq (s_2,\ldots,s_n)}W_{|J|+1}(s_1,J)W_{n-|J|}(s_1,(s_2,\ldots,s_n)\setminus J)\right].\label{3.12}
\end{align}
We now substitute the RHS of \eqref{2.7} for $W_n$ and equate like powers of $1/N$. This gives a hierarchy of equations which in a sense to be made precise subsequently, permits a unique solution.

\begin{proposition}\label{P13}
In the case $n=1$, we deduce from \eqref{3.12} that
\begin{equation}\label{3.13}
0=\left(W_1^0(s_1)\right)^2-W_1^0(s_1)+\frac{1}{s_1},
\end{equation}
and that for $l\geq1$
\begin{equation}\label{3.14}
0=W_2^{l-2}(s_1,s_1)+\sum_{k=0}^lW_1^k(s_1)W_1^{l-k}(s_1)+h\frac{\mathrm{d}}{\mathrm{d}s_1}W_1^{l-1}(s_1)+\frac{\alpha_1}{\sqrt{\kappa}s_1}W_1^{l-1}(s_1)-W_1^l(s_1),
\end{equation}
where $h:=\sqrt{\kappa}-1/\sqrt{\kappa}$ and $W_n^{-1}:=0$ for all $n\geq1$.

For $n\geq2$, we deduce from \eqref{3.12} that
\begin{align}
0&=-W_n^0(s_1,\ldots,s_n)+\sum_{J\subseteq(s_2,\ldots,s_n)}W_{|J|+1}^0(s_1,J)W_{n-|J|}^0(s_1,(s_2,\ldots,s_n)\setminus J)\nonumber
\\&\quad+\sum_{k=2}^n\frac{\partial}{\partial s_k}\left\{\frac{W_{n-1}^0(s_1,\ldots,\hat{s}_k,\ldots,s_n)-W_{n-1}^0(s_2,\ldots,s_n)}{s_1-s_k}+\frac{1}{s_1}W_{n-1}^0(s_2,\ldots,s_n)\right\}\label{3.15}
\end{align}
and that for $l\geq1$
\begin{align}
0&=h\frac{\partial}{\partial s_1}{W_n^{l-1}}(s_1,\ldots,s_n)+\frac{\alpha_1}{\sqrt{\kappa}s_1}W_n^{l-1}(s_1,\ldots,s_n)-W_n^l(s_1,\ldots,s_n)\nonumber
\\&\quad+\sum_{k=2}^n\frac{\partial}{\partial s_k}\left\{\frac{W_{n-1}^l(s_1,\ldots,\hat{s}_k,\ldots,s_n)-W_{n-1}^l(s_2,\ldots,s_n)}{s_1-s_k}+\frac{1}{s_1}W_{n-1}^l(s_2,\ldots,s_n)\right\}\nonumber
\\&\quad+W_{n+1}^{l-2}(s_1,s_1,s_2,\ldots,s_n)+\sum_{J\subseteq(s_2,\ldots,s_n)}\sum_{k=0}^lW_{|J|+1}^k(s_1,J)W_{n-|J|}^{l-k}(s_1,(s_2,\ldots,s_n)\setminus J).\label{3.16}
\end{align}
\end{proposition}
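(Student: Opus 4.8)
The plan is to substitute the $1/N$ expansion directly into the globally scaled loop equation \eqref{3.12} and then equate coefficients of the powers of $N$ that appear. First I would record explicitly the expansion that \eqref{2.7} implies once one takes $c_N=N\kappa$ and recalls the definition \eqref{3.11}, namely
\[
W_n(s_1,\ldots,s_n)=N^{2-n}\kappa^{1-n}\sum_{l=0}^{\infty}\frac{W_n^l(s_1,\ldots,s_n)}{(N\sqrt{\kappa})^l},
\]
which for $n=1$ is exactly the form \eqref{2.2a}. Treating this as a formal power series identity, and equating coefficients, is legitimate precisely because the expansion \eqref{2.7} is known to hold \cite{BG12}; the content of the Proposition is then purely the extraction of those coefficients. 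Substituting this for every occurrence of $W_n$, $W_{n-1}$ and $W_{n+1}$ in \eqref{3.12}, each single-correlator term picks up a factor $N^{a}\kappa^{b}(N\sqrt{\kappa})^{-l'}$ with $l'$ the index of the term used, and each product of two correlators produces a double sum over indices carrying a factor $N^{a}\kappa^{b}(N\sqrt{\kappa})^{-l_1-l_2}$.

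Next I would fix $n$ and collect the coefficient of $N^{2-n-l}$ in the expanded equation; this is the natural leading power, reducing to $N^{1}$ when $n=1$. The bookkeeping is mechanical but needs care over which index is shifted in each term: the shift $l'=l-1$ occurs in the term carrying the explicit $\tfrac1N$ together with $\frac{\partial}{\partial s_1}$ and in the term carrying $\alpha_1/(N\kappa s_1)$; the shift $l'=l-2$ occurs in $\tfrac1N W_{n+1}(s_1,s_1,J_n)$ and, when $n=1$, in $\tfrac1N W_2(s_1,s_1)$; there is no shift in $-W_n$ or in the $\chi_{n\neq1}$ sum over $k$; and the bilinear sum over $J\subseteq J_n$ contributes with $l_1+l_2=l$, i.e.\ with an inner sum $\sum_{k=0}^{l}$. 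The key point is that the powers of $\kappa$ in all of these terms reduce to a single common factor, $\kappa^{1-n-l/2}$ for $n\geq2$ and $\kappa^{-l/2}$ for $n=1$. This works out because in the $\tfrac1N(1-1/\kappa)\frac{\partial}{\partial s_1}$ term the shift $l'=l-1$ contributes an extra $\kappa^{1/2}$, and $(1-1/\kappa)\kappa^{1/2}=\sqrt{\kappa}-1/\sqrt{\kappa}=h$, producing the coefficient $h\,\frac{\partial}{\partial s_1}W_n^{l-1}$; while in the $\alpha_1/(N\kappa s_1)$ term the $\kappa^{-1}$ combines with that same extra $\kappa^{1/2}$ to give $\kappa^{-1/2}$, producing the coefficient $\alpha_1/(\sqrt{\kappa}s_1)$. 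The remaining terms already carry the common factor with no leftover power of $\kappa$. Dividing through by the common factor yields \eqref{3.14} for $n=1$ and \eqref{3.16} for $n\geq2$, valid for $l\geq1$; the convention $W_n^{-1}=0$ is exactly what is needed for the shifted terms to drop out when $l=1$.

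Finally, setting $l=0$ in the same computation, every term built from a shifted index ($l'=l-1$ or $l'=l-2$, hence the $h$-, $\alpha_1$- and $W_{n+1}$-terms) disappears because $W_n^{-1}=W_n^{-2}=0$. What remains for $n=1$ is $-W_1^0(s_1)+1/s_1+(W_1^0(s_1))^2=0$, which is \eqref{3.13}; and for $n\geq2$ it is $-W_n^0$ together with the $\chi_{n\neq1}$ derivative sum evaluated at level $l=0$ and the bilinear sum $\sum_{J\subseteq J_n}W_{|J|+1}^0(s_1,J)\,W_{n-|J|}^0(s_1,J_n\setminus J)$, which is \eqref{3.15}. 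The only step I expect to require genuine checking rather than rote expansion is the simultaneous collapse of the $\kappa$-powers to a single overall factor in every term, since it is this that forces the particular combinations $h=\sqrt{\kappa}-1/\sqrt{\kappa}$ and $\alpha_1/(\sqrt{\kappa}s_1)$; once that is confirmed the result follows by reading off coefficients.
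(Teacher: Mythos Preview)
Your proposal is correct and follows exactly the approach the paper indicates: substitute the expansion \eqref{2.7} into \eqref{3.12} and equate like powers of $1/N$. The paper does not spell out the $\kappa$-bookkeeping that produces $h$ and $\alpha_1/\sqrt{\kappa}$, so your detailed tracking of the index shifts and the common factor $\kappa^{-l/2}$ (resp.\ $\kappa^{1-n-l/2}$) is a welcome elaboration rather than a different route.
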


\begin{remark}\label{R14}
We see from \eqref{3.14} and \eqref{3.16} that knowledge of $W_{n'}^{l'}$ with the first two arguments repeated is required; for example \eqref{3.14} with $l=2$ needs $W_2^0(s_1,s_1)$ as input. On the other hand, inspection of \eqref{3.15} shows
\begin{equation}\label{3.17}
0=-W_2^0(s_1,s_2)+2W_2^0(s_1,s_2)W_1^0(s_1)+\frac{\partial}{\partial s_2}\left\{\frac{W_1^0(s_1)-W_1^0(s_2)}{s_1-s_2}+\frac{1}{s_1}W_1^0(s_2)\right\},
\end{equation}
so it is not possible to literally set $s_1=s_2$ due to the removable singularity. Instead, a limiting process must be adopted.
\end{remark}

\begin{remark}
In deducing the system of equations of Proposition \ref{P13} from \eqref{3.12} it has been assumed that $\alpha_1$ is of order unity. The case that $\alpha_1$ is of order $N$ is considered in \S\ref{s3.6}.
\end{remark}

\subsection{Solving for $\{W_1^k\}$}\label{s3.3}
The equation \eqref{3.13} is a quadratic and so uniquely determines $W_1^0$ up to the branch of the square root. Since from \eqref{2.1}, $\overline{W}_1(x)\sim N/x$ as $x\rightarrow\infty$, it follows from \eqref{2.7} that $W_1^0(s)\sim1/s$ as $s\rightarrow\infty$ thus uniquely determining the solution of \eqref{3.13} as
\begin{equation}\label{3.18}
W_1^0(s_1)=\half\left(1-\sqrt{1-\tfrac{4}{s_1}}\right).
\end{equation}
We can now use knowledge of $W_1^0$ to determine $W_1^1$ since \eqref{3.14} with $l=1$ reads
\begin{equation}\label{3.19}
0=2W_1^0(s_1)W_1^1(s_1)+h\frac{\mathrm{d}}{\mathrm{d}s_1}W_1^0(s_1)+\frac{\alpha_1}{\sqrt{\kappa}s_1}W_1^0(s_1)-W_1^1(s_1).
\end{equation}
Thus we have
\begin{equation}\label{3.20}
W_1^1(s_1)=\frac{\alpha_1}{2\sqrt{\kappa}}\left(\frac{1}{t_1}-\frac{1}{s_1}\right)-\frac{h}{t_1^2},
\end{equation}
where we have introduced the notation
\begin{equation}\label{3.21}
t_i:=s_i\sqrt{1-\tfrac{4}{s_i}}.
\end{equation}
Also, to solve \eqref{3.17} for $W_2^0(s_1,s_2)$ only requires knowledge of $W_1^0(s_1)$, allowing us to deduce
the validity of (\ref{2.7a}) with interval of support corresponding to $a=0$, $b=4$. Generally in (\ref{2.7a}) the singularity at
$s_1 = s_2$ is removable and we have
\begin{equation}\label{3.23a}
W_2^0(s_1,s_1) = {(b-a)^2 \over 16 (s_1 - b)^2 (s_1 - a)^2},
\end{equation}
which in the present setting gives
\begin{equation}\label{3.23}
W_2^0(s_1,s_1)=\frac{1}{t_1^4}.
\end{equation}

Setting $l=2$ in \eqref{3.14} we see that $W_1^2$ is determined by $W_2^0(s_1,s_1)$, $W_1^0(s_1)$ and $W_1^1(s_1)$, all of which are known. Thus substituting \eqref{3.23}, \eqref{3.20} and \eqref{3.18} shows
\begin{equation}\label{3.24}
W_1^2(s_1)=\frac{s_1}{t_1^5}+\frac{\alpha_1^2}{\kappa t_1^3}-\frac{h\alpha_1}{2\sqrt{\kappa}}\left(\frac{4-s_1}{t_1^3}+\frac{s_1^2}{t_1^4}\right)+h^2\frac{2s_1^2-3s_1}{t_1^5}.
\end{equation}

More generally, if $l$ is even we first use \eqref{3.15} to compute $W_{l/2+1}^0(s_1,\ldots,s_{l/2})$. Then we successively use the loop equation for $W_n^l$ \eqref{3.16} with $l\geq2$ to compute $W_n^l$ with $l$ increasing by $2$ and $n$ decreasing by $1$ at each step. If $l$ is odd, we use \eqref{3.16} with $l=1$ to compute $W_{(l+1)/2}^1(s_1,\ldots,s_{(l+1)/2})$ and repeat the above procedure to compute $W_n^l$. The following table specifies the order in which the first ten coefficients of $W_1$ are calculated in thirty six steps.
\begin{center}
\begin{tabular}{l l|c c c c c c c c c c c}
&$l$&0&1&2&3&4&5&6&7&8&9&10
\\$n$&$W_n^l$
\\ \hline 1&&1&2&4&6&9&12&16&20&25&30&36
\\2&&3&5&8&11&15&19&24&29&35&&
\\3&&7&10&14&18&23&28&34&&&&
\\4&&13&17&22&27&33&&&&&&
\\5&&21&26&32&&&&&&&&
\\6&&31&
\end{tabular}
\end{center}
Due to space restrictions, we record only $W_1^3(s_1)$,
\begin{align}
W_1^3(s_1)&=-h\left(\frac{s_1^4+6s_1^3-6s_1^2}{t_1^8}\right)-h^3\left(\frac{6s_1^4-12s_1^3+12s_1^2}{t_1^8}\right)+\frac{\alpha_1}{\sqrt{\kappa}}\left(\frac{s_1^3+s_1^2}{t_1^7}\right)\nonumber
\\&\quad-\frac{h^2\alpha_1}{\sqrt{\kappa}}\left(\frac{s_1^3-6s_1^2+8s_1}{t_1^6}-\frac{s_1^4+3s_1^3-3s_1^2}{t_1^7}\right)+\frac{h\alpha_1^2}{2\kappa}\left(\frac{1}{t_1^3}-\frac{s_1^3+6s_1^2-8s_1}{t_1^6}\right)\nonumber
\\&\quad+\frac{\alpha_1^3}{\kappa\sqrt{\kappa}}\left(\frac{s_1}{t_1^5}\right).\label{3.25}
\end{align}

Inspection of the explicit forms \eqref{3.18}, \eqref{3.20}, \eqref{3.24} and \eqref{3.25} of $W_1^0,\ldots,W_1^3$ reveals that each is a polynomial in $h=\sqrt{\kappa}-1/\sqrt{\kappa}$ and $\alpha_1/\sqrt{\kappa}$ of degree $l$. Moreover, for $h=\alpha_1/\sqrt{\kappa}=0$, $W_1^l(s_1)=0$ for $l$ odd. We observe too the large $s_1$ decay
\begin{equation}\label{3.26}
W_1^l(s_1)=O\left(\tfrac{1}{s_1^{l+1}}\right).
\end{equation}

\subsection{Smoothed Density}\label{s3.4}
Recalling \eqref{2.1} and \eqref{3.11} we have
\begin{equation}\label{3.26a}
W_1(s)=N\kappa\int_0^{\infty}\frac{\rho_{(1)}(N\kappa\lambda)}{s-\lambda}\mathrm{d}\lambda.
\end{equation}
If we knew $W_1(s)$ as an analytic function of $s$, $\rho_{(1)}$ would be reconstructed using the Sokhotski-Plemelj formula
\begin{equation}\label{3.27}
\rho_{(1)}(N\kappa s)=\frac{1}{2\pi\mathrm{i}}\lim_{\epsilon\rightarrow0^+}\left(W_1(s-\mathrm{i}\epsilon)-W_1(s+\mathrm{i}\epsilon)\right)
\end{equation}
for the inverse Stieltjes transform. However the loop equation formalism does not give an exact formula for $W_1(s)$, but rather terms in the $1/N$ expansion
\begin{equation}\label{3.28}
W_1(s)=N\sum_{l=0}^{\infty}\frac{W_1^l(s)}{(N\sqrt{\kappa})^l}.
\end{equation}
With the renormalised global density specified by $\tilde{\rho}_{(1)}(s;N)=\kappa\rho_{(1)}(N\kappa s)$, and thus the normalisation $\int_0^{\infty}\tilde{\rho}_{(1)}(s;N)\mathrm{d}s=1$, this suggests
\begin{equation}\label{3.29}
\tilde{\rho}_{(1)}(s;N)=\sum_{l=0}^{\infty}\frac{\tilde{\rho}_1^l(s)}{(N\sqrt{\kappa})^l},
\end{equation}
where the $\tilde{\rho}_1^l(s)$ are independent of $N$ and given by
\begin{equation}\label{3.30}
\tilde{\rho}_1^l(x)=\frac{1}{2\pi\mathrm{i}}\lim_{\epsilon\rightarrow0^+}\left(W_1^l(x-\mathrm{i}\epsilon)-W_1^l(x+\mathrm{i}\epsilon)\right).
\end{equation}

However, unlike \eqref{3.28}, \eqref{3.29} does not correspond to the true large $N$ asymptotic expansion of $\tilde{\rho}_{(1)}(s;N)$. Explicit calculation of the large $N$ asymptotic expansion of $\tilde{\rho}_{(1)}(s;N)$ for $\beta=1$, $2$ and $4$ \cite{FFG06e}, \cite{GFF05} and also for all even values of $\beta$ \cite{DF06}, shows that in addition to a series in $1/N$ there are also oscillatory terms with frequency proportional to $N$. Thus $\tilde{\rho}_{(1)}(s;N)$ as corresponding to \eqref{3.29} is to be interpreted as the smoothed global density, in which such oscillatory terms are absent.

To make use of \eqref{3.30}, we see from the explicit forms \eqref{3.18}, \eqref{3.20}, \eqref{3.24} and \eqref{3.25} of $W_1^0,\ldots,W_1^3$ that knowledge of the inverse Stieltjes transform of $(s-c)^{-m}$ is required for positive integer $m$, as is that of $(1-4/s)^{1/2-n}$ for integer $n$. One can readily verify that the former is equal to the distribution given by the $(m-1)$\textsuperscript{th} derivative of the Dirac delta function
\begin{equation}\label{Va1}
\frac{(-1)^{m-1}}{(m-1)!}\delta^{(m-1)}(x-c),
\end{equation}
while the latter is equal to
\begin{equation*}
\frac{(-1)^{n+1}}{\pi}\left(\frac{4}{x}-1\right)^{\half-n}\chi_{x\in(0,4)}.
\end{equation*}
Explicit functional forms for $\tilde{\rho}_1^0,\ldots,\tilde{\rho}_1^3$ can now be given, although to save space only the first three are noted.

\begin{proposition}\label{P19}
With $\tilde{\rho}_1^l(x)$ specified by \eqref{3.30} and corresponding to the coefficients in the $1/N$ expansion of the smoothed global density \eqref{3.29}, we have
\begin{align*}
\tilde{\rho}_1^0(x)=\frac{1}{2\pi}\sqrt{\tfrac{4}{x}-1}\,\chi_{x\in(0,4)},
\end{align*}
\begin{align*}
\tilde{\rho}_1^1(x)=\frac{h}{4}\left[\delta(x)-\delta(x-4)\right]+\frac{\alpha_1}{2\sqrt{\kappa}}\left[\frac{1}{\pi x}\left(\tfrac{4}{x}-1\right)^{-\tfrac{1}{2}}\,\chi_{x\in(0,4)}-\delta(x)\right],
\end{align*}
\begin{align*}
\tilde{\rho}_1^2(x)&=\frac{1}{\pi x^4}\left(\tfrac{4}{x}-1\right)^{-\tfrac{5}{2}}\,\chi_{x\in(0,4)}-\frac{{\alpha_1}^2}{\pi x^3\kappa}\left(\tfrac{4}{x}-1\right)^{-\tfrac{3}{2}}\,\chi_{x\in(0,4)}
\\&\quad+h^2\left[\frac{5}{\pi x^4}\left(\tfrac{4}{x}-1\right)^{-\tfrac{5}{2}}\,\chi_{x\in(0,4)}-\frac{2}{\pi x^3}\left(\tfrac{4}{x}-1\right)^{-\tfrac{3}{2}}\,\chi_{x\in(0,4)}\right]
\\&\quad+\frac{h\alpha_1}{2\sqrt{\kappa}}\left[\frac{1}{\pi x^2}\left(\tfrac{4}{x}-1\right)^{-\tfrac{1}{2}}\,\chi_{x\in(0,4)}+\delta^{'}(x-4)\right].
\end{align*}
\end{proposition}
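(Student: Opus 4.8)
The plan is to evaluate each $\tilde\rho_1^l$, $l=0,1,2$, directly from its definition \eqref{3.30} as a fixed multiple of the jump of $W_1^l$ across its branch cut on $(0,4)$, applying that operation termwise by linearity once $W_1^l$ has been written as a combination of the two building blocks whose inverse Stieltjes transforms are recorded immediately above the proposition: $(s-c)^{-m}$, whose transform is \eqref{Va1}, and $(1-4/s)^{1/2-n}$, whose transform is the (unnumbered) formula displayed just below it. The first task is therefore the algebraic reduction of the explicit expressions \eqref{3.18}, \eqref{3.20}, \eqref{3.24} for $W_1^0$, $W_1^1$, $W_1^2$ to that form. The key identities, immediate from \eqref{3.21}, are
\begin{equation*}
t_1^{2k} = s_1^k (s_1-4)^k, \qquad t_1^{2k+1} = s_1^{k+1/2}(s_1-4)^{k+1/2},
\end{equation*}
from which a polynomial in $s_1$ over an even power of $t_1$ is a rational function of $s_1$ with poles only at $s_1=0$ and $s_1=4$, while a polynomial in $s_1$ over an odd power of $t_1$, after expanding the numerator in powers of $s_1-4$, becomes a finite sum of terms $s_1^{-a}(1-4/s_1)^{1/2-n}$ with $a$ a nonnegative integer and $n$ a positive integer.

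The rational pieces are partial-fractioned; the polynomial remainder is entire, hence has no jump across $(0,4)$ and is discarded, while the first building-block transform \eqref{Va1} applied to the poles at $s_1=0,4$ supplies the distributional terms at $x=0$ and $x=4$. Explicitly, $t_1^2=s_1(s_1-4)$ turns $-h/t_1^2$ in \eqref{3.20} into $\tfrac{h}{4}[\delta(x)-\delta(x-4)]$, the explicit $-1/s_1$ in \eqref{3.20} (with its prefactor) into $-\tfrac{\alpha_1}{2\sqrt\kappa}\delta(x)$, and $s_1^2/t_1^4=(s_1-4)^{-2}$ in \eqref{3.24} (with its prefactor $-\tfrac{h\alpha_1}{2\sqrt\kappa}$) into $\tfrac{h\alpha_1}{2\sqrt\kappa}\delta'(x-4)$. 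For a square-root piece $s_1^{-a}(1-4/s_1)^{1/2-n}$ the prefactor $s_1^{-a}$ is analytic in a neighbourhood of $(0,4)$, so its jump is $x^{-a}$ times the jump of $(1-4/s_1)^{1/2-n}$; the second building-block transform then produces the algebraic edge terms proportional to $\tfrac{1}{\pi x^a}(4/x-1)^{1/2-n}\chi_{x\in(0,4)}$. Assembling all contributions and simplifying gives the stated $\tilde\rho_1^0$, $\tilde\rho_1^1$, $\tilde\rho_1^2$. As a partial check, $\tilde\rho_1^0$ and $\tilde\rho_1^1$ have total mass $1$ and $0$ (the latter on counting $\int_0^\infty\delta(x)\,dx=1$), consistent with $\int_0^\infty\tilde\rho_{(1)}(s;N)\,ds=1$ and the decay \eqref{3.26}.

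The work is bookkeeping rather than conceptual, and the only genuinely delicate point is sign tracking: one must fix consistently the branch of $\sqrt{1-4/s_1}$, equivalently of $(s_1-4)^{k+1/2}$, on the two sides of the cut, and combine this with the $(-1)^{n+1}$ and $\tfrac{(-1)^{m-1}}{(m-1)!}$ factors in the building-block transforms, so that the coefficients of the delta-functions, of $\delta'(x-4)$, and of the powers of $4/x-1$ come out exactly as written. A secondary subtlety, already visible in the $\delta'(x-4)$ appearing at order $l=2$ and more pronounced at higher orders, is that the endpoint $x=0$, where the Laguerre weight $x^{\alpha_1}$ is singular, must be treated on the same distributional footing as the soft edge $x=4$ rather than as an ordinary interior point.
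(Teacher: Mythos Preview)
Your proposal is correct and follows essentially the same approach as the paper: decompose each $W_1^l$ from \eqref{3.18}, \eqref{3.20}, \eqref{3.24} into the two building blocks $(s-c)^{-m}$ and $(1-4/s)^{1/2-n}$ via $t_1^2=s_1(s_1-4)$, then apply the inverse Stieltjes transforms \eqref{Va1} and the unnumbered formula below it termwise. The paper presents this only as a computational recipe in the paragraph preceding the proposition; you have simply spelled out the bookkeeping (partial fractions for even powers of $t_1$, the $s_1^{-a}$ prefactor handling for odd powers) and verified a few sample terms, which is entirely in line with what the paper intends.
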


\begin{remark}
For $l\geq2$ we see that the $\tilde{\rho}_1^l(x)$ contain non-integrable singularities at $x=0$ and $x=4$. The analogous quantities for the Gaussian $\beta$-ensemble also have non-integrable singularities at the end-points of their support, and in that setting, it was shown that all averages of monomials become well defined upon interpreting the integrals according to the value implied by the gamma function evaluation of the beta integral
\begin{equation}\label{3.31}
\int_0^1y^{a-1}(1-y)^{b-1}\mathrm{d}y=\frac{\Gamma(a)\Gamma(b)}{\Gamma(a+b)}.
\end{equation}
This prescription similarly works in the present setting; in particular one can use \eqref{3.31} in conjunction with the explicit functional forms of Proposition \ref{P19} to verify that for $p\in\mathbb{N}_0$
\begin{equation}\label{3.32}
\int_{-\infty}^{\infty}|x|^p\tilde{\rho}_{(1)}^l(x)\mathrm{d}x=0,\quad p<l,
\end{equation}
which is equivalent to \eqref{3.26}.
\end{remark}

\subsection{Moments}\label{s3.5}
Recalling the definition of $\tilde{\rho}_{(1)}(s;N)$ below \eqref{3.28}, \eqref{3.26a} reads
\begin{equation}\label{3.33}
W_1(s)=N\int_0^{\infty}\frac{\tilde{\rho}_{(1)}(s;N)}{s-\lambda}\mathrm{d}\lambda.
\end{equation}
Using the geometric series to expand $1/(1-\lambda/s)$ for large $|s|$ and integrating term by term shows
\begin{equation}\label{3.34}
\frac{1}{N}W_1(s)=\frac{1}{s}+\sum_{k=1}^{\infty}\frac{\tilde{m}_k}{s^{k+1}},
\end{equation}
where the $\tilde{m}_k$ are the moments of the scaled density,
\begin{equation}\label{3.35}
\tilde{m}_k=\int_0^{\infty}s^k\tilde{\rho}_{(1)}(s;N)\mathrm{d}s.
\end{equation}
A fundamental property of the $\tilde{m}_k= \tilde{m}_k^{(L)} $ in the Laguerre case derivable from Jack polynomial theory \cite{DE05}, \cite{MR15} is that they are polynomials of degree $k$ in $1/N$,
\begin{equation}\label{3.36}
\tilde{m}_k^{(L)} =\sum_{i=0}^ka_i^{(k)}N^{-i}.
\end{equation}
Substituting the expansion \eqref{3.28} in the LHS of \eqref{3.34}, and substituting \eqref{3.36} on the RHS, then equating like powers of $1/N$ shows
\begin{equation}\label{3.37}
W_1^l(s)=\kappa^{l/2}\sum_{k=l}^{\infty}\frac{a_l^{(k)}}{s^{k+1}}.
\end{equation}
Note in particular the consistency with \eqref{3.26}.

Since from \S\ref{s3.3} we know the explicit functional form of $W_1^l(s)$ for $l=0,\ldots,3$, we can expand for large $s$ and make use of \eqref{3.36} and \eqref{3.37} to read off the exact expressions for the small order moments.

\begin{proposition}\label{P21}
We have
\begin{align*}
\tilde{m}_0^{(L)}=1,
\end{align*}
\begin{align*}
\tilde{m}_1^{(L)}=1+\frac{1-\kappa+\alpha_1}{N\kappa},
\end{align*}
\begin{align*}
\tilde{m}_2^{(L)}=2+\frac{4-4\kappa+3\alpha_1}{N\kappa}+\frac{2-4\kappa+2\kappa^2+3\alpha_1-3\kappa\alpha_1+\alpha_1^2}{N^2\kappa^2},
\end{align*}
\begin{align*}
\tilde{m}_3^{(L)}&=5-\frac{2(8\kappa-5\alpha_1-8)}{N\kappa}+\frac{17-33\kappa+17\kappa^2+21\alpha_1-21\kappa\alpha_1+6\alpha_1^2}{N^2\kappa^2}\nonumber
\\&\quad+\frac{6-17\kappa+17\kappa^2-6\kappa^3+11\alpha_1-21\kappa\alpha_1+11\kappa^2\alpha_1+6\alpha_1^2-6\kappa\alpha_1^2+\alpha_1^3}{N^3\kappa^3}.
\end{align*}
\end{proposition}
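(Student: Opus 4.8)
The plan is to extract the moments directly from the large-$s$ expansions of the functions $W_1^0,\dots,W_1^3$ already computed in \S\ref{s3.3}, using the dictionary between the $1/N$ expansion of $W_1$ and that of the moments set up in \S\ref{s3.5}. The starting point is relation \eqref{3.37}, namely $W_1^l(s)=\kappa^{l/2}\sum_{k\geq l}a_l^{(k)}s^{-(k+1)}$, obtained by inserting \eqref{3.28} into \eqref{3.34} and matching with \eqref{3.36}. Inverting this says that $a_l^{(k)}$ equals $\kappa^{-l/2}$ times the coefficient of $s^{-(k+1)}$ in the Laurent expansion of $W_1^l(s)$ at $s=\infty$ (and vanishes for $k<l$, consistently with \eqref{3.26}). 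Combined with \eqref{3.36}, which tells us that $\tilde m_k^{(L)}=\sum_{i=0}^{k}a_i^{(k)}N^{-i}$ is a \emph{terminating} polynomial in $1/N$ of degree $k$ (this is the input we take for granted from Jack polynomial theory, \cite{DE05},\cite{MR15}), it follows that to pin down $\tilde m_k^{(L)}$ exactly for $k\leq 3$ it suffices to read off the coefficients of $s^{-1},\dots,s^{-4}$ from $W_1^0,\dots,W_1^3$.

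The computational step is then the following. Rewrite each of \eqref{3.18}, \eqref{3.20}, \eqref{3.24}, \eqref{3.25} in terms of $s_1$ and $t_1=s_1(1-4/s_1)^{1/2}$ as in \eqref{3.21}, and expand using $t_1^{-m}=s_1^{-m}(1-4/s_1)^{-m/2}=s_1^{-m}\sum_{j\geq0}\binom{m/2+j-1}{j}(4/s_1)^j$. For example $W_1^0(s_1)=\half\bigl(1-(1-4/s_1)^{1/2}\bigr)=s_1^{-1}+s_1^{-2}+2s_1^{-3}+5s_1^{-4}+\cdots$, yielding $a_0^{(0)}=1$, $a_0^{(1)}=1$, $a_0^{(2)}=2$, $a_0^{(3)}=5$. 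One likewise collects the $s_1^{-2},s_1^{-3},s_1^{-4}$ coefficients of $W_1^1$, the $s_1^{-3},s_1^{-4}$ coefficients of $W_1^2$, and the $s_1^{-4}$ coefficient of $W_1^3$. Multiplying by the relevant $\kappa^{-l/2}$, and using $h=(\kappa-1)/\sqrt{\kappa}$ so that $\kappa^{-l/2}h^{a}(\alpha_1/\sqrt{\kappa})^{b}$ is a rational function of $\kappa$ and a polynomial in $\alpha_1$, one obtains the $a_l^{(k)}$ and assembles $\tilde m_k^{(L)}=\sum_{i=0}^{k}a_i^{(k)}N^{-i}$, which gives the stated expressions.

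\emph{Main obstacle.} This is almost purely bookkeeping; the only genuine work is expanding the higher negative powers $t_1^{-5},t_1^{-6},t_1^{-7},t_1^{-8}$ that occur in \eqref{3.24} and \eqref{3.25} to sufficient order, keeping the many $h$-, $h^2$-, $h^3$-, $\alpha_1$-, $\alpha_1^2$- and $\alpha_1^3$-weighted contributions separate, and checking the nontrivial cancellations responsible for $W_1^l(s)=O(s^{-(l+1)})$. Organising the calculation by total degree in $(h,\alpha_1)$ makes these cancellations transparent and limits the scope for arithmetic error. Two independent checks are available at the end: the $N$-independent parts $a_0^{(k)}=1,1,2,5$ must reproduce the expansion of $W_1^0$ displayed above (and $\tilde m_0^{(L)}=1$), and the resulting $\tilde m_1^{(L)}$ and $\tilde m_2^{(L)}$ must coincide with the values obtained by Jack-polynomial methods in \cite{MR15}.
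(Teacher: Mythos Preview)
Your proposal is correct and follows exactly the paper's own approach: the text immediately preceding the proposition says that since the explicit forms of $W_1^l(s)$ for $l=0,\dots,3$ are known from \S\ref{s3.3}, one expands for large $s$ and uses \eqref{3.36} and \eqref{3.37} to read off the moments, which is precisely what you do. The organisational details you add (expanding $t_1^{-m}$ via the binomial series, tracking contributions by degree in $(h,\alpha_1)$, and cross-checking against the Catalan leading terms and the Jack-polynomial results of \cite{MR15}) are sensible but not substantively different from the paper's computation.
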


\begin{remark}
Let $m_k^{(L)}=N^k\tilde{m}_k^{(L)}$, and write $m_k^{(L)}=m_k^{(L)}(\kappa,N,\alpha_1)$. Inspection of the above results reveals the symmetry
\begin{equation}\label{3.38}
m_k^{(L)}(\kappa,N,\alpha_1)=(-\kappa)^{-k}m_k^{(L)}(1/\kappa,-\kappa N,-\alpha_1/\kappa).
\end{equation}
This result can be deduced in general as a corollary of \cite[Th.2.11]{DE05}. Writing $W_1^k(s) = W_1^k(s;\kappa,\alpha_2)$,
it is in keeping with the symmetry
\begin{equation}\label{3.38a}
W_1^k(s;\kappa,\alpha_1) = (-1)^k W_1^k(s;1/\kappa,-\alpha_1/\kappa), 
\end{equation}
which we observe is satisfied by (\ref{3.20}), (\ref{3.24}) and (\ref{3.25})

Also, we make the empirical observation that for $\alpha_1=0$ the denominator of a term with numerator $(N\kappa)^l$ is a reciprocal polynomial in $\kappa$ with its zeros on the unit circle in the complex $\kappa$ plane. See \cite{WF14} for an analogous property in the case of the Gaussian $\beta$-ensemble. As a final remark, we note that the results of proposition \ref{P21} are also given in \cite[Appendix A]{MR15} with the identification $m_k^{(L)}=\kappa^{-k+1}M_k^{(L)}$.
\end{remark}

Inspection of Proposition \ref{P21} reveals the integer sequence $1,1,2,5,\ldots$ as the leading term in the $1/N$ expansion of $\tilde{m}_0^{(L)},\tilde{m}_1^{(L)},\tilde{m}_2^{(L)},\tilde{m}_3^{(L)},\ldots$, or in the notation of \eqref{3.36} as the value of $a_0^{(k)}$ ($k=0,1,2,3,\ldots$). On the other hand, \eqref{3.37} and \eqref{3.18} together tell us
\begin{equation*}
\half\left(1-\sqrt{1-\tfrac{4}{s}}\right)=\sum_{k=0}^{\infty}\frac{a_0^{(k)}}{s^{k+1}}.
\end{equation*}
The LHS is the generating function of the Catalan numbers, demonstrating the well known result (see e.g. \cite{PS11})
\begin{equation}\label{3.40a}
a_0^{(k)}=\frac{1}{k+1}\binom{2k}{k}.
\end{equation}

Proposition \ref{P21} shows that the term proportional to $1/N$ in the expansion of $\tilde{m}_k$, which in the notation of \eqref{3.36} is denoted $a_1^{(k)}$, is a linear function of $1/\kappa$ and $\alpha_1$. According to \eqref{3.37} and \eqref{3.20}
\begin{equation*}
\frac{\alpha_1}{2\kappa}\left(\frac{1}{s\sqrt{1-4/s}}-\frac{1}{s}\right)-\frac{(1-1/\kappa)}{s(s-4)}=\sum_{k=1}^{\infty}\frac{a_1^{(k)}}{s^{k+1}}
\end{equation*}
and thus
\begin{equation}\label{3.40b}
a_1^{(k)}=\frac{\alpha_1}{2\kappa}\left(\binom{2k}{k}-\delta_{k,0}\right)-\frac{(1-1/\kappa)}{4}(4^k-\delta_{k,0}).
\end{equation}
Explicit formulas for $a_2^{(k)}$ and $a_3^{(k)}$ can be deduced from \eqref{3.24} and \eqref{3.25}. Their structures are increasingly more complicated so we will not report on the actual expressions.

An alternative way to deduce \eqref{3.40a} and \eqref{3.40b} is to note from \eqref{3.29}, \eqref{3.35} and \eqref{3.37} that
\begin{equation*}
a_j^{(k)}=\int_0^4x^k\tilde{\rho}_1^j(x)\mathrm{d}x,
\end{equation*}
and to evaluate the integrals using the explicit form of the $\tilde{\rho}_1^j$ from Proposition \ref{P19}, together with \eqref{3.31}.

\subsection{The Case of $\alpha_1\propto N$}\label{s3.6}
Thus far we have investigated the case of $\alpha_1$ fixed independent of $N$.
We now turn our attention to the 
case of $\alpha_1\propto N$ by writing $\alpha_1=\tilde{\alpha}_1 N\kappa$ where $\tilde{\alpha}_1={\rm O}(1)$. 
This setting is relevant to the study of the so called proper delay times for scattering in a quantum cavity
\cite{BFB97}; see also \cite[\S 3.3.1]{Fo10}.
Up to the derivation of the loop equation \eqref{3.12}, the scaling of $\alpha_1$ is inconsequential. However Proposition \ref{P13}, where we equate like powers of $1/N$, is no longer valid,
as is to be replaced by the following hierarchy of equations.

\begin{proposition}\label{prop3.13}
In the case $n=1$, we deduce from \eqref{3.12} that
\begin{equation}\label{3.41}
0=\left(W_1^0(s_1)\right)^2+\left(\frac{\tilde{\alpha}_1}{s_1}-1\right)W_1^0(s_1)+\frac{1}{s_1},
\end{equation}
and that for $l\geq1$
\begin{equation}\label{3.42}
0=W_2^{l-2}(s_1,s_1)+\sum_{k=0}^lW_1^k(s_1)W_1^{l-k}(s_1)+h\frac{\mathrm{d}}{\mathrm{d}s_1}W_1^{l-1}(s_1)+\left(\frac{\tilde{\alpha}_1}{s_1}-1\right)W_1^l(s_1),
\end{equation}
where $h:=\sqrt{\kappa}-1/\sqrt{\kappa}$ and $W_n^{-1}:=0$ for all $n\geq1$.

For $n\geq2$, we deduce from \eqref{3.12} that
\begin{align}
0&=\left(\frac{\tilde{\alpha}_1}{s_1}-1\right)W_n^0(s_1,\ldots,s_n)+\sum_{J\subseteq(s_2,\ldots,s_n)}W_{|J|+1}^0(s_1,J)W_{n-|J|}^0(s_1,(s_2,\ldots,s_n)\setminus J)\nonumber
\\&\quad+\sum_{k=2}^n\frac{\partial}{\partial s_k}\left\{\frac{W_{n-1}^0(s_1,\ldots,\hat{s}_k,\ldots,s_n)-W_{n-1}^0(s_2,\ldots,s_n)}{s_1-s_k}+\frac{1}{s_1}W_{n-1}^0(s_2,\ldots,s_n)\right\}\label{3.43}
\end{align}
and that for $l\geq1$
\begin{align}
0&=h\frac{\partial}{\partial s_1}{W_n^{l-1}}(s_1,\ldots,s_n)+\left(\frac{\tilde{\alpha}_1}{s_1}-1\right)W_n^l(s_1,\ldots,s_n)\nonumber
\\&\quad+\sum_{k=2}^n\frac{\partial}{\partial s_k}\left\{\frac{W_{n-1}^l(s_1,\ldots,\hat{s}_k,\ldots,s_n)-W_{n-1}^l(s_2,\ldots,s_n)}{s_1-s_k}+\frac{1}{s_1}W_{n-1}^l(s_2,\ldots,s_n)\right\}\nonumber
\\&\quad+W_{n+1}^{l-2}(s_1,s_1,s_2,\ldots,s_n)+\sum_{J\subseteq(s_2,\ldots,s_n)}\sum_{k=0}^lW_{|J|+1}^k(s_1,J)W_{n-|J|}^{l-k}(s_1,(s_2,\ldots,s_n)\setminus J).\label{3.44}
\end{align}
\end{proposition}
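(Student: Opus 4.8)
\emph{Proof plan.} The plan is to rerun the derivation that led to Proposition \ref{P13}, starting again from the global-scaled loop equation \eqref{3.12}, which — as remarked above — holds regardless of how $\alpha_1$ scales with $N$, and simply inserting $\alpha_1=\tilde\alpha_1 N\kappa$. The only structural effect of this substitution is that the term $\frac{\alpha_1}{N\kappa s_1}W_n$ inside the bracket $\left(\frac{\alpha_1}{N\kappa s_1}-1\right)W_n$ becomes $\frac{\tilde\alpha_1}{s_1}W_n$, which no longer carries a hidden factor of $1/N$. Hence, when the expansion \eqref{2.7} is substituted and like powers of $1/N$ are equated, this term contributes to the coefficient $W_n^l$ at the \emph{same} order $l$, rather than being demoted to order $l-1$ alongside the derivative term $h\,\partial_{s_1}W_n^{l-1}$ as it was in \eqref{3.14} and \eqref{3.16}.

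Concretely, I would first record the $N$- and $\kappa$-weights carried by each block of terms in \eqref{3.12} once \eqref{2.7} is inserted: $W_n\mapsto N^{2-n}\kappa^{1-n}\sum_l W_n^l(N\sqrt\kappa)^{-l}$, the derivative prefactor $\tfrac1N(1-\tfrac1\kappa)=\tfrac{h}{N\sqrt\kappa}$, the bilinear terms $W_{|J|+1}W_{n-|J|}$ and $W_{n+1}(x_1,x_1,\dots)$ combining the relevant $N$-powers, the $\partial/\partial s_k$ sum (whose $\chi_{n\ne1}/(N\kappa)$ prefactor multiplies $W_{n-1}$) landing at the same order $l$ as $W_n^l$, and the $\chi_{n=1}N/s_1$ term contributing only at $l=0$. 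This bookkeeping is identical to the one underlying Proposition \ref{P13}; the sole changed input is the coefficient $\tilde\alpha_1/s_1$ in place of $\alpha_1/(N\kappa s_1)$. For $n=1$, extracting the coefficient of the leading power of $N$ then yields \eqref{3.41} — the $\tfrac1{s_1}$ from $\chi_{n=1}N/s_1$, the $(W_1^0)^2$ from $\tfrac1N W_1^2$, the $(\tilde\alpha_1/s_1-1)W_1^0$ from the bracket, while $\tfrac1N W_2$ and the derivative term are subleading (they produce $W_2^{-2}=0$, $W_1^{-1}=0$ here) — and the lower powers give \eqref{3.42}. For $n\ge2$ the same extraction at leading order produces \eqref{3.43}, with $W_{n+1}^{l-2}$ and $h\,\partial_{s_1}W_n^{l-1}$ absent at $l=0$ since $W_{n'}^{-1}=W_{n'}^{-2}=0$, and the subleading powers give \eqref{3.44}.

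The computation is entirely mechanical, so the one point I would stress — and the only place care is needed — is the power counting: one must make sure that when $\alpha_1=\tilde\alpha_1 N\kappa$ the term $\frac{\tilde\alpha_1}{s_1}W_n$ is retained at order $l$ and not shifted, and that no other grouping of terms changes its order relative to the $\alpha_1=\mathrm{O}(1)$ analysis. Useful consistency checks are that setting $\tilde\alpha_1\to 0$ reduces \eqref{3.41}--\eqref{3.44} to \eqref{3.13}--\eqref{3.16} at $\alpha_1=0$, and that \eqref{3.41} retains the $s_1\to\infty$ behaviour $W_1^0\sim 1/s_1$ forced by \eqref{2.1} via \eqref{2.7}.
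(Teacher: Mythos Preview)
Your proposal is correct and follows the same approach as the paper, which likewise obtains Proposition~\ref{prop3.13} by substituting $\alpha_1=\tilde\alpha_1 N\kappa$ into \eqref{3.12} and redoing the power-of-$1/N$ matching from Proposition~\ref{P13}, with the sole change being that $\tfrac{\alpha_1}{N\kappa s_1}W_n$ now contributes at order $l$ rather than $l-1$. Your explicit bookkeeping of the $N$- and $\kappa$-weights, together with the $\tilde\alpha_1\to0$ consistency check, is in fact more detailed than what the paper provides.
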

These equations contain the same $W_{n'}^{l'}$ as the equations given in Proposition \ref{P13}, so they too form a triangular system. Moreover, this resulting system is solved by the technique outlined in \S\ref{3.3}. In particular, \eqref{3.41} tells us that
\begin{equation}\label{3.45}
W_1^0(s_1)=\frac{1}{2}\left(1-\frac{\tilde{\alpha}_1}{s_1}-\sqrt{\left(\frac{\tilde{\alpha}_1}{s_1}-1\right)^2-\frac{4}{s_1}}\right),
\end{equation}
while \eqref{3.42} with $l=1$ reveals that
\begin{equation}\label{3.46}
0=2W_1^0(s_1)W_1^1(s_1)+h\frac{\mathrm{d}}{\mathrm{d}s_1}W_1^0(s_1)+\left(\frac{\tilde{\alpha}_1}{s_1}-1\right)W_1^1(s_1).
\end{equation}
Thus we have
\begin{equation}\label{3.47}
W_1^1(s_1)=\frac{h}{2}\left[\frac{\tilde{\alpha}_1}{s_1u_1}+\left(\frac{\tilde{\alpha}_1^2}{s_1}-\tilde{\alpha}_1-2\right)\frac{1}{u_1^2}\right],
\end{equation}
where we define
\begin{equation}\label{3.48}
u_i:=s_i\sqrt{\left(\frac{\tilde{\alpha}_1}{s_i}-1\right)^2-\frac{4}{s_i}}.
\end{equation}

Use of (\ref{3.45}) in \eqref{3.43} with $n=2$ gives (\ref{2.7a}) with endpoints of the interval of support
\begin{equation}\label{ab}
a= (\sqrt{\tilde{\alpha}_1+1} - 1)^2, \quad
b =(\sqrt{\tilde{\alpha}_1+1} + 1 )^2,
\end{equation}
and these values substituted in (\ref{3.23a}) give
\begin{equation}\label{3.50}
W_2^0(s_1,s_1)=\left(\tilde{\alpha}_1+1\right)\frac{1}{u_1^4}.
\end{equation}

Setting $l=2$ in \eqref{3.42} and using the above then reveals that
\begin{align}
W_1^2(s_1)&=\left(\tilde{\alpha}_1+1\right)\frac{s_1}{u_1^5}+h^2\left[\left(\tilde{\alpha}_1+2\right)\frac{1}{u_1^3}+\left(\tilde{\alpha}_1+2-s_1\right)\frac{\tilde{\alpha}_1}{u_1^4}+\left(\tilde{\alpha}_1+1\right)\frac{5s_1}{u_1^5}\right].\label{3.51}
\end{align}
For the sake of comparison with \eqref{3.25}, we also present $W_1^3(s_1)$,
\begin{align}
W_1^3(s_1)&=h\left[\left(\tilde{\alpha}_1+1\right)\frac{5\tilde{\alpha}_1s_1}{u_1^7}+\frac{\tilde{\alpha}_1s_1}{2u_1^5}+\left(\tilde{\alpha}_1^2+14\left(\tilde{\alpha}_1+1\right)-\tilde{\alpha}_1s_1+32s_1\right)\frac{s_1}{2u_1^6}\right]\nonumber
\\&\quad+h\left(\tilde{\alpha}_1(s_1^2+3s_1)+-s_1^3+3s_1^2+2s_1\right)\frac{17s_1}{u_1^8}\nonumber
\\&\quad+h^3\left[\left(\tilde{\alpha}_1-3\right)\frac{2}{u_1^4}+\left(2\tilde{\alpha}_1+3s_1+4\right)\frac{\tilde{\alpha}_1}{u_1^5}+\left(\tilde{\alpha}_1+1\right)\frac{25\tilde{\alpha}_1s_1}{u_1^7}\right]\nonumber
\\&\quad+h^3\left(5\tilde{\alpha}_1^2(s_1+2)-\tilde{\alpha}_1s_1(5s_1-14)+30s_1^2-6s_1\right)\frac{1}{u_1^6}\nonumber
\\&\quad+h^3\left(\tilde{\alpha}_1(s_1+3)-s_1^2+3s_1+2\right)\frac{30s_1^2}{u_1^8}\label{3.52}
\end{align}

Like the $W_1^0,\ldots,W_1^3$ given in \eqref{3.18}, \eqref{3.20}, \eqref{3.24} and \eqref{3.25} of \S\ref{s3.3}, the $\alpha_1\propto N$ versions \eqref{3.45}, \eqref{3.47}, \eqref{3.51} and \eqref{3.52} are polynomials in $h=\sqrt{\kappa}-1/\sqrt{\kappa}$ and $\tilde{\alpha}_1$, with $W_1^l$ being a polynomial of order $l$ in $h$. Moreover, for $l$ odd, $W_1^l$ only contains terms that have odd powers of $h$ as coefficients, so that when $h=0$, $W_1^l=0$ for all odd $l$. This is very similar to what is seen in the Gaussian $\beta$-ensemble \cite{WF14,Ma14}.

%The above results and their counterparts in \S\ref{s3.3} have structural similarities once $t_1$ is replaced with $u_1$ and in some cases $1$ is replaced with simple coefficients involving $\tilde{\alpha}_1$. However, these similarities become less apparent for the higher order coefficients. As $u_1$ is structurally more complex than $t_1$, we can confidently state that the large powers in $1/u_1$ -- respectively $1/t_1$ -- are due to the form of $W_1^0$ more so than the structure of the differential equations themselves. It may be worthwhile to rewrite the resolvent coefficients in the $\alpha_1\propto N$ case in the form of their $\alpha_1={\rm O}(1)$ counterparts plus extra terms, in order to better see the structural similarities. This has already been done for $W_1^0$ and $W_2^0$.
%
%Finally, it can be be seen that an $\alpha_1$ term seen in $W_1^l$ in the $\alpha_1={\rm O}(1)$ case will appear in $W_1^{l'}$ for $l'\neq l$ in the $\alpha_1\propto N$ case. Keeping track of these terms could also reveal more information. These terms are already easy to track when looking at the moments given below.

We again use \eqref{3.36} and \eqref{3.37} to compute small order moments for the $\alpha_1\propto N$ case. Since the moments $\tilde{m}_k^{(L)}$ are polynomials in $1/N$ of degree $k$, the results can be anticipated from
Proposition \ref{P21}, and so this calculation serves more as a check on our working.
\begin{proposition}\label{prop3.14}
The results of Proposition \ref{P21} remain valid upon substitution of $\alpha_1=\tilde{\alpha}_1 N\kappa$.
\end{proposition}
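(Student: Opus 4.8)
The plan is to read off $\tilde m_0^{(L)},\dots,\tilde m_3^{(L)}$ from the solutions $W_1^0,\dots,W_1^3$ of the $\alpha_1\propto N$ hierarchy found in \S\ref{s3.6}, namely \eqref{3.45}, \eqref{3.47}, \eqref{3.51} and \eqref{3.52}, and to verify that the resulting expressions coincide with those of Proposition~\ref{P21} once $\alpha_1$ is replaced by $\tilde\alpha_1 N\kappa$. What makes this work is that the passage from the globally scaled resolvent to the moments, \eqref{3.33}--\eqref{3.35}, uses nothing about the scaling of $\alpha_1$: \eqref{3.34} is merely the large-$|s|$ geometric series expansion of $W_1(s)$, and $\tilde m_k^{(L)}$ is the exact moment of the ensemble however $\alpha_1$ is parametrised. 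Moreover the decay \eqref{3.26}, $W_1^l(s)=O(s^{-l-1})$, persists in the present regime, as one sees directly from \eqref{3.45}, \eqref{3.47}, \eqref{3.51}, \eqref{3.52}: as $s_1\to\infty$ one has $u_1\sim s_1$ and the remaining factors contribute only decaying corrections. Consequently the coefficient of $s_1^{-(k+1)}$ in $\tfrac{1}{N}W_1(s_1)=\sum_{l\geq0}(N\sqrt\kappa)^{-l}W_1^l(s_1)$ is built from $W_1^0,\dots,W_1^k$ alone, so the four functions already recorded suffice to determine $\tilde m_0^{(L)},\dots,\tilde m_3^{(L)}$.

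Concretely, I would expand each of \eqref{3.45}, \eqref{3.47}, \eqref{3.51}, \eqref{3.52} in powers of $1/s_1$, using $u_1=s_1\bigl((\tilde\alpha_1/s_1-1)^2-4/s_1\bigr)^{1/2}$ and the binomial series for the square root, and then obtain $\tilde m_k^{(L)}$ for $k=1,2,3$ as the coefficient of $s_1^{-(k+1)}$ in $\sum_{l=0}^{k}(N\sqrt\kappa)^{-l}W_1^l(s_1)$, together with the normalisation $\tilde m_0^{(L)}=1$. Comparison term by term with the formulae of Proposition~\ref{P21} after the substitution $\alpha_1=\tilde\alpha_1 N\kappa$ then completes the proof, the powers of $N$ and $\kappa$ reorganising correctly because $\tilde\alpha_1=\alpha_1/(N\kappa)$.

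That the verification must succeed can be seen in advance: the analysis of \S\ref{s3.3}--\S\ref{s3.5} exhibits $\tilde m_k^{(L)}$ as an explicit polynomial in $\alpha_1$ and in $1/N$ with coefficients rational in $\kappa$, in which — as is visible in Proposition~\ref{P21}, and follows from $W_1^l$ having degree at most $l$ in $\alpha_1/\sqrt\kappa$ together with \eqref{3.37} — the monomial $\alpha_1^{\,j}N^{-i}$ occurs only for $j\leq i$; see also \cite{DE05}, \cite{MR15}. A polynomial identity of this type is unaffected by the specialisation $\alpha_1=\tilde\alpha_1 N\kappa$, and the bound $j\leq i$ is exactly what keeps the specialised expression a polynomial in $1/N$ of the same degree $k$. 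Hence there is no conceptual content beyond Proposition~\ref{P21}; the only labour — and the only place an error could enter — is the bookkeeping of the previous paragraph, which is why this computation is best viewed as a consistency check on the loop equations in the $\alpha_1\propto N$ regime rather than as an independent derivation.
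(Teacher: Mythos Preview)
Your approach is essentially the same as the paper's: the text preceding the proposition states that one ``again use[s] \eqref{3.36} and \eqref{3.37} to compute small order moments for the $\alpha_1\propto N$ case,'' notes that ``since the moments $\tilde m_k^{(L)}$ are polynomials in $1/N$ of degree $k$, the results can be anticipated from Proposition~\ref{P21},'' and remarks that ``this calculation serves more as a check on our working.'' Your proposal spells out both the computational check and the a~priori polynomial argument (via the degree bound $j\le i$ on the $\alpha_1^{\,j}N^{-i}$ monomials) in slightly more detail, but the method and its rationale coincide with the paper's.
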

%We have
%\begin{align*}
%\tilde{m}_0=1,
%\end{align*}
%\begin{align*}
%\tilde{m}_1=1+\tilde{\alpha}_1+\frac{1-\kappa}{N\kappa},
%\end{align*}
%\begin{align*}
%\tilde{m}_2=2+3\tilde{\alpha}_1+\tilde{\alpha}_1^2+\frac{4-4\kappa+3\tilde{\alpha}_1-3\tilde{\alpha}_1\kappa}{N\kappa}+\frac{2(\kappa-1)^2}{N^2\kappa^2},
%\end{align*}
%\begin{align*}
%\tilde{m}_3&=5+10\tilde{\alpha}_1+6\tilde{\alpha}_1^2+\tilde{\alpha}_1^3+\frac{(1-\kappa)(16+21\tilde{\alpha}_1+6\tilde{\alpha}_1^2)}{N\kappa}
%\\&\quad+\frac{\tilde{\alpha}_1\kappa+\kappa+(\kappa-1)^2(17+11\tilde{\alpha}_1)}{N^2\kappa^2}+\frac{6-17\kappa+17\kappa^2-6\kappa^3}{N^3\kappa^3}.
%\end{align*}
%\end{proposition}
%These moments are equal to those given in Proposition \ref{P21}, upon substitution of $\alpha_1=\tilde{\alpha}_1 N\kappa$.

In keeping with the discussion of the paragraph including \eqref{3.40a}, at each order of $1/N^p$ the coefficients are given implicitly by $W_1^p$ as the latter is their generating function. Thus from \eqref{3.45} and \eqref{3.37}, and with $s_1=\tilde{\alpha}_1t$, for terms independent of $N$
\begin{equation}\label{N1}
\frac{1}{2t}\left(t-1-t\sqrt{1-2\left(1+\tfrac{2}{\tilde{\alpha}_1}\right)\tfrac{1}{t}+\tfrac{1}{t^2}}\right)=\sum_{k=0}^{\infty}\frac{a_0^{(k)}}{(\tilde{\alpha}_1t)^{k+1}}.
\end{equation}
The LHS is the generating function for the Narayana polynomials \cite{St99}, reclaiming the well known result
(see e.g.~\cite[\S 3.4.2]{Fo10})
\begin{equation*}
a_0^{(k)}=\frac{1}{k}\sum_{j=1}^k\binom{k}{j}\binom{k}{j-1}(\tilde{\alpha}_1 + 1)^{j-1}.
\end{equation*}
Since here the support of the eigenvalue density is away from the origin, the negative moments are also well defined.
Consideration of the analogue of (\ref{N1}) shows that with $k \in \mathbb Z_{\ge 0}$
\begin{equation}\label{N2}
a_0^{(-k)} =  \tilde{\alpha}_1^{1 - 2k} a_0^{(k-1)}.
\end{equation}
This result is implied by \cite[Eq.~(39b)]{MS12}, and is shown therein to be relevant to the study of delay times in quantum dots.

According to \eqref{3.47} and \eqref{3.37}, for the terms of order $1/N$ in $\tilde{m}_k$ we have
\begin{multline*}
\frac{(1-1/\kappa)}{2}\Bigg\{\frac{1}{\tilde{\alpha}_1t^2\sqrt{1-2(1+2/\tilde{\alpha}_1)/t+1/t^2}}
\\+\left(\frac{\tilde{\alpha}_1}{t}-\tilde{\alpha}_1-2\right)\frac{1}{\tilde{\alpha}_1^2t^2(1-2(1+2/\tilde{\alpha}_1)/t+1/t^2)}\Bigg\}=\sum_{k=0}^{\infty}\frac{a_1^{(k)}}{(\tilde{\alpha}_1t)^{k+1}}.
\end{multline*}
Now, the Gegenbauer polynomials $C_n^{(\mu)}(x)$,
\begin{equation*}
C_n^{(\mu)}(x)=\sum_{k=0}^{\lfloor n/2\rfloor}(-1)^k\frac{\Gamma(n-k+\mu)}{\Gamma(\mu)k!(n-2k)!}(2x)^{n-2k}
\end{equation*}
obey the generating function formula
\begin{equation*}
\sum_{n=0}^{\infty}C_n^{(\mu)}(x)t^n=(1-2xt+t^2)^{-\mu}.
\end{equation*}
Using this in the previous formula we deduce
\begin{equation*}
a_1^{(k+1)}=\frac{\tilde{\alpha}_1^{k+1}(1-1/\kappa)}{2}\left\{C_k^{(1/2)}\left(1+\tfrac{2}{\tilde{\alpha}_1}\right)+C_{k-1}^{(1)}\left(1+\tfrac{2}{\tilde{\alpha}_1}\right)-\frac{\tilde{\alpha}_1+2}{\tilde{\alpha}_1}C_k^{(1)}\left(1+\tfrac{2}{\tilde{\alpha}_1}\right)\right\}
\end{equation*}
with $C_n^{(\mu)}(x)=0$ for $n\in\Z_{<0}$. A similar strategy gives for the negative moments
\begin{equation}\label{N3}
a_1^{(-k)} = {\tilde{\alpha}_1^{-k} (1 - 1/ \kappa) \over 2}
\bigg \{  -C_{k}^{(1/2)}\left(1+\tfrac{2}{\tilde{\alpha}_1}\right) +
C_{k}^{(1)}\left(1+\tfrac{2}{\tilde{\alpha}_1}\right) - \frac{\tilde{\alpha}_1+2}{\tilde{\alpha}_1}
C_{k-1}^{(1)}\left(1+\tfrac{2}{\tilde{\alpha}_1}\right) \bigg \}.
\end{equation}
This has been reported, albeit in a different form, in \cite[Eq.~(91) with $\tilde{\alpha}_1 = w - 1$]{MS12},
as has the corresponding generating function  \cite[Eq.~(96)]{MS12}. The latter can be read off from
the small $s_1$ expansion of  \eqref{3.47} with $u_1 \mapsto - u_1$ (this is necessary due to the change of
branch).\\

Knowledge of $W_1^0,\ldots,W_1^3$ from \eqref{3.45}, \eqref{3.47}, \eqref{3.51} and \eqref{3.52} allow us to compute the coefficients $\tilde{\rho}_1^0,\ldots,\tilde{\rho}_1^3$ of the large $N$ asymptotic expansion of the smoothed global density given by \eqref{3.29}. This is again accomplished through the Sokhotski-Plemelj formula \eqref{3.30}. Due to the structure of $W_1^0,\ldots,W_1^3$ in the $\alpha_1\propto N$ case, in addition to the inverse Stieltjes transform of $(s-c)^{-m}$ for positive integer $m$, we also need the inverse Stieltjes transform of $\left((\tilde{\alpha}_1-s)^2-4s\right)^{\half-n}$ for integer $n$. We recall that the former is given by (\ref{Va1})
%\begin{equation*}
%\frac{(-1)^{m-1}}{(m-1)!}\delta^{(m-1)}(x-c),
%\end{equation*}
and compute that the latter is
\begin{equation*}
\frac{(-1)^{n+1}}{\pi}\left(4x-(\tilde{\alpha}_1-x)^2\right)^{\half-n}\chi_{x\in(a,b)}
\end{equation*}
where $a,b$ are the endpoints of the support of the smoothed global density (\ref{ab}).
According to \eqref{3.29}, we can now give explicit functional forms for $\tilde{\rho}_1^0,\ldots,\tilde{\rho}_1^3$. To save space we only note the first three.

\begin{proposition}\label{prop3.15}
With $\tilde{\rho}_1^l(x)$ specified by \eqref{3.30} and corresponding to the coefficients in the $1/N$ expansion of the smoothed global density \eqref{3.29}, we have in the $\alpha_1\propto N$ case
\begin{align*}
\tilde{\rho}_1^0(x)= \frac{1}{2\pi x}\sqrt{(x - a)(b - x)}\chi_{x\in(a,b)},
\end{align*}

\begin{align*}
\tilde{\rho}_1^1(x)&=\frac{h}{2}\left[\frac{\tilde{\alpha}_1}{\pi x}\left((x-a)(b-x)\right)^{-\half}\chi_{x\in(a,b)}-\frac{\tilde{\alpha}_1+2}{4\sqrt{\tilde{\alpha}_1+1}}\left(\delta(x-b)-\delta(x-a)\right)\right]
\\&\quad+\frac{h}{2}\left[\frac{\tilde{\alpha}_1^2}{4\sqrt{\tilde{\alpha}_1+1}}\left(\frac{1}{b}\delta(x-b)-\frac{1}{a}\delta(x-a)\right)\right],
\end{align*}

\begin{align*}
\tilde{\rho}_1^2(x)&=\frac{(\tilde{\alpha}_1+1)x}{\pi}\left((a-x)(x-b)\right)^{-\tfrac{5}{2}}\chi_{x\in(a,b)}
\\&\,+\frac{h^2\chi_{x\in(a,b)}}{\pi}\left[5(\tilde{\alpha}_1+1)x\left((a-x)(x-b))^2\right)^{-\tfrac{5}{2}}-(\tilde{\alpha}_1+2)\left((a-x)(x-b)\right)^{-\tfrac{3}{2}}\right]
\\&\,-\frac{h^2\tilde{\alpha}_1(\tilde{\alpha}_1+2-x)}{16(\tilde{\alpha}_1+1)}\left[\delta'(x-b)+\delta'(x-a)+\frac{1}{2\sqrt{\tilde{\alpha}_1+1}}\left(\delta(x-b)-\delta(x-a)\right)\right].
\end{align*}
\end{proposition}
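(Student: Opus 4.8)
The plan is to apply the Sokhotski--Plemelj inversion formula \eqref{3.30} term-by-term to the explicit expressions \eqref{3.45}, \eqref{3.47} and \eqref{3.51} for $W_1^0,W_1^1,W_1^2$ in the $\alpha_1\propto N$ setting. First I would record the two elementary inverse Stieltjes transforms that the paper has already isolated: that of $(s-c)^{-m}$, which is the distributional derivative $\frac{(-1)^{m-1}}{(m-1)!}\delta^{(m-1)}(x-c)$ from \eqref{Va1}, and that of $\big((\tilde{\alpha}_1-s)^2-4s\big)^{1/2-n}$, which is $\frac{(-1)^{n+1}}{\pi}\big(4x-(\tilde{\alpha}_1-x)^2\big)^{1/2-n}\chi_{x\in(a,b)}$ with $a,b$ given by \eqref{ab}; the branch point structure of $u_1$ in \eqref{3.48} guarantees the discontinuity is supported exactly on $(a,b)$. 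With these two building blocks in hand, each $W_1^l$ must be decomposed into a sum of terms of the form (constant)$\times(s-c)^{-m}$ (giving delta-function contributions at the endpoints $c\in\{a,b\}$ and possibly at intermediate poles that must cancel) plus terms of the form $P(s)\big((\tilde{\alpha}_1-s)^2-4s\big)^{1/2-n}$ for polynomial $P$ (giving the absolutely-continuous bulk contributions).

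The main work is the partial-fraction bookkeeping. For $\tilde{\rho}_1^0$ one simply takes the branch cut discontinuity of $W_1^0(s_1)$ in \eqref{3.45}; since $u_1^2 = (\tilde{\alpha}_1-s_1)^2 - 4s_1 = (s_1-a)(s_1-b)$, the discontinuity of $-\tfrac12\sqrt{(\tilde{\alpha}_1/s_1-1)^2-4/s_1}$ across $(a,b)$ produces $\frac{1}{2\pi x}\sqrt{(x-a)(b-x)}$, matching the claimed semicircle-type law. For $\tilde{\rho}_1^1$ one writes \eqref{3.47} with the factors $u_1^{-1}$ and $u_1^{-2}$ expanded via $u_1^2=(s_1-a)(s_1-b)$: the $u_1^{-2}$ piece, being rational, contributes only $\delta$ and $\delta'$ at $x=a,b$ after a partial-fraction expansion of $s_1^{-1}u_1^{-2}$ and $u_1^{-2}$ into $(s_1-a)^{-1},(s_1-b)^{-1},s_1^{-1}$ (the $s_1^{-1}$ residue must be checked to vanish, since there is no $\delta(x)$ in the stated answer — this is a genuine consistency check), while the $u_1^{-1}$ piece contributes the $\big((x-a)(b-x)\big)^{-1/2}$ bulk term with a $1/x$ prefactor coming from the $\tilde{\alpha}_1/(s_1u_1)$ term. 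For $\tilde{\rho}_1^2$ one proceeds identically with \eqref{3.51}: the $u_1^{-5}$ and $u_1^{-3}$ half-integer-power terms give the bulk $((a-x)(x-b))^{-5/2}$ and $((a-x)(x-b))^{-3/2}$ contributions, while the $u_1^{-4}$ rational term, after partial fractions in $(s_1-a),(s_1-b)$, gives the $\delta'$ and $\delta$ terms at the endpoints; one uses $\tilde{\alpha}_1+1 = \tfrac{1}{4}(b-a)^2$ and $a+b = 2(\tilde{\alpha}_1+2)$, $ab=\tilde{\alpha}_1^2$ to put the residues into the compact form displayed.

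I expect the main obstacle to be precisely the partial-fraction residue computation for the purely rational (even power of $u_1$) pieces: one must split e.g. $\frac{(\tilde{\alpha}_1+2-s_1)\tilde{\alpha}_1}{u_1^4} = \frac{(\tilde{\alpha}_1+2-s_1)\tilde{\alpha}_1}{(s_1-a)^2(s_1-b)^2}$ into its principal parts at $s_1=a$ and $s_1=b$, track the $\delta'(x-a),\delta'(x-b),\delta(x-a),\delta(x-b)$ coefficients, and verify that spurious poles (in particular at $s_1=0$, where \eqref{3.30} would produce a $\delta(x)$ or $\delta'(x)$ not present in the claimed formula) cancel after summing all the terms with a common $h^2$ factor; the half-integer-power terms are comparatively routine once the single transform above is established. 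The remaining bodywork — confirming the overall normalisation $\int\tilde{\rho}_1^0=1$ and the vanishing moment relations \eqref{3.32} via \eqref{3.31}, and checking the claimed $\tilde{\rho}_1^3$ in the same way — follows the identical template and needs only patience rather than new ideas.
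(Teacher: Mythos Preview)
Your proposal is correct and follows the paper's own approach essentially verbatim: the paper states that one applies the Sokhotski--Plemelj formula \eqref{3.30} to $W_1^0,W_1^1,W_1^2$ using precisely the two building-block inverse Stieltjes transforms you list (namely \eqref{Va1} for $(s-c)^{-m}$ and the displayed formula for $\big((\tilde{\alpha}_1-s)^2-4s\big)^{1/2-n}$), and then reads off the result term-by-term. Your partial-fraction bookkeeping and the consistency check on possible $\delta(x)$ contributions at $s_1=0$ are more explicit than what the paper records, but the underlying method is identical.
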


We remark that for $\tilde{\alpha}_1 = 0$, the above results reduce to the formulas of Proposition \ref{P19} in the
case $\alpha_1 = 0$. Also, at the endpoints of the support, $\tilde{\rho}_1^2$ contains non-integrable
singularities, as with the corresponding quantity in Proposition \ref{P19}.
In keeping with the discussion of \cite[End of \S 3.2]{WF14}, the integration by parts regularisation
$$
\int_a^b {x^p \over  ((a-x)(b-x))^{q/2}} \, dx =
{p \over (a + b) (1 - q/2)} 
\int_a^b {x^{p-1}  \over ((a-x)(b-x))^{q/2-1}} \, dx, \quad p,q \in \mathbb Z^+
$$
applied iteratively allows this and higher order terms in the expansion \eqref{3.29} to be integrated against monomials.

\subsection{The General Case of $\alpha_1={\rm O}(N)$}\label{s3.6}
In the interest of making contact with the problem of quantum transport as relevant to the Jacobi ensemble
considered in the next section, we need to consider $\alpha_1$ as a general linear function of $N$. We write $\alpha_1=\tilde{\alpha}_1N\kappa+\delta_1$ where $\tilde{\alpha}_1,\delta_1={\rm O}(1)$, and substitute this into \eqref{3.12} along with the RHS of \eqref{2.7} for $W_n$. Equating like powers of ${\rm O}(N)$ terms reveals that $W_1^0$ is again given be \eqref{3.45}. Moreover, equating like powers of ${\rm O}(1)$ with $n=1$ gives
\begin{equation}\label{3.60}
W_1^1(s_1)=\frac{h}{2}\left[\frac{\tilde{\alpha}_1}{s_1u_1}+\left(\frac{\tilde{\alpha}_1^2}{s_1}-\tilde{\alpha}_1-2\right)\frac{1}{u_1^2}\right]+\frac{\delta_1}{2\sqrt{\kappa}}\left[\frac{1}{u_1}-\frac{1}{s_1}-\frac{\tilde{\alpha}_1}{s_1u_1}\right].
\end{equation}
We remark that the first term is given by \eqref{3.47} and that for $\tilde{\alpha}_1 = 0$ and $\delta_1 \mapsto \alpha_1$
we recover (\ref{3.20}).

\pagebreak

\setcounter{equation}{0}
\section{Loop Equations for the Jacobi $\beta$-ensemble}\label{s4}
\subsection{Aomoto's Method}\label{s4.1}
To implement Aomoto's method in the case of the Jacobi $\beta$-ensemble, in addition to considering the averages \eqref{3.1} and \eqref{3.2} for the ensemble $(J):=ME_{\beta,N}[\lambda^{\alpha_1}(1-\lambda)^{\alpha_2}]$, we must also consider the third average
\begin{equation}\label{4.1}
\left\langle\sum_{j_1=1}^N\frac{\partial}{\partial\lambda_{j_1}}\lambda_{j_1}\sum_{j_2,\ldots,j_n=1}^N\frac{1}{(x_2-\lambda_{j_2})\cdots(x_n-\lambda_{j_n})}\right\rangle^{(J),+},
\end{equation}
where the superscript $+$ retains the meaning introduced with \eqref{3.1}.

\begin{proposition}\label{prop4.1}
For $\alpha_1,\alpha_2>0$ we have
\begin{align}
0&=-\chi_{n\neq1}\sum_{k=2}^n\frac{\partial}{\partial x_k}\overline{U}_{n-1}(x_2,\ldots,x_n)\nonumber
\\&\quad+\alpha_1\left\langle\sum_{j_1=1}^N\frac{1}{\lambda_{j_1}}\sum_{j_2,\ldots,j_n=1}^N\frac{1}{(x_2-\lambda_{j_2})\cdots(x_n-\lambda_{j_n})}\right\rangle^{(J)}\nonumber
\\&\quad-\alpha_2\left\langle\sum_{j_1=1}^N\frac{1}{1-\lambda_{j_1}}\sum_{j_2,\ldots,j_n=1}^N\frac{1}{(x_2-\lambda_{j_2})\cdots(x_n-\lambda_{j_n})}\right\rangle^{(J)}.\label{4.2}
\end{align}
\end{proposition}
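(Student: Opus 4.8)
The plan is to mimic the proof of Proposition \ref{prop3.1}, the only structural change being that the Laguerre weight $\lambda^{\alpha_1}e^{-\lambda}$ is replaced by the Jacobi weight $\lambda^{\alpha_1}(1-\lambda)^{\alpha_2}$, so that the logarithmic derivative of the weight carries the term $-\alpha_2/(1-\lambda_{j_1})$ in place of the constant $-1$. Concretely, I would start from the Jacobi analogue of the average \eqref{3.1}, namely
\[
\left\langle\sum_{j_1=1}^N\frac{\partial}{\partial\lambda_{j_1}}\sum_{j_2,\ldots,j_n=1}^N\frac{1}{(x_2-\lambda_{j_2})\cdots(x_n-\lambda_{j_n})}\right\rangle^{(J),+},
\]
and argue that it vanishes by the fundamental theorem of calculus in $\lambda_{j_1}$: with $\alpha_1>0$ the integrand vanishes at $\lambda_{j_1}=0$ because of the factor $\lambda_{j_1}^{\alpha_1}$, and with $\alpha_2>0$ it vanishes at $\lambda_{j_1}=1$ because of the factor $(1-\lambda_{j_1})^{\alpha_2}$ (the points $x_2,\ldots,x_n$ lie off the support $(0,1)$, so no poles are crossed along the integration path). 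This is the step that forces the hypotheses $\alpha_1,\alpha_2>0$.

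Next I would carry out the differentiation explicitly on the integral representation, using the product rule together with $\partial_{\lambda_{j_1}}\log\big(\lambda_{j_1}^{\alpha_1}(1-\lambda_{j_1})^{\alpha_2}\prod_{j<k}|\lambda_k-\lambda_j|^{\beta}\big)=\alpha_1/\lambda_{j_1}-\alpha_2/(1-\lambda_{j_1})+\beta\sum_{p\neq j_1}1/(\lambda_{j_1}-\lambda_p)$. The contribution in which the derivative hits the summand $\sum_{j_2,\ldots,j_n}\prod_{l}(x_l-\lambda_{j_l})^{-1}$ is handled exactly as in Proposition \ref{prop3.1}: only the diagonal terms $j_k=j_1$ survive the differentiation (this is the ``special care'' bookkeeping flagged before Proposition \ref{prop3.1}), producing factors $(x_k-\lambda_{j_1})^{-2}=-\partial_{x_k}(x_k-\lambda_{j_1})^{-1}$, and after relabelling the summation index one obtains $-\chi_{n\neq1}\sum_{k=2}^{n}\partial_{x_k}\overline{U}_{n-1}(x_2,\ldots,x_n)$, the first term of \eqref{4.2}. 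The $\alpha_1/\lambda_{j_1}$ and $-\alpha_2/(1-\lambda_{j_1})$ pieces of the logarithmic derivative give directly the second and third averages in \eqref{4.2} upon recalling \eqref{2.3}.

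Finally I would show that the Vandermonde contribution $\beta\big\langle\sum_{j_1}\big(\sum_{p\neq j_1}1/(\lambda_{j_1}-\lambda_p)\big)\sum_{j_2,\ldots,j_n}\prod_{l}(x_l-\lambda_{j_l})^{-1}\big\rangle^{(J)}$ vanishes: the factor $\sum_{j_2,\ldots,j_n}\prod_{l}(x_l-\lambda_{j_l})^{-1}=\prod_{l=2}^{n}\big(\sum_j(x_l-\lambda_j)^{-1}\big)$ is a symmetric function of $\lambda_1,\ldots,\lambda_N$, so the full summand is antisymmetric under $\lambda_{j_1}\leftrightarrow\lambda_p$ and hence has zero average; note that this is precisely where the absence of an extra factor $\lambda_{j_1}$ (in contrast to the average \eqref{4.1}) is essential. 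Assembling the surviving terms gives \eqref{4.2}. I do not expect a genuine obstacle here — the argument is a direct transcription of the Laguerre case — the only point demanding care being the correct treatment of the diagonal terms $j_k=j_1$ in the differentiation of the correlator, which is already settled by the computation underlying Proposition \ref{prop3.1}.
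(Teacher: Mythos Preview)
Your proposal is correct and matches the paper's own proof, which simply states that the derivation is analogous to that of Proposition~\ref{prop3.1} starting from the Jacobi version of \eqref{3.1}. You have spelled out precisely this analogy, including the boundary vanishing at both $\lambda_{j_1}=0$ and $\lambda_{j_1}=1$ under the hypotheses $\alpha_1,\alpha_2>0$, the replacement of $-1$ by $-\alpha_2/(1-\lambda_{j_1})$ in the logarithmic derivative of the weight, and the antisymmetry argument for the Vandermonde term.
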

\begin{proof}
The derivation, starting with the Jacobi version of \eqref{3.1}, is analogous to that for Proposition \ref{prop3.1}.
\end{proof}

\begin{proposition}\label{P23}
For $\alpha_1,\alpha_2>0$, the identity \eqref{3.4} with all averages now with respect to the Jacobi $\beta$-ensemble again holds true, except that the term $-\overline{U}_n(x_1,\ldots,x_n)$ is to be replaced by
\begin{equation}\label{4.2a}
-\alpha_2\left\langle\sum_{j_1,\ldots,j_n=1}^N\frac{1}{(1-\lambda_{j_1})(x_1-\lambda_{j_1})\cdots(x_n-\lambda_{j_n})}\right\rangle^{(J)}.
\end{equation}
\end{proposition}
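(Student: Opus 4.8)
The plan is to mirror the derivation of Proposition~\ref{prop3.2} (the Laguerre companion identity \eqref{3.4}), tracking only the terms that differ because the weight has changed from $\lambda^{\alpha_1}e^{-\lambda}$ to $\lambda^{\alpha_1}(1-\lambda)^{\alpha_2}$. Concretely, I would start from the Jacobi analogue of the average \eqref{3.2}, namely
\[
\left\langle\sum_{j_1=1}^N\frac{\partial}{\partial\lambda_{j_1}}\frac{1}{x_1-\lambda_{j_1}}\sum_{j_2,\ldots,j_n=1}^N\frac{1}{(x_2-\lambda_{j_2})\cdots(x_n-\lambda_{j_n})}\right\rangle^{(J),+},
\]
and argue that it vanishes by the fundamental theorem of calculus: with $\alpha_1,\alpha_2>0$ the integrand vanishes at the endpoints $\lambda_{j_1}=0$ and $\lambda_{j_1}=1$ of the Jacobi support, so the boundary contributions are zero.

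Next I would expand the derivative $\partial/\partial\lambda_{j_1}$ acting on everything to its right, using the explicit PDF $\prod_l\lambda_l^{\alpha_1}(1-\lambda_l)^{\alpha_2}\prod_{j<k}|\lambda_k-\lambda_j|^\beta$. The Vandermonde factor contributes the term $\beta\langle\sum\frac{1}{(x_1-\lambda_{j_1})\cdots}\sum_{p\neq j_1}\frac{1}{\lambda_{j_1}-\lambda_p}\rangle^{(J)}$, exactly as in the Laguerre case; the factor $\lambda_{j_1}^{\alpha_1}$ contributes $\alpha_1\langle\sum\frac{1}{\lambda_{j_1}(x_1-\lambda_{j_1})\cdots}\rangle^{(J)}$, again as before; the derivative hitting $\frac{1}{x_1-\lambda_{j_1}}$ gives $-\frac{\partial}{\partial x_1}\overline{U}_n$ together with the $k=2,\dots,n$ terms involving $\partial/\partial x_k$ of the difference quotient $\{(\overline{U}_{n-1}(x_1,\ldots,\hat x_k,\ldots)-\overline{U}_{n-1}(x_2,\ldots))/(x_k-x_1)\}$ — this is unchanged because it is purely a consequence of the $\frac{1}{x_1-\lambda_{j_1}}$ structure and the bookkeeping in the cases $j_p=j_1$, identical to the proof of Proposition~\ref{prop3.1}. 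The only difference is the factor $(1-\lambda_{j_1})^{\alpha_2}$: differentiating it produces $-\alpha_2(1-\lambda_{j_1})^{-1}$ times everything else, i.e. exactly the term \eqref{4.2a}, which now plays the role previously played by $-\overline{U}_n(x_1,\ldots,x_n)$ (the latter having come from $\frac{d}{d\lambda}e^{-\lambda}=-e^{-\lambda}$).

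I would then also need to kill the Vandermonde-derived average, which follows verbatim from the antisymmetry argument already used twice in the excerpt (under $\lambda_{j_1}\leftrightarrow\lambda_p$ the summand $\frac{1}{(x_1-\lambda_{j_1})(\lambda_{j_1}-\lambda_p)}$ symmetrizes), or equivalently one simply quotes the Lemma around \eqref{3.6}, whose proof uses no property of the weight. Collecting terms and comparing with the Laguerre identity \eqref{3.4}, every term matches except that $-\overline{U}_n(x_1,\ldots,x_n)\mapsto$ \eqref{4.2a}, which is the claim. I do not expect a genuine obstacle here; the ``hard part'' is purely organizational — being careful with the $j_p=j_1$ coincidences in the $\partial/\partial x_k$ terms and making sure the $\chi_{n\neq1}$ conventions and the definition $\overline{U}_0=1$ are applied consistently, exactly as flagged by the superscript $+$ in the text. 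Since Proposition~\ref{prop4.1} already records the $\lambda\to\lambda^{\alpha_1}(1-\lambda)^{\alpha_2}$ analogue of \eqref{3.3}, I would write the proof tersely: ``The derivation is analogous to that of Proposition~\ref{prop3.2}; the only new contribution comes from differentiating $(1-\lambda_{j_1})^{\alpha_2}$, which replaces $-\overline{U}_n(x_1,\ldots,x_n)$ by \eqref{4.2a}.''
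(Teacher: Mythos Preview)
Your approach is correct and matches the paper's own proof, which simply says the derivation follows from the Jacobi version of \eqref{3.2}, analogous to \eqref{3.4}. One minor slip: you do not need to ``kill'' or rewrite the Vandermonde-derived average at this stage --- in \eqref{3.4} that term is retained as written, and the symmetrization of \eqref{3.6} only enters later when passing to \eqref{4.4}; your terse final summary is exactly right.
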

\begin{proof}
This follows from the Jacobi version of \eqref{3.2}, analogous to \eqref{3.4}.
\end{proof}

Applying integration by parts to \eqref{4.1} gives a third identity, which determines \eqref{4.2} in terms of the $\overline{U}$'s.
\begin{proposition}
For $\alpha_1,\alpha_2>0$ we have
\begin{align}
0&=-\chi_{n\neq1}\left((n-1)\overline{U}_{n-1}(x_2,\ldots,x_n)+\sum_{k=2}^nx_k\frac{\partial}{\partial x_k}\overline{U}_{n-1}(x_2,\ldots,x_n)\right)\nonumber
\\&\quad+\left((\alpha_1+\alpha_2+1)N+\frac{\beta N(N-1)}{2}\right)\overline{U}_{n-1}(x_2,\ldots,x_n)\nonumber
\\&\quad-\alpha_2\left\langle\sum_{j_1,\ldots,j_n=1}^N\frac{1}{1-\lambda_{j_1}}\frac{1}{(x_2-\lambda_{j_2})\cdots(x_n-\lambda_{j_n})}\right\rangle^{(J)},\label{4.3}
\end{align}
where $\overline{U}_0:=1$.
\end{proposition}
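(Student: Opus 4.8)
The plan is to follow the pattern of the proofs of Propositions~\ref{prop3.1} and~\ref{prop4.1}, now taking as starting point the average \eqref{4.1}. First I would argue that \eqref{4.1} vanishes: writing it as an integral over $(0,1)^N$ against the unnormalised Jacobi PDF $P(\lambda)=\prod_{l=1}^N\lambda_l^{\alpha_1}(1-\lambda_l)^{\alpha_2}\prod_{j<k}|\lambda_k-\lambda_j|^\beta$ and setting $F(\lambda)=\sum_{j_2,\ldots,j_n=1}^N\prod_{l=2}^n(x_l-\lambda_{j_l})^{-1}$, the $\lambda_{j_1}$-integrand is $\frac{\partial}{\partial\lambda_{j_1}}\big[\lambda_{j_1}F(\lambda)P(\lambda)\big]$; since $\alpha_1>0$ the factor $\lambda_{j_1}^{\alpha_1+1}$ kills the boundary contribution at $\lambda_{j_1}=0$, and since $\alpha_2>0$ the factor $(1-\lambda_{j_1})^{\alpha_2}$ kills it at $\lambda_{j_1}=1$, so the fundamental theorem of calculus in the variable $\lambda_{j_1}$ gives $0$.

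Next I would carry out the differentiation explicitly, distributing $\frac{\partial}{\partial\lambda_{j_1}}$ over three pieces: (i) the explicit factor $\lambda_{j_1}$; (ii) the resolvent factors $\prod_{l=2}^n(x_l-\lambda_{j_l})^{-1}$, which contribute only on the terms with $j_k=j_1$ for some $k$; and (iii) the PDF $P$. Piece (i) gives $\langle\sum_{j_1}F\rangle^{(J)}=N\,\overline{U}_{n-1}(x_2,\ldots,x_n)$, since $F$ does not depend on the label $j_1$. For piece (ii) I would use $\frac{\partial}{\partial\lambda_{j_1}}(x_k-\lambda_{j_1})^{-1}=(x_k-\lambda_{j_1})^{-2}$ together with the elementary identity $\lambda_{j_1}(x_k-\lambda_{j_1})^{-2}=-\frac{\partial}{\partial x_k}\big(x_k/(x_k-\lambda_{j_1})\big)$; imposing $j_k=j_1$ and relabelling the surviving summation index, this collapses to $-\sum_{k=2}^n\frac{\partial}{\partial x_k}\big(x_k\,\overline{U}_{n-1}(x_2,\ldots,x_n)\big)=-(n-1)\overline{U}_{n-1}(x_2,\ldots,x_n)-\sum_{k=2}^n x_k\frac{\partial}{\partial x_k}\overline{U}_{n-1}(x_2,\ldots,x_n)$, which is present only when $n\neq1$ and is exactly the first line of \eqref{4.3}.

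For piece (iii) I would use $\frac{\partial}{\partial\lambda_{j_1}}\log P=\frac{\alpha_1}{\lambda_{j_1}}-\frac{\alpha_2}{1-\lambda_{j_1}}+\beta\sum_{p\neq j_1}(\lambda_{j_1}-\lambda_p)^{-1}$, so that after the simplification $\frac{\lambda_{j_1}}{1-\lambda_{j_1}}=\frac{1}{1-\lambda_{j_1}}-1$ one gets $\lambda_{j_1}\frac{\partial}{\partial\lambda_{j_1}}\log P=(\alpha_1+\alpha_2)-\frac{\alpha_2}{1-\lambda_{j_1}}+\beta\sum_{p\neq j_1}\frac{\lambda_{j_1}}{\lambda_{j_1}-\lambda_p}$. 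Summing this against $F$ over $j_1$ and averaging: the constant part yields $(\alpha_1+\alpha_2)N\,\overline{U}_{n-1}$; the $-\alpha_2/(1-\lambda_{j_1})$ part yields precisely the last term of \eqref{4.3}; and in the double sum, since $F$ is a symmetric function of $\lambda_1,\ldots,\lambda_N$ it pulls out of the $\{j_1,p\}$ summation, and the antisymmetrisation $\frac{\lambda_{j_1}}{\lambda_{j_1}-\lambda_p}+\frac{\lambda_p}{\lambda_p-\lambda_{j_1}}=1$ gives $\beta\binom{N}{2}\overline{U}_{n-1}=\frac{\beta N(N-1)}{2}\overline{U}_{n-1}$ (the Jacobi analogue of the step that produces $0$ in the proof of Proposition~\ref{prop3.1}, where the numerator is $1$ rather than $\lambda_{j_1}$). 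Adding the three pieces, the terms proportional to $\overline{U}_{n-1}$ combine to $\big((\alpha_1+\alpha_2+1)N+\frac{\beta N(N-1)}{2}\big)\overline{U}_{n-1}$, reproducing \eqref{4.3}; the case $n=1$ is recovered by setting $F=1$, $\overline{U}_0=1$ and dropping the $\chi_{n\neq1}$ line.

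I expect the main obstacle to be the bookkeeping in piece (ii): carefully isolating the $j_k=j_1$ contributions of the resolvent derivative, rewriting $\lambda_{j_1}(x_k-\lambda_{j_1})^{-2}$ as a total $x_k$-derivative, and recognising the relabelled sum as $x_k\,\overline{U}_{n-1}(x_2,\ldots,x_n)$. This is the one place where the extra factor $\lambda_{j_1}$ in \eqref{4.1} (absent from \eqref{3.1}) genuinely alters the structure, producing the combination $(n-1)\overline{U}_{n-1}+\sum_k x_k\partial_{x_k}\overline{U}_{n-1}$ in place of the $\sum_k\partial_{x_k}\overline{U}_{n-1}$ of \eqref{3.3}; the remaining manipulations are routine and parallel to the earlier propositions.
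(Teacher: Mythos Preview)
Your proposal is correct and follows precisely the approach indicated in the paper, which simply states that the identity is obtained by applying integration by parts to \eqref{4.1}. You have supplied all of the details the paper omits: the vanishing of the boundary terms from $\alpha_1,\alpha_2>0$, the identity $\lambda_{j_1}(x_k-\lambda_{j_1})^{-2}=-\partial_{x_k}\big(x_k/(x_k-\lambda_{j_1})\big)$ that produces the $(n-1)\overline{U}_{n-1}+\sum_k x_k\partial_{x_k}\overline{U}_{n-1}$ combination, and the symmetrisation $\frac{\lambda_{j_1}}{\lambda_{j_1}-\lambda_p}+\frac{\lambda_p}{\lambda_p-\lambda_{j_1}}=1$ that turns the Vandermonde contribution into the constant $\beta N(N-1)/2$ (in fact this last identity holds pointwise for the double sum, so you do not even need the symmetry of $F$ or of the PDF there).
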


Using \eqref{4.2} and \eqref{4.3} together allows all quantities in the identity of Proposition \ref{P23}, after first making use of \eqref{3.6} and the partial fraction formulas \eqref{3.4a} and
\begin{equation*}
\frac{1}{(1-\lambda_{j_1})(x_1-\lambda_{j_1})}=\frac{1}{1-x_1}\left(\frac{1}{x_1-\lambda_{j_1}}-\frac{1}{1-\lambda_{j_1}}\right),
\end{equation*}
to be written in terms of the $\overline{U}$'s. This then gives the Jacobi analogue of \eqref{3.7},
\begin{align}
0&=\left((\kappa-1)\frac{\partial}{\partial x_1}+\left(\frac{\alpha_1}{x_1}-\frac{\alpha_2}{1-x_1}\right)\right)\overline{U}_n(x_1,J_n)\nonumber
\\&\quad+\chi_{n\neq1}\sum_{k=2}^n\frac{\partial}{\partial x_k}\left\{\frac{\overline{U}_{n-1}(x_1,\ldots,\hat{x}_k,\ldots,x_n)-\overline{U}_{n-1}(J_n)}{x_1-x_k}+\frac{1}{x_1}\overline{U}_{n-1}(J_n)\right\}\nonumber
\\&\quad+\frac{1}{x_1(1-x_1)}\left[(\alpha_1+\alpha_2+1)N+\kappa N(N-1)\right]\overline{U}_{n-1}(J_n)\nonumber
\\&\quad-\frac{1}{x_1(1-x_1)}\chi_{n\neq1}\left((n-1)\overline{U}_{n-1}(J_n)+\sum_{k=2}^nx_k\frac{\partial}{\partial x_k}\overline{U}_{n-1}(J_n)\right)\nonumber
\\&\quad+\kappa\overline{U}_{n+1}(x_1,x_1,J_n)\label{4.4}
\end{align}
where $J_n$ is as in \eqref{2.6} and $\overline{U}_0:=1$.

Following the proof of Proposition \ref{prop3.4} this can be rewritten in terms of the connected correlators $\{\overline{W}_k\}$, so giving the hierarchy of loop equations for the Jacobi $\beta$-ensemble. The approach outlined in Appendix \ref{appA}, \cite{Ra16} applies.
\begin{proposition}
With $\kappa=\beta/2$ and $J_n$ as in \eqref{2.6}, for $n\in\mathbb{Z}^+$ we have
\begin{align}
0&=\left((\kappa-1)\frac{\partial}{\partial x_1}+\left(\frac{\alpha_1}{x_1}-\frac{\alpha_2}{1-x_1}\right)\right)\overline{W}_n(x_1,J_n)-\frac{n-1}{x_1(1-x_1)}\overline{W}_{n-1}(J_n)\nonumber
\\&\quad+\frac{\chi_{n=1}}{x_1(1-x_1)}\left[(\alpha_1+\alpha_2+1)N+\kappa N(N-1)\right]-\frac{\chi_{n\neq1}}{x_1(1-x_1)}\sum_{k=2}^nx_k\frac{\partial}{\partial x_k}\overline{W}_{n-1}(J_n)\nonumber
\\&\quad+\chi_{n\neq1}\sum_{k=2}^n\frac{\partial}{\partial x_k}\left\{\frac{\overline{W}_{n-1}(x_1,\ldots,\hat{x}_k,\ldots,x_n)-\overline{W}_{n-1}(J_n)}{x_1-x_k}+\frac{1}{x_1}\overline{W}_{n-1}(J_n)\right\}\nonumber
\\&\quad+\kappa\left[\overline{W}_{n+1}(x_1,x_1,J_n)+\sum_{J\subseteq J_n}\overline{W}_{|J|+1}(x_1,J)\overline{W}_{n-|J|}(x_1,J_n\setminus J)\right].\label{4.5}
\end{align}
\end{proposition}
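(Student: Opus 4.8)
The plan is to derive \eqref{4.5} from \eqref{4.4} by exactly the route used in the proof of Proposition \ref{prop3.4} to pass from \eqref{3.7} to \eqref{3.8}, the new feature being that the additional Jacobi terms — those carrying the prefactor $1/(x_1(1-x_1))$, together with the inhomogeneous constant $[(\alpha_1+\alpha_2+1)N+\kappa N(N-1)]/(x_1(1-x_1))$ — are linear in the correlators and hence ride along without interacting with the nonlinear structure. First I would settle the base cases. For $n=1$, \eqref{4.4} reads
\[
0=\left((\kappa-1)\frac{\partial}{\partial x_1}+\left(\frac{\alpha_1}{x_1}-\frac{\alpha_2}{1-x_1}\right)\right)\overline{U}_1(x_1)+\frac{(\alpha_1+\alpha_2+1)N+\kappa N(N-1)}{x_1(1-x_1)}+\kappa\overline{U}_2(x_1,x_1),
\]
and substituting $\overline{U}_1(x_1)=\overline{W}_1(x_1)$ and $\overline{U}_2(x_1,x_1)=\overline{W}_2(x_1,x_1)+\overline{W}_1(x_1)^2$ from \eqref{2.4} gives \eqref{4.5} with $n=1$. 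For $n=2$, I would expand every $\overline{U}_k$ in \eqref{4.4} into $\overline{W}$'s via \eqref{2.4} and then subtract the $n=1$ instance of \eqref{4.5} multiplied by $\overline{W}_1(x_2)$; one checks that all disconnected contributions cancel in pairs — the term $\overline{W}_1(x_2)[(\kappa-1)\partial_{x_1}+\cdots]\overline{W}_1(x_1)$, the constant term weighted by $\overline{W}_1(x_2)$, and the part $\kappa[\overline{W}_2(x_1,x_1)+\overline{W}_1(x_1)^2]\overline{W}_1(x_2)$ extracted from $\kappa\overline{U}_3(x_1,x_1,x_2)$ — leaving \eqref{4.5} with $n=2$. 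The new linear terms $-(n-1)\overline{W}_{n-1}(J_n)/(x_1(1-x_1))$ and $-\chi_{n\neq1}\sum_{k}x_k\partial_{x_k}\overline{W}_{n-1}(J_n)/(x_1(1-x_1))$ already appear at $n=2$ with $\overline{U}_{n-1}=\overline{W}_1$, so they need no reorganisation here.

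For general $n$, I would run the same downward induction on the cases $n-1,n-2,\ldots,1$ as in Proposition \ref{prop3.4}: use the inverse relation \eqref{2.6} to peel the one-point factors off each $\overline{U}_k$ occurring in \eqref{4.4}, invoke the inductive hypotheses for \eqref{4.5} at the lower indices, and collect terms. Two bookkeeping points govern the outcome. First, the inhomogeneous term $\frac{1}{x_1(1-x_1)}[(\alpha_1+\alpha_2+1)N+\kappa N(N-1)]\overline{U}_{n-1}(J_n)$ collapses to $\chi_{n=1}$ times the constant — precisely as the Laguerre term $\frac{N}{x_1}\overline{U}_{n-1}(J_n)$ does — because the disconnected part of $\overline{U}_{n-1}(J_n)$ is cancelled when the lower loop equations are subtracted off. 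Second, the quadratic term $\kappa\overline{U}_{n+1}(x_1,x_1,J_n)$ resolves into $\kappa\overline{W}_{n+1}(x_1,x_1,J_n)$ plus the splitting sum $\kappa\sum_{J\subseteq J_n}\overline{W}_{|J|+1}(x_1,J)\overline{W}_{n-|J|}(x_1,J_n\setminus J)$. Since the $1/(x_1(1-x_1))$ terms and the $\sum_{k=2}^n$ terms inside the braces are linear in the correlators, they pass through \eqref{2.6} unchanged and contribute no products; the combinatorics is therefore identical to the Laguerre case, and the induction of Appendix \ref{appA} (see also \cite{Ra16}) carries over.

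The main obstacle is the second bookkeeping point: verifying that, after the lower loop equations have been used to remove every disconnected contribution, the repeated-argument unconnected correlator $\overline{U}_{n+1}(x_1,x_1,J_n)$ reorganises into the connected $\overline{W}_{n+1}(x_1,x_1,J_n)$ together with exactly the products indexed by the subsets $J\subseteq J_n$, with the correct multiplicities. This is a purely combinatorial statement about how the partition sum \eqref{2.5} behaves under coincidence of two arguments, and it is exactly what the induction in Appendix \ref{appA} establishes; none of the Jacobi-specific terms enters it, so the proof there applies without modification.
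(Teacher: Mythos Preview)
Your proposal is correct and follows exactly the route the paper takes: the paper's proof consists of the single sentence ``Following the proof of Proposition \ref{prop3.4} this can be rewritten in terms of the connected correlators $\{\overline{W}_k\}$\ldots The approach outlined in Appendix \ref{appA}, \cite{Ra16} applies,'' and you have spelled out the base cases and the inductive step in precisely that spirit. One small wording issue: in your first bookkeeping point it is not merely the disconnected part of $\overline{U}_{n-1}(J_n)$ that is cancelled --- the entire term $\frac{1}{x_1(1-x_1)}[\cdots]\overline{U}_{n-1}(J_n)$ is removed when $\overline{U}_{n-1}(J_n)I(x_1)$ is subtracted, since the inhomogeneous constant sits only in the $n=1$ equation.
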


We remark that both the loop equations (\ref{3.10}) for the Gaussian $\beta$-ensemble, and
(\ref{3.9}) for the Laguerre $\beta$-ensemble, can be obtained from (\ref{4.5}) via a limiting process. The latter is simply the same limiting process as is required to obtain the Gaussian and Laguerre weights from the Jacobi weight
as specified by (\ref{1.3}). Thus for the Gaussian weight, map $x_l \mapsto (1 - x_l/\alpha)/2$,
$\overline{W}_n \mapsto (-2 \alpha)^n \overline{W}_n$, set $\alpha_1 = \alpha_2 = \alpha^2$ and equate
coefficients of $(-2 \alpha)^{n+1}$ for $\alpha \to \infty$ to deduce (\ref{3.10}) from (\ref{4.5}). For the
Laguerre weight map $x_l \mapsto x_l/\alpha$, $\overline{W}_n \mapsto \alpha^n \overline{W}_n$, set
$\alpha_1 = \alpha$ and equate coefficients of $\alpha^{n+1}$ for $\alpha \to \infty$ to deduce (\ref{3.8}) from (\ref{4.5}).

\subsection{Triangular Systems for $\{W_k^l\}$}\label{s4.2}
Unlike the Laguerre case, the Jacobi weight is supported on a finite interval. Thus we do not need to introduce a scaling for the density or connected correlators, and instead expand the connected correlators as
\begin{equation*}
\overline{W}_n(s_1,\ldots,s_n)=N^{2-n}\kappa^{1-n}\sum_{l=0}^{\infty}\frac{W_n^l(s_1,\ldots,s_n)}{(N\kappa)^l}.
\end{equation*}
We substitute this into \eqref{4.5} and equate like powers of $1/N$ while considering the following three cases regarding $\alpha_1$ and $\alpha_2$:
\begin{enumerate}
\item $\alpha_1,\alpha_2={\rm O}(1)$
\item $\alpha_1=\tilde{\alpha}_1N\kappa+\delta_1$ with $\tilde{\alpha}_1,\delta_1,\alpha_2={\rm O}(1)$
\item $\alpha_1=\tilde{\alpha}_1N\kappa+\delta_1,\,\alpha_2=\tilde{\alpha}_2N\kappa+\delta_2$ where $\tilde{\alpha}_1,\tilde{\alpha}_2,\delta_1,\delta_2={\rm O}(1)$.
\end{enumerate}
We do not consider the case where $\alpha_1={\rm O}(1)$ while $\alpha_2={\rm O}(N)$ since $\alpha_1$ and $\alpha_2$ are dual under the change of variables $s_i\mapsto 1-s_i$. Also, for lack of present applicability, we do not report the cases that both $n\geq2$ and $l\geq1$.

\begin{proposition}\label{prop4.5}
Throughout this proposition, if $\alpha_i={\rm O}(1)$ we set $\tilde{\alpha}_i=0$, else we set $\alpha_i=\delta_i$.

In the case $n=1$, we deduce from \eqref{4.5} that
\begin{equation}\label{4.7}
0=\left(W_1^0(s_1)\right)^2+\left(\frac{\tilde{\alpha}_1}{s_1}-\frac{\tilde{\alpha}_2}{1-s_1}\right)W_1^0(s_1)+\frac{\tilde{\alpha}_1+\tilde{\alpha}_2+1}{s_1(1-s_1)},
\end{equation}
and that
\begin{align}
0&=\left((\kappa-1)\frac{\partial}{\partial s_1}+\left(\frac{\alpha_1}{s_1}-\frac{\alpha_2}{1-s_1}\right)\right)W_1^0(s_1)+\left(\frac{\tilde{\alpha}_1}{s_1}-\frac{\tilde{\alpha}_2}{1-s_1}\right)W_1^1(s_1)\nonumber
\\&\quad+\frac{\alpha_1+\alpha_2+1-\kappa}{s_1(1-s_1)}+2W_1^0(s_1)W_1^1(s_1),\label{4.8}
\end{align}
while for $l\geq2$
\begin{align}
0&=\left((\kappa-1)\frac{\partial}{\partial s_1}+\left(\frac{\alpha_1}{s_1}-\frac{\alpha_2}{1-s_1}\right)\right)W_1^{l-1}(s_1)+\left(\frac{\tilde{\alpha}_1}{s_1}-\frac{\tilde{\alpha}_2}{1-s_1}\right)W_1^l(s_1)\nonumber
\\&\quad+\kappa W_2^{l-2}(s_1,s_1)+\sum_{k=0}^lW_1^k(s_1)W_1^{l-k}(s_1).\label{4.9}
\end{align}

For $n\geq2$, we deduce from \eqref{4.5} that
\begin{align}
0&=\frac{n-1}{s_1(1-s_1)}W_{n-1}^0(s_2,\ldots,s_n)+\frac{1}{s_1(1-s_1)}\sum_{k=2}^ns_k\frac{\partial}{\partial s_k}W_{n-1}^0(s_2,\ldots,s_n)\nonumber
\\&\quad-\sum_{k=2}^n\frac{\partial}{\partial s_k}\left\{\frac{W_{n-1}^0(s_1,\ldots,\hat{s}_k,\ldots,s_n)-W_{n-1}^0(s_2,\ldots,s_n)}{s_1-s_k}+\frac{1}{s_1}W_{n-1}^0(s_2,\ldots,s_n)\right\}\nonumber
\\&\quad-\left(\frac{\tilde{\alpha}_1}{s_1}-\frac{\tilde{\alpha}_2}{1-s_1}\right)W_n^0(s_1,\ldots,s_n)-\sum_{J\subseteq (s_2,\ldots,s_n)}W_{|J|+1}^0(s_1,J)W_{n-|J|}^0(s_1,(s_2,\ldots,s_n)\setminus J).\label{4.10}
\end{align}
%and that for $l\geq1$
%\begin{align}
%0&=\left((\kappa-1)\frac{\partial}{\partial s_1}+\left(\frac{\alpha_1}{s_1}-\frac{\alpha_2}{1-s_1}\right)\right)W_n^{l-1}(s_1,\ldots,s_n)+\left(\frac{\tilde{\alpha}_1}{s_1}-\frac{\tilde{\alpha}_2}{1-s_1}\right)W_n^l(s_1,\ldots,s_n)\nonumber
%\\&\quad-\frac{n-1}{s_1(1-s_1)}W_{n-1}^l(s_2,\ldots,s_n)-\frac{1}{s_1(1-s_1)}\sum_{k=2}^ns_k\frac{\partial}{\partial s_k}W_{n-1}^l(s_2,\ldots,s_n)\nonumber
%\\&\quad+\sum_{k=2}^n\frac{\partial}{\partial s_k}\left\{\frac{W_{n-1}^l(s_1,\ldots,\hat{s}_k,\ldots,s_n)-W_{n-1}^l(s_2,\ldots,s_n)}{s_1-s_k}+\frac{1}{s_1}W_{n-1}^l(s_2,\ldots,s_n)\right\}\nonumber
%\\&\quad+\kappa W_{n+1}^{l-2}(s_1,s_1,s_2,\ldots,s_n)+\sum_{J\subseteq(s_2,\ldots,s_n)}\sum_{k=0}^lW_{|J|+1}^k(s_1,J)W_{n-|J|}^{l-k}(s_1,(s_2,\ldots,s_n)\setminus J),\label{4.11}
%\end{align}
%where $W_n^{-1}:=0$ for all $n\geq1$.
\end{proposition}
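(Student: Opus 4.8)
The plan is to substitute the stated $1/N$ expansion $\overline{W}_n(s_1,\ldots,s_n)=N^{2-n}\kappa^{1-n}\sum_{l\geq0}W_n^l(s_1,\ldots,s_n)(N\kappa)^{-l}$ into the loop equation hierarchy \eqref{4.5} and equate coefficients of like powers of $N$, handling the three regimes for $\alpha_1,\alpha_2$ uniformly by writing $\alpha_i=\tilde{\alpha}_iN\kappa+\delta_i$ (so that $\tilde{\alpha}_i=0$ and $\delta_i=\alpha_i$ precisely when $\alpha_i={\rm O}(1)$, which is the stated convention). The essential preliminary step is to record the leading $N$-order, together with the accompanying $\kappa$-weight, carried by each term on the right of \eqref{4.5} once the expansion is inserted: $(\kappa-1)\partial_{x_1}\overline{W}_n$ and the $\delta_i$-parts of the $\alpha_i$ prefactors times $\overline{W}_n$ enter at order $N^{2-n}\kappa^{1-n}$; the $\tilde{\alpha}_i$-parts of the prefactors times $\overline{W}_n$, every $\overline{W}_{n-1}(J_n)$-term, and the ``square'' sum $\kappa\sum_J\overline{W}_{|J|+1}(x_1,J)\overline{W}_{n-|J|}(x_1,J_n\setminus J)$ all enter at order $N^{3-n}\kappa^{2-n}$; the inhomogeneous bracket present when $n=1$ enters at order $N^2$ (via $\kappa N(N-1)$ and the $\tilde{\alpha}_i\kappa N^2$ pieces of $(\alpha_1+\alpha_2+1)N$); and $\kappa\overline{W}_{n+1}(x_1,x_1,J_n)$ sits two orders below the leading balance, at $N^{1-n}\kappa^{1-n}$. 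This last fact is exactly what forces $W_{n+1}^{l-2}$ rather than $W_{n+1}^{l}$ to appear in the equation at order $l$, and it also explains why \eqref{4.10} contains no self-interaction of $W_n^0$ beyond the product term.

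With this accounting in place the derivation is mechanical. For $n=1$ I would extract the coefficient of $N^2$ and divide by $\kappa N^2$: the derivative term and $\kappa\overline{W}_2(x_1,x_1)$ are subleading and drop, leaving the quadratic \eqref{4.7}, the Jacobi analogue of \eqref{3.13}. The coefficient of $N^1$ retains $(\kappa-1)(W_1^0)'$, the $\delta_i$-prefactors acting on $W_1^0$, the $\tilde{\alpha}_i$-prefactors acting on $W_1^1$, the $-\kappa$ from $\kappa N(N-1)$ together with the order-$N$ piece $(\delta_1+\delta_2+1)N$ of $(\alpha_1+\alpha_2+1)N$, and the cross term $2W_1^0W_1^1$ from the square; this is \eqref{4.8}. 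The coefficient of $N^{2-l}$ for $l\geq2$ additionally acquires $\kappa W_2^{l-2}(s_1,s_1)$ from $\kappa\overline{W}_2$ and the full convolution $\sum_{k=0}^lW_1^kW_1^{l-k}$ from the square, and dividing through by $\kappa^{1-l}$ gives \eqref{4.9}. For $n\geq2$, where only the leading ($l=0$) equation is claimed, I would extract the leading order $N^{3-n}\kappa^{2-n}$ and multiply through by $-1$: the derivative and $\overline{W}_{n+1}$ terms are subleading, and what remains are the three $\overline{W}_{n-1}(J_n)$-contributions, the $\tilde{\alpha}_i$-prefactor times $W_n^0$, and the leading product $\sum_JW_{|J|+1}^0(x_1,J)W_{n-|J|}^0(x_1,J_n\setminus J)$, i.e.\ \eqref{4.10}. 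Each extraction is then checked to reduce, in case (1), to the corresponding formula with $\tilde{\alpha}_i=0$, and in cases (2)--(3) to the same formula with every bare $\alpha_i$ read as $\delta_i$.

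The step I expect to demand the most care is compatibility of the passage from unconnected to connected correlators with the order-by-order expansion: the loop equations are first obtained in the form \eqref{4.4} in terms of the $\overline{U}_n$, and one must be sure that reorganising $\kappa\overline{U}_{n+1}(x_1,x_1,J_n)$ via \eqref{2.6} into $\kappa[\overline{W}_{n+1}(x_1,x_1,J_n)+\sum_J\overline{W}_{|J|+1}(x_1,J)\overline{W}_{n-|J|}(x_1,J_n\setminus J)]$, and similarly rewriting the $\overline{U}_{n-1}$-terms, interacts correctly with the $N$-scalings above so that no orders are mixed. That reorganisation is precisely what is carried out in deriving \eqref{4.5} from \eqref{4.4}, following the inductive scheme of Appendix \ref{appA} and \cite{Ra16}; granting it, the remaining work is the lengthy but routine coefficient matching sketched above, performed separately in the three $\alpha$-regimes. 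The anticipated bottleneck is therefore purely clerical, namely simultaneously tracking the $\kappa$-weights through the inhomogeneous term, the $\alpha_i$ prefactors, and the quadratic term in all three cases.
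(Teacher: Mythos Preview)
Your proposal is correct and follows exactly the approach the paper takes: substitute the expansion $\overline{W}_n=N^{2-n}\kappa^{1-n}\sum_{l\geq0}W_n^l(N\kappa)^{-l}$ into \eqref{4.5} and equate like powers of $N$, with the three $\alpha$-regimes handled by the convention $\alpha_i=\tilde{\alpha}_iN\kappa+\delta_i$. Your order-counting of each term is accurate, and your observation that $\kappa\overline{W}_{n+1}$ enters two orders below the leading balance is precisely why $W_2^{l-2}$ appears in \eqref{4.9} and why no $W_{n+1}$-term is present in \eqref{4.10}. The one caveat is that your stated ``bottleneck'' concerning the passage from $\overline{U}_n$ to $\overline{W}_n$ is not actually part of this proposition: that reorganisation is already completed in obtaining \eqref{4.5} from \eqref{4.4}, so Proposition~\ref{prop4.5} starts from the connected form and the only work required here is the coefficient matching you describe.
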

Using these equations, we can now compute the resolvent coefficients $W_n^l$, which further allow us to compute the large $N$ expansion of the density and the moments. Indeed, these equations form a triangular system, so we can compute these quantities up to whatever order we desire. We treat each of the three cases outlined in Proposition \ref{prop4.5} separately, specifying $W_1^0$, $W_1^1$ and $W_2^0$.

\subsection{Parameters $\alpha_1, \alpha_2$ of Order Unity}\label{s4.3}

\begin{proposition}\label{PSa} Suppose $\alpha_1, \alpha_2$ are fixed independent of $N$. We have
\begin{align}\label{4.12}
W_1^0(s_1) &=\frac{1}{\sqrt{s_1(s_1-1)}}, \\
W_1^1(s_1) & =\frac{\kappa-1-2\alpha_1}{4s_1}+\frac{\kappa-1-2\alpha_2}{4(s_1-1)}+\frac{\alpha_1+\alpha_2+1-\kappa}{2\sqrt{s_1(s_1-1)}} \label{V1}
\end{align}
and $W_2^0(s_1,s_2)$ is given by (\ref{2.7a}) with $a=0$, $b=1$.
\end{proposition}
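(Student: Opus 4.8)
The plan is to read everything off from Proposition~\ref{prop4.5} specialised to its first case, in which $\tilde\alpha_1=\tilde\alpha_2=0$ and $\alpha_1,\alpha_2$ enter only through $\delta_1,\delta_2$ (i.e. $\alpha_i=\delta_i={\rm O}(1)$). First, for $W_1^0$: with $\tilde\alpha_1=\tilde\alpha_2=0$ equation~\eqref{4.7} collapses to the pure quadratic $\big(W_1^0(s_1)\big)^2=-1/(s_1(1-s_1))=1/(s_1(s_1-1))$, so $W_1^0(s_1)=\pm 1/\sqrt{s_1(s_1-1)}$. To fix the branch, recall from~\eqref{2.1} that $\overline W_1(x)\sim N/x$ as $x\to\infty$, and hence from the expansion defining the $W_n^l$ that $W_1^0(s_1)\sim 1/s_1$ as $s_1\to\infty$; only the $+$ branch is consistent with this, giving~\eqref{4.12}.

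Next, $W_1^1$: with $\tilde\alpha_1=\tilde\alpha_2=0$ the term $\big(\tilde\alpha_1/s_1-\tilde\alpha_2/(1-s_1)\big)W_1^1$ drops out of~\eqref{4.8}, leaving an equation that is linear in $W_1^1$ with nonzero coefficient $2W_1^0(s_1)$. Hence
\[
W_1^1(s_1)=-\frac{1}{2W_1^0(s_1)}\left[(\kappa-1)\frac{d}{ds_1}W_1^0(s_1)+\left(\frac{\alpha_1}{s_1}-\frac{\alpha_2}{1-s_1}\right)W_1^0(s_1)+\frac{\alpha_1+\alpha_2+1-\kappa}{s_1(1-s_1)}\right].
\]
Substituting~\eqref{4.12}, together with $\frac{d}{ds_1}W_1^0(s_1)=-\tfrac12(2s_1-1)(s_1(s_1-1))^{-3/2}$ and the partial fraction $\frac{1}{s_1(1-s_1)}=\frac{1}{s_1}+\frac{1}{1-s_1}$, then multiplying through by $\tfrac12\sqrt{s_1(s_1-1)}$ and regrouping into partial fractions in $s_1$, should collapse the expression to~\eqref{V1}. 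The $s_1\mapsto 1-s_1$ symmetry, which interchanges $\alpha_1\leftrightarrow\alpha_2$ and fixes $\kappa$, provides a convenient internal check (it is manifestly respected by~\eqref{V1}), and the large-$s_1$ cancellation leaving $W_1^1(s_1)={\rm O}(1/s_1^2)$ is another.

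For $W_2^0$, take $n=2$ in~\eqref{4.10} with $\tilde\alpha_1=\tilde\alpha_2=0$: the final sum there equals $2W_1^0(s_1)W_2^0(s_1,s_2)$, so again the equation is linear in $W_2^0$ with coefficient $-2W_1^0(s_1)$, and its inhomogeneous part involves only $W_1^0$ (known from~\eqref{4.12}) and $\partial_{s_2}W_1^0$. Solving and simplifying produces $W_2^0(s_1,s_2)$; alternatively, and more transparently, one substitutes the conjectured universal form~\eqref{2.7a} with $a=0$, $b=1$ directly into this $n=2$ equation and verifies the resulting identity, treating the apparent pole at $s_1=s_2$ as a removable singularity exactly as in the discussion around~\eqref{3.17}. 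As a consistency check,~\eqref{3.23a} with $a=0$, $b=1$ predicts the coincidence limit $W_2^0(s_1,s_1)=1/\big(16\,s_1^2(s_1-1)^2\big)$.

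The only real obstacle is bookkeeping: the simplifications producing~\eqref{V1} and the universal~$W_2^0$ require careful partial-fraction manipulations, in $s_1$ and in $(s_1,s_2)$ respectively, and recognising the resulting rational-plus-algebraic combinations. There is no conceptual difficulty, because the triangular structure of the system in Proposition~\ref{prop4.5} determines $W_1^0$, then $W_1^1$, then $W_2^0$ purely algebraically once the branch of $W_1^0$ has been fixed.
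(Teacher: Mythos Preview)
Your proposal is correct and follows essentially the same approach as the paper: specialise Proposition~\ref{prop4.5} to $\tilde\alpha_1=\tilde\alpha_2=0$, solve the quadratic~\eqref{4.7} for $W_1^0$ with the branch fixed by $W_1^0(s_1)\sim 1/s_1$, then substitute into~\eqref{4.8} and into~\eqref{4.10} with $n=2$ to obtain $W_1^1$ and $W_2^0$. You provide considerably more computational detail (the explicit formula for $W_1^1$, the partial-fraction strategy, and the symmetry and coincidence-limit checks) than the paper's terse proof, but the underlying logic is identical.
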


\begin{proof}
In the case that $\alpha_1, \alpha_2$ are fixed, \eqref{4.7} reads 
\begin{equation*}
0=\left(W_1^0(s_1)\right)^2+\frac{1}{s_1(1-s_1)},
\end{equation*}
which implies the first equation in (\ref{4.12}) upon using the requirement $W_1^0(s_1)\sim 1/s_1$ as $s_1\rightarrow\infty$ to fix the branch.
This result substituted into \eqref{4.8} gives the second equation in (\ref{4.12}), while substituting it into 
\eqref{4.10} with $n=2$ gives $W_2^0(s_1,s_2)$.
\end{proof}

Using this data and the approach outlined in \S\ref{s3}, we compute
\begin{equation}\label{4.15}
m_1^{(J)} =\frac{N}{2}+\frac{\alpha_1-\alpha_2}{4\kappa}+{\rm O}\Big ( {1 \over N}\Big ).
\end{equation}
This can be compared against the exact expression for finite $N$ \cite[Eq.~(B.7a)]{MR15}
\begin{equation}\label{4.15a}
m_1^{(J)} = {N((N-1) \kappa + \alpha_1 + 1) \over 2 (N-1) \kappa + \alpha_1 + \alpha_2 + 2},
\end{equation}
which indeed exhibits the large $N$ expansion (\ref{4.15}).
We remark that in distinction to the expansion \eqref{3.36}, the $k$\textsuperscript{th} moment in general is no longer a polynomial of degree $k$ in $1/N$, but rather an infinite series \cite[Th.~5.1]{DP12}, as exhibited by the large $N$ expansion of (\ref{4.15a}). In fact each $m_k^{(J)}$ is a rational function in $N$ of degree $2k$ in the
numerator, and $2k-1$ in the denominator, the case $k=1$ being  given explicitly by
(\ref{4.15a}) and $k=2$ by \cite[Eq.~(B.7b)]{MR15}; these exact expressions are supplemented by us presenting
the explicit functional form of the case $k=3$ in Appendix B below.  Also,
with $m_k^{(J)} =  m_k^{(J)}(\kappa, N, \alpha_1, \alpha_2)$ the results of \cite{DP12} imply the
symmetry\footnote{V.~Gorin has informed us that this result can also be deduced from results contained in
the Appendix A of \cite{FLD16} written by A.~Borodin and V.~Gorin.}
\begin{equation}\label{4.15b}
m_k^{(J)} =  m_k^{(J)}(\kappa, N, \alpha_1, \alpha_2) = (-\kappa)^k m_k^{(J)}(1/\kappa, - \kappa N, -\alpha_1/\kappa, -\alpha_2/\kappa)
\end{equation}
(cf.~(\ref{3.28})), which in turn requires that $W_1^k(s) = W_1^k(s;\kappa,  \alpha_1, \alpha_2)$ exhibit the symmetry
\begin{equation}\label{4.15c}
 W_1^k(s;\kappa, \alpha_1, \alpha_2) = (-1)^k W_1^k(s;1/\kappa, - \alpha_1/\kappa, -\alpha_2/\kappa).
\end{equation} 
This last equation is indeed  a feature of (\ref{V1}).

We expand the density $\rho_{(1)}$ corresponding to the resolvent $\overline{W}_1$ as
\begin{equation}\label{4.16}
\rho_{(1)}(s)=N\sum_{l=0}^{\infty}\frac{\rho_1^l(s)}{(N\kappa)^l},
\end{equation}
where the $\rho_1^l$ are independent of $N$ and are given by the Sokhotski-Plemelj formula
\begin{equation}\label{4.17}
\rho_1^l(s)=\frac{1}{2\pi\mathrm{i}}\lim_{\epsilon\rightarrow0^+}\left(W_1^l(x-\mathrm{i}\epsilon)-W_1^l(x+\mathrm{i}\epsilon)\right)
\end{equation}
(cf.~(\ref{3.30})).
We recall that the inverse Stieltjes transform of $(s-c)^{-m}$ for positive integer $m$ is given by (\ref{Va1})
%\begin{equation*}
%\frac{(-1)^{m-1}}{(m-1)!}\delta^{(m-1)}(x-c)
%\end{equation*}
and further note that the inverse Stieltjes transform of $\left(s(s-1)\right)^{1/2-n}$ for integer $n$ is given by
\begin{equation*}
\frac{(-1)^{n+1}}{\pi}\left(x(1-x)\right)^{\half -n}\chi_{x\in(0,1)}.
\end{equation*}
Thus,
\begin{align}
\rho_1^0(x)&=\frac{1}{\pi\sqrt{x(1-x)}} \chi_{x \in (0,1)},\label{4.18}
\\ \rho_1^1(x)&=\frac{\kappa-1-2\alpha_1}{4}\delta(x)+\frac{\kappa-1-2\alpha_2}{4}\delta(x-1)+\frac{\alpha_1+\alpha_2+1-\kappa}{2\pi\sqrt{x(1-x)}} \chi_{x \in (0,1)}.\label{4.19}
\end{align}

If we expand the moments
\begin{equation}\label{Mn}
m_k^{(J)} = N \sum_{i=0}^\infty b_i^{(k)} (N \kappa)^{-i}
\end{equation}
the fact that
$
{1 \over \pi} \int_0^1 {x^k \over \sqrt{x (1 - x)}} \, dx = 2^{-2k} \binom{2k}{k}
$
used in (\ref{4.18}) and (\ref{4.19})  tells us that
\begin{equation}
b_0^{(k)} = 2^{-2k} \binom{2k}{k}, \: (k \ge 0) \quad
b_1^{(k)} =  {\kappa - 1 - 2 \alpha_2 \over 4} + {\alpha_1 + \alpha_2 + 1 - \kappa \over 2^{2k+1}}
 \binom{2k}{k},
\end{equation}
while $b_1^{(0)} = 0$. 

\subsection{Parameter $\alpha_1$ of Order $N$, $\alpha_2$ Fixed}\label{s4.4}
Here we substitute $\tilde{\alpha}_2=0$ into \eqref{4.7}, and then substitute the result in \eqref{4.8} and the case $n=2$ of \eqref{4.10} with $\alpha_1=\delta_1,\tilde{\alpha}_2=0$ to deduce the analogue of Proposition \ref{PSa}.

\begin{proposition}\label{PSb}
Define
$$
c_- =           \tilde{\alpha}_1^2/(\tilde{\alpha}_1+2)^2.
$$
With $\alpha_1=\tilde{\alpha}_1N\kappa+\delta_1$ we have
\begin{align}
W_1^0(s_1)& %=-\frac{\tilde{\alpha}_1}{2s_1}+\sqrt{\frac{\tilde{\alpha}_1^2}{4s_1^2}+\frac{\tilde{\alpha}_1+1}{s_1(s_1-1)}}
%\nonumber \\&
= - {\tilde{\alpha}_1 \over 2 s_1} + {\tilde{\alpha}_1 + 2 \over 2 s_1 (s_1 - 1)}
\sqrt{(s_1 - 1) ( s_1 - c_-)}
\label{4.20}
\\W_1^1(s_1)&%=\alpha_2\left[\frac{1}{2(1-s_1)}+\frac{\tilde{\alpha}_1+2}{4s_1(s_1-1)\sqrt{\frac{\tilde{\alpha}_1^2}{4s_1^2}+\frac{\tilde{\alpha}_1+1}{s_1(s_1-1)}}}\right]\nonumber
%\\&\quad+(\kappa-1)\left[\frac{1}{2s_1}+\frac{\tilde{\alpha}_1+1}{(s_1-1)((\tilde{\alpha}_1+2)^2s_1-\tilde{\alpha}_1^2)}+\frac{\tilde{\alpha}_1-(\tilde{\alpha}_1+2)s_1}{4s_1^2(s_1-1)\sqrt{\frac{\tilde{\alpha}_1^2}{4s_1^2}+\frac{\tilde{\alpha}_1+1}{s_1(s_1-1)}}}\right]\label{4.21}
= \alpha_2 \bigg ( {1 \over 2 (1 - s_1)} + {1 \over 2 \sqrt{(s_1 - 1) (s_1 - c_-)} }\bigg ) +(\kappa-1)\frac{1-c_{-}}{4(s_1-1)(s_1-c_{-})}\nonumber \\
& + (\kappa - 1-\delta_1) \bigg (
{1 \over 2 s_1} +
{ \tilde{\alpha}_1 -  (\tilde{\alpha}_1 + 2) s_1 \over 2 (\tilde{\alpha}_1+2)s_1 \sqrt{(s_1 - 1)(s_1 - c_-)}} \bigg ) \label{4.21}
\end{align}
while $W_2^0(s_1,s_2)$ is given by  (\ref{2.7a}) with 
\begin{equation}\label{4.20a}
(a,b) = \left(c_-,1\right).
\end{equation}
\end{proposition}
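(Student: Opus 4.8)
The proof follows the template of Proposition~\ref{PSa}, now specializing the triangular system of Proposition~\ref{prop4.5} to case~(2): we set $\tilde{\alpha}_2 = 0$ throughout, and in the bare-$\alpha_1$ slots of \eqref{4.7}--\eqref{4.10} we put $\alpha_1 = \delta_1$ while retaining the order-$N$ coefficient $\tilde{\alpha}_1$. Only the three objects $W_1^0$, $W_1^1$, $W_2^0$ are claimed, so only the $n=1$ order-$N^2$ and order-$N$ equations and the leading $n=2$ equation are needed. First I would solve for $W_1^0$: with $\tilde{\alpha}_2 = 0$, equation \eqref{4.7} is the quadratic $0 = (W_1^0)^2 + (\tilde{\alpha}_1/s_1)\, W_1^0 + (\tilde{\alpha}_1 + 1)/(s_1(1-s_1))$, whose discriminant simplifies to $\tilde{\alpha}_1^2/s_1^2 - 4(\tilde{\alpha}_1+1)/(s_1(1-s_1)) = (\tilde{\alpha}_1+2)^2 (c_- - s_1)/(s_1^2(1-s_1))$, thereby revealing $c_- = \tilde{\alpha}_1^2/(\tilde{\alpha}_1+2)^2$ as the branch point competing with the one at $s_1 = 1$. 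Fixing the branch by the condition $W_1^0(s_1) \sim 1/s_1$ as $s_1 \to \infty$ (forced by $\overline{W}_1(x) \sim N/x$ together with \eqref{2.7}) singles out \eqref{4.20}; note that this displays $\rho_1^0$ as supported on the single interval $(c_-,1)$, in keeping with \eqref{4.20a}.

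Next, for $W_1^1$ I would substitute \eqref{4.20} into \eqref{4.8} (with $\tilde{\alpha}_2 = 0$, bare $\alpha_1 \mapsto \delta_1$). The pivotal observation is that the coefficient of $W_1^1$ there, namely $\tilde{\alpha}_1/s_1 + 2 W_1^0(s_1)$, collapses to $(\tilde{\alpha}_1+2)\sqrt{(s_1-1)(s_1-c_-)}\,/(s_1(s_1-1))$; since this is not identically zero (for $\tilde{\alpha}_1 \neq -2$), the \emph{linear} equation \eqref{4.8} pins $W_1^1$ down uniquely, with no branch choice or boundary data required. Dividing through, computing $\mathrm{d} W_1^0/\mathrm{d} s_1$ explicitly, and then doing partial fractions in $s_1$ with poles at $s_1 = 0,\,1,\,c_-$ should organize the $\alpha_2$, $(\kappa - 1)$ and $(\kappa - 1 - \delta_1)$ contributions exactly into the three bracketed groups of \eqref{4.21}.

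For $W_2^0$ I would use the $n = 2$ instance of \eqref{4.10} with $\tilde{\alpha}_2 = 0$. Its coefficient of $W_2^0(s_1,s_2)$ is again $-(\tilde{\alpha}_1/s_1 + 2 W_1^0(s_1))$, i.e. the same nonvanishing factor as above, so the equation determines $W_2^0$ uniquely from $W_1^0(s_2)$ and $\partial_{s_2} W_1^0(s_2)$. Rather than solve it from scratch, the cleanest route is to substitute the universal form \eqref{2.7a} with $(a,b) = (c_-,1)$ into \eqref{4.10} and verify the resulting algebraic identity; this must hold given the single-interval support of $\rho_1^0$ already established and the discussion surrounding \eqref{2.7a} and \eqref{3.23a}.

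The main obstacle is the bookkeeping in the middle step: reducing $-\dfrac{s_1(s_1-1)}{(\tilde{\alpha}_1+2)\sqrt{(s_1-1)(s_1-c_-)}}\Big[(\kappa-1)\partial_{s_1} W_1^0 + \big(\tfrac{\delta_1}{s_1} - \tfrac{\alpha_2}{1-s_1}\big) W_1^0 + \tfrac{\delta_1 + \alpha_2 + 1 - \kappa}{s_1(1-s_1)}\Big]$ to the stated form --- a sum of a term with $\sqrt{(s_1-1)(s_1-c_-)}$ in the denominator plus simple poles at $0$, $1$, $c_-$ --- while tracking how $c_-$ enters the residues. The $W_2^0$ calculation carries the analogous bulk of algebra, which is why the substitute-and-verify route is preferable there. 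As consistency checks I would confirm that setting $\tilde{\alpha}_1 = 0$ (so $c_- \to 0$) recovers Proposition~\ref{PSa} with $\alpha_1 = \delta_1$, and that \eqref{4.21} satisfies the appropriate analogue of the symmetry \eqref{4.15c}.
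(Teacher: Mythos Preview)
Your proposal is correct and follows essentially the same route as the paper: the paper's proof is the single sentence ``substitute $\tilde{\alpha}_2=0$ into \eqref{4.7}, and then substitute the result in \eqref{4.8} and the case $n=2$ of \eqref{4.10} with $\alpha_1=\delta_1,\tilde{\alpha}_2=0$,'' and your proposal simply unpacks this with the explicit discriminant computation, the branch selection, and the observation that the coefficient of $W_1^1$ (and of $W_2^0$) collapses to $(\tilde{\alpha}_1+2)\sqrt{(s_1-1)(s_1-c_-)}/(s_1(s_1-1))$. Your suggested consistency check at $\tilde{\alpha}_1=0$ is also exactly what the paper remarks immediately after the proposition.
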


We remark that taking the limit $\tilde{\alpha}_1 \to 0$ in the results of Proposition \ref{PSb} reclaims the
results of Proposition \ref{PSa} with $\alpha_1 = \delta_1$.

In the special cases $\beta = 1,2$ and 4 (\ref{4.20}) and (\ref{4.21}) have been calculated in \cite[Eqns.~(47b) and (108) with changes of notation]{MS12}, as corollaries of knowledge of the corresponding
moment sequence. We refer then to \cite{MS12} for the explicit forms of $b_0^{(k)}, b_1^{(k)}$ in the moment expansion
(\ref{Mn}) in this setting.

%The resulting first moment is
%\begin{equation}\label{4.23}
%m_1=\frac{(\tilde{\alpha}_1+1)N}{\tilde{\alpha}_1+2}+\frac{\tilde{\alpha}_1(\kappa-1-\alpha_2)-\alpha_2}{(\tilde{\alpha}_1+2)^2\kappa}+{\rm O}\Big(\frac{1}{N}\Big).
%\end{equation}
%After substituting $\tilde{\alpha}_1=\frac{\alpha_1}{N\kappa}$ and expanding in large $N$, this moment agrees with \eqref{4.15}.
%
%It seems that the inverse Stieltjes transform has support $\left(\tilde{\alpha}_1^2/(\tilde{\alpha}_1+2)^2,1\right)$.

\subsection{Parameters $\alpha_1$ and $\alpha_2$ of Order $N$}\label{s4.5}

Proceeding as in the derivation of Propositions \ref{PSa} and \ref{PSb} gives us the explicit functional form of
$W_1^0$, $W_1^1$ and $W_2^0$ in this setting.

\begin{proposition}\label{PSc}
With $\alpha_1=\tilde{\alpha}_1N\kappa+\delta_1$ and $\alpha_2=\tilde{\alpha}_2N\kappa+\delta_2$, we have
\begin{align}
W_1^0(s_1) & = %\frac{\tilde{\alpha}_2}{2(1-s_1)}-\frac{\tilde{\alpha}_1}{2s_1}+\sqrt{\left(\frac{\tilde{\alpha}_1}{2s_1}-\frac{\tilde{\alpha}_2}{2(1-s_1)}\right)^2+\frac{\tilde{\alpha}_1+\tilde{\alpha}_2+1}{s_1(s_1-1)}}\nonumber \\&=
\frac{\tilde{\alpha}_2}{2(1-s_1)}-\frac{\tilde{\alpha}_1}{2s_1}-\frac{\tilde{\alpha}_1+\tilde{\alpha}_2+2}{2s_1(1-s_1)}\sqrt{(s_1-c_{-})(s_1-c_{+})}\label{4.24}
\\W_1^1(s_1)&=\left[\frac{\kappa-1}{2}\left(\frac{\tilde{\alpha}_1(1-s_1)}{s_1}+\frac{\tilde{\alpha}_2s_1}{1-s_1}\right)+\frac{(\tilde{\alpha}_1+\tilde{\alpha}_2)s_1-\tilde{\alpha}_1}{2}\left(\frac{\delta_1}{s_1}-\frac{\delta_2}{1-s_1}\right)+\delta_1+\delta_2+1-\kappa\right]\nonumber
\\&\qquad\times\frac{1}{(\tilde{\alpha}_1+\tilde{\alpha}_2+2)\sqrt{(s_1-c_{-})(s_1-c_{+})}}\nonumber
\\&\quad+\frac{\kappa-1}{2}\left(\frac{1}{s_1}-\frac{1}{1-s_1}-\frac{2-c_{-}-c_{+}}{2(s_1-c_{-})(s_1-c_{+})}\right)-\frac{1}{2}\left(\frac{\delta_1}{s_1}-\frac{\delta_2}{1-s_1}\right)\label{4.25}
\end{align}
while $W_2^0(s_1,s_2)$ is given by  (\ref{2.7a}) with 
\begin{equation}\label{4.26}
(a,b) =  (c_-, c_+), \qquad c_{\pm}=\frac{\tilde{\alpha}_1}{\tilde{\alpha}_1+\tilde{\alpha}_2+2}+2\frac{\tilde{\alpha}_2+1\pm\sqrt{(\tilde{\alpha}_1+1)(\tilde{\alpha}_2+1)(\tilde{\alpha}_1+\tilde{\alpha}_2+1)}}{(\tilde{\alpha}_1+\tilde{\alpha}_2+2)^2}.
\end{equation}
\end{proposition}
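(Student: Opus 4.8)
The plan is to mirror the derivations of Propositions \ref{PSa} and \ref{PSb}: solve the leading scalar loop equation \eqref{4.7} for $W_1^0$ by the quadratic formula and fix the branch from the large-$s_1$ normalisation, substitute the result into the linear equation \eqref{4.8} to obtain $W_1^1$, and substitute $W_1^0$ into the $n=2$ instance of \eqref{4.10} to obtain $W_2^0$. Throughout, by the convention of Proposition \ref{prop4.5} the symbols $\alpha_1,\alpha_2$ appearing in \eqref{4.8}--\eqref{4.10} are to be read as $\delta_1,\delta_2$, while $\tilde\alpha_1,\tilde\alpha_2$ retain their meaning as the $N\kappa$-coefficients.

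First I would solve \eqref{4.7}. Writing $\tfrac{\tilde\alpha_1}{s_1}-\tfrac{\tilde\alpha_2}{1-s_1}=\tfrac{\tilde\alpha_1-(\tilde\alpha_1+\tilde\alpha_2)s_1}{s_1(1-s_1)}$ and clearing the denominator $s_1(1-s_1)$, the discriminant of the quadratic has numerator a quadratic polynomial in $s_1$ whose leading coefficient is $(\tilde\alpha_1+\tilde\alpha_2)^2+4(\tilde\alpha_1+\tilde\alpha_2+1)=(\tilde\alpha_1+\tilde\alpha_2+2)^2$ and whose constant term is $\tilde\alpha_1^2$. I would then check that this polynomial equals $(\tilde\alpha_1+\tilde\alpha_2+2)^2(s_1-c_-)(s_1-c_+)$ for $c_\pm$ as in \eqref{4.26}, by matching the sum of roots $[2\tilde\alpha_1(\tilde\alpha_1+\tilde\alpha_2)+4(\tilde\alpha_1+\tilde\alpha_2+1)]/(\tilde\alpha_1+\tilde\alpha_2+2)^2$ and the product of roots $\tilde\alpha_1^2/(\tilde\alpha_1+\tilde\alpha_2+2)^2$ (the latter forcing $\sqrt{c_-c_+}=\tilde\alpha_1/(\tilde\alpha_1+\tilde\alpha_2+2)$). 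Rewriting the non-radical part of the quadratic formula back in partial fractions produces the simple-pole pieces $-\tilde\alpha_1/(2s_1)+\tilde\alpha_2/(2(1-s_1))$, and choosing the branch of $\sqrt{(s_1-c_-)(s_1-c_+)}$ that behaves like $s_1$ as $s_1\to+\infty$ selects the root of \eqref{4.7} decaying like $1/s_1$ at infinity, which is precisely \eqref{4.24}.

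Next, solving the linear equation \eqref{4.8} for $W_1^1$ gives it as a ratio whose denominator is $\tfrac{\tilde\alpha_1}{s_1}-\tfrac{\tilde\alpha_2}{1-s_1}+2W_1^0(s_1)$; by \eqref{4.24} this collapses to $-(\tilde\alpha_1+\tilde\alpha_2+2)\sqrt{(s_1-c_-)(s_1-c_+)}/(s_1(1-s_1))$, exactly as in the proofs of Propositions \ref{PSa} and \ref{PSb}. The numerator involves $\tfrac{\partial}{\partial s_1}W_1^0$, which on differentiating the radical contributes both rational terms and a term proportional to $((s_1-c_-)(s_1-c_+))^{-1/2}$. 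Clearing the common denominators $s_1$, $1-s_1$, $(s_1-c_-)(s_1-c_+)$ and collecting the rational part and the $((s_1-c_-)(s_1-c_+))^{-1/2}$ part separately, using $c_-+c_+$ and $c_-c_+$ from \eqref{4.26}, then assembles the answer into the compact form \eqref{4.25}. For $W_2^0$, setting $n=2$ in \eqref{4.10} shows the coefficient of $W_2^0(s_1,s_2)$ is again the collapsing factor $-(\tfrac{\tilde\alpha_1}{s_1}-\tfrac{\tilde\alpha_2}{1-s_1}+2W_1^0(s_1))$; rather than simplify the right side by hand, I would verify that the universal form \eqref{2.7a} with $(a,b)=(c_-,c_+)$ solves the equation. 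This is natural because the branch points of $W_1^0$ in \eqref{4.24} are exactly $c_\pm$, so $\rho_1^0$ is supported on $(c_-,c_+)$; substituting \eqref{2.7a} with these endpoints together with \eqref{4.24} into \eqref{4.10} with $n=2$, and handling the removable singularity at $s_1=s_2$ by the limiting process of Remark \ref{R14}, confirms it.

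The step I expect to be the main obstacle is the algebraic bookkeeping in the derivation of \eqref{4.25}: differentiating the square root in $W_1^0$, clearing the three families of denominators, and grouping so that the radical and rational contributions each condense into the displayed form. As a consistency check I would confirm that setting $\tilde\alpha_2=0$ (hence $c_+\to1$ and $c_-\to\tilde\alpha_1^2/(\tilde\alpha_1+2)^2$) recovers Proposition \ref{PSb}, and that further setting $\tilde\alpha_1=0$ recovers Proposition \ref{PSa}; these limits also pin down the sign conventions on the radicals.
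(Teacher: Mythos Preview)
Your proposal is correct and follows essentially the same approach as the paper, which simply states ``Proceeding as in the derivation of Propositions \ref{PSa} and \ref{PSb}'': solve the quadratic \eqref{4.7} for $W_1^0$ with the branch fixed by $W_1^0(s_1)\sim 1/s_1$, substitute into \eqref{4.8} for $W_1^1$, and use the $n=2$ case of \eqref{4.10} for $W_2^0$. The only minor variation is that for $W_2^0$ you propose to verify the universal form \eqref{2.7a} satisfies \eqref{4.10} rather than solve for it directly; either route is valid and the paper does not elaborate on which it takes here.
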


We remark that taking the limit $\tilde{\alpha}_2 \to 0$ in the results of Proposition \ref{PSc} reclaims the
results of Proposition \ref{PSb} with $\alpha_2 = \delta_2$.

Using (\ref{4.16}) and (\ref{4.17}) we read off from (\ref{4.24}) that
\begin{align*}
\rho_1^0(x) & = {(\tilde{\alpha}_1 + \tilde{\alpha}_2 + 2 ) \over 2 \pi x (1 - x)}
\sqrt{(x - c_1)(c_2 - x)} \chi_{c_1 < x < c_2} 
%\rho_1^1(x) & = \left[\frac{\kappa-1}{2}\left(\frac{\tilde{\alpha}_1(1-x)}{x}+\frac{\tilde{\alpha}_2 x}{1-x}\right)+\frac{(\tilde{\alpha}_1+\tilde{\alpha}_2)x-\tilde{\alpha}_1}{2}\left(\frac{\delta_1}{x}-\frac{\delta_2}{1-x}\right)+\delta_1+\delta_2+1-\kappa\right]\nonumber
%\\&\qquad \times \frac{ \chi_{c_1 < x < c_2}}{\pi (\tilde{\alpha}_1+\tilde{\alpha}_2+2)\sqrt{(x-c_{-})(c_{+} - x)}}
%- {\kappa - 1 \over 4} \Big ( {2 - c_- - c_+ \over c_+ - c_-} \Big ) \Big (
%\delta(x - c_+) - \delta(x - c_-) \Big ).
\end{align*}
which is in fact well known; see e.g.~\cite[Eq.~(149) with $u=\tilde{\alpha}_1 + 1$, $v =\tilde{\alpha}_2 + 1$]{MS12}
and references therein.

%Using these values and expanding in large $s_1$, we find the first moment to be
%\begin{equation}\label{4.26}
%m_1=\frac{(\tilde{\alpha}_1+1)N}{\tilde{\alpha}_1+\tilde{\alpha}_2+2}+\frac{(\kappa-1)(\tilde{\alpha}_1-\tilde{\alpha}_2)}{(\tilde{\alpha}_1+\tilde{\alpha}_2+2)^2\kappa}+{\rm O}\Big(\frac{1}{N}\Big).
%\end{equation}
%Again, after appropriate substitutions, this result agrees with that given in \eqref{4.15} and \eqref{4.23}.

%\newpage
%\bibliographystyle{plainnat}
%\bibliography{Bibliography}

\section*{Acknowledgements}
The work of PJF was partially supported by the Australian Research Council Grant DP170102028 and
the ARC Centre of Excellence for Mathematical and Statistical Frontiers,
and that of AAR by Australian Research Council Grant DP140102613 and the 
ARC Centre of Excellence for Mathematical and Statistical Frontiers.

\appendix
\setcounter{equation}{0}
\section{}\label{appA}
Let $I(x_1,\ldots,x_n)$ denote the right hand side of the $n$\textsuperscript{th} loop equation \eqref{3.8}, and for our induction hypothesis, assume that $I(x_1,\ldots,x_m)=0$ for all $1\leq m\leq n-1$. We repeatedly use \eqref{2.6},
\begin{align*}
\overline{U}_n(x_1,J_n)=\overline{W}_n(x_1,J_n)+\sum_{\emptyset\neq J\subseteq J_n}\overline{W}_{n-|J|}(x_1,J_n\setminus J)\overline{U}_{|J|}(J),
\\J_n=(x_2,\ldots,x_n),\quad J_1=\emptyset.
\end{align*}
Subtracting $\overline{U}_{n-1}(J_n)I(x_1)$ from the right hand side of the loop equation for unconnected correlators \eqref{3.7} then leaves us with
\begin{align}
0&=\left[(\kappa-1)\frac{\partial}{\partial x_1}+\left(\frac{\alpha_1}{x_1}-1\right)\right]\left\{\overline{W}_n(x_1,J_n)+\sum_{\emptyset\neq J\subset J_n}\overline{W}_{n-|J|}\left(x_1,J_n\setminus J\right)\overline{U}_{|J|}(J)\right\}\nonumber
\\&\quad+\sum_{k=2}^n\frac{\partial}{\partial x_k}\left\{\frac{\overline{U}_{n-1}(x_1,\ldots,\hat{x}_k,\ldots,x_n)-\overline{U}_{n-1}(J_n)}{x_1-x_k}+\frac{1}{x_1}\overline{U}_{n-1}(J_n)\right\}\nonumber
\\&\quad+\kappa\overline{W}_{n+1}(x_1,x_1,J_n)+\kappa\sum_{\emptyset\neq J\subset J_n}\overline{W}_{n+1-|J|}(x_1,x_1,J_n\setminus J)\overline{U}_{|J|}(J)\nonumber
\\&\quad+\kappa\sum_{J\subseteq J_n}\overline{W}_{n-|J|}(x_1,J_n\setminus J)\overline{U}_{|J|+1}(x_1,J)-\kappa\overline{W}_1(x_1)\overline{W}_1(x_1)\overline{U}_{n-1}(J_n),\label{LE3p1}
\end{align}
where $\subset$ denotes a strict or proper subset, and $\subseteq$ denotes otherwise. Here, we have used the relation
\begin{align*}
\overline{U}_{n+1}(x_1,x_1,J_n)&=\overline{W}_{n+1}(x_1,x_1,J_n)+\sum_{\emptyset\neq J'\subseteq (x_1,J_n)}\overline{W}_{n-|J'|}(x_1,(x_1,J_n)\setminus J')\overline{U}_{|J'|}(J')
\end{align*}
and the fact that $\{\emptyset\neq J'\subseteq(x_1,J_n)\}=\{\emptyset\neq J'\subseteq J_n\}\cup\{J'=(x_1,J'')\,|\,J''\subseteq J_n\}$.

Continuing on, we note that for any $J\subseteq J_n$,
\begin{align*}
&\overline{U}_{|J|}(J)\frac{\partial}{\partial x_1}\overline{W}_{n-|J|}(x_1,J_n\setminus J)=\frac{\partial}{\partial x_1}\left\{\overline{W}_{n-|J|}(x_1,J_n\setminus J)\overline{U}_{|J|}(J)\right\},
\\
\\&\overline{U}_{|J|}(J)\frac{\partial}{\partial x_k}\frac{\overline{W}_{n-1-|J|}(x_1,(J_n\setminus J)\setminus(x_k))-\overline{W}_{n-1-|J|}(J_n\setminus J)}{x_1-x_k}
\\&=\frac{\partial}{\partial x_k}\frac{\overline{W}_{n-1-|J|}(x_1,(J_n\setminus J)\setminus(x_k))\overline{U}_{|J|}(J)-\overline{W}_{n-1-|J|}(J_n\setminus J)\overline{U}_{|J|}(J)}{x_1-x_k},\quad x_k\in(J_n\setminus J),
\\
\\&\overline{U}_{|J|}(J)\frac{\partial}{\partial x_k}\left\{\frac{1}{x_1}\overline{W}_{n-1-|J|}(J_n\setminus J)\right\}=\frac{\partial}{\partial x_k}\left\{\frac{1}{x_1}\overline{W}_{n-1-|J|}(J_n\setminus J)\overline{U}_{|J|}(J)\right\},\quad x_k\in(J_n\setminus J),
\end{align*}
as the relevant partial derivatives do not depend on any of the variables present in our $\overline{U}$ multipliers. Hence, subtracting $\sum\limits_{\emptyset\neq J\subset J_n}I(x_1,J_n\setminus J)\overline{U}_{|J|}(J)$ from our equation \eqref{LE3p1} results in
\begin{align}
0&=\left[(\kappa-1)\frac{\partial}{\partial x_1}+\left(\frac{\alpha}{x_1}-1\right)\right]\overline{W}_n(x_1,J_n)+\kappa\overline{W}_{n+1}(x_1,x_1,J_n)\nonumber
\\&\quad+\sum_{k=2}^n\frac{\partial}{\partial x_k}\left\{\frac{\overline{W}_{n-1}(x_1,\ldots,\hat{x}_k,\ldots,x_n)-\overline{W}_{n-1}(J_n)}{x_1-x_k}+\frac{1}{x_1}\overline{W}_{n-1}(J_n)\right\}\nonumber
\\&\quad+\kappa\sum_{J\subseteq J_n}\overline{W}_{n-|J|}(x_1,J_n\setminus J)\overline{U}_{|J|+1}(x_1,J)-\kappa\overline{W}_1(x_1)\overline{W}_1(x_1)\overline{U}_{n-1}(J_n)\nonumber
\\&\quad-\kappa\sum_{\emptyset\neq J\subset J_n}\sum_{K\subseteq J_n\setminus J}\overline{W}_{|K|+1}(x_1,K)\overline{W}_{n-|K|}(x_1,(J_n\setminus J)\setminus K)\overline{U}_{|J|}(J)\label{LE3p2}
\end{align}
The last two lines of the above \eqref{LE3p2} then simplifies,
\begin{align*}
&\kappa\sum_{J\subseteq J_n}\overline{W}_{n-|J|}(x_1,J_n\setminus J)\overline{U}_{|J|+1}(x_1,J)-\kappa\overline{W}_1(x_1)\overline{W}_1(x_1)\overline{U}_{n-1}(J_n)
\\&\quad-\kappa\sum_{\emptyset\neq J\subset J_n}\sum_{K\subseteq J_n\setminus J}\overline{W}_{|K|+1}(x_1,K)\overline{W}_{n-|K|}(x_1,(J_n\setminus J)\setminus K)\overline{U}_{|J|}(J)
\\&=\kappa\sum_{J\subseteq J_n}\overline{W}_{n-|J|}(x_1,J_n\setminus J)\overline{U}_{|J|+1}(x_1,J)+\kappa\sum_{K\subseteq J_n}\overline{W}_{|K|+1}(x_1,K)\overline{W}_{n-|K|}(x_1,J_n\setminus K)
\\&\quad-\kappa\sum_{J\subseteq J_n}\sum_{K\subseteq J_n\setminus J}\overline{W}_{|K|+1}(x_1,K)\overline{W}_{n-|K|}(x_1,(J_n\setminus J)\setminus K)\overline{U}_{|J|}(J),
\end{align*}
where we absorb the $\kappa\overline{W}_1(x_1)\overline{W}_1(x_1)\overline{U}_{n-1}(J_n)$ term into the latter sum, and extract the $J=\emptyset$ terms from it. Furthermore, interchanging the order of summation, the last two lines of \eqref{LE3p2} become
\begin{align*}
&\kappa\sum_{J\subseteq J_n}\overline{W}_{n-|J|}(x_1,J_n\setminus J)\overline{U}_{|J|+1}(x_1,J)+\kappa\sum_{K\subseteq J_n}\overline{W}_{|K|+1}(x_1,K)\overline{W}_{n-|K|}(x_1,J_n\setminus K)
\\&\quad-\kappa\sum_{K\subseteq J_n}\overline{W}_{|K|+1}(x_1,K)\sum_{J\subseteq J_n\setminus K}\overline{W}_{n-|J|-|K|}(x_1,(J_n\setminus K)\setminus J)\overline{U}_{|J|}(J)
\\&=\kappa\sum_{J\subseteq J_n}\overline{W}_{n-|J|}(x_1,J_n\setminus J)\overline{U}_{|J|+1}(x_1,J)+\kappa\sum_{K\subseteq J_n}\overline{W}_{|K|+1}(x_1,K)\overline{W}_{n-|K|}(x_1,J_n\setminus K)
\\&\quad-\kappa\sum_{K\subseteq J_n}\overline{W}_{|K|+1}(x_1,K)\overline{U}_{n-|K|}(x_1,J_n\setminus K),
\end{align*}
which, after writing $L=J_n\setminus K$, simplifies to
\begin{align*}
&\kappa\sum_{J\subseteq J_n}\overline{W}_{n-|J|}(x_1,J_n\setminus J)\overline{U}_{|J|+1}(x_1,J)+\kappa\sum_{K\subseteq J_n}\overline{W}_{|K|+1}(x_1,K)\overline{W}_{n-|K|}(x_1,J_n\setminus K)
\\&-\kappa\sum_{L\subseteq J_n}\overline{W}_{n-|L|}(x_1,J_n\setminus L)\overline{U}_{|L|+1}(x_1,L)
\\=&\kappa\sum_{K\subseteq J_n}\overline{W}_{|K|+1}(x_1,K)\overline{W}_{n-|K|}(x_1,J_n\setminus K).
\end{align*}
Replacing the last two lines of \eqref{LE3p2} by this result then completes the proof.

\section*{Appendix B}\label{}
\setcounter{equation}{0}
\renewcommand{\theequation}{B.\arabic{equation}}
\subsection*{Jacobi $ \beta $ Ensembles}\label{}

It was commented in the paragraph beginning with equation (\ref{4.15}) that the moments of the
Jacobi $\beta$-ensemble are rational functions in $N$, which can for low orders be computed explicitly
using Jack polynomial theory. Aspects of the latter are available as a computer algebra package \cite{DES07},
which can used to obtain the explicit form of $m_3^{(J)}$, with $m^{(J)}_{1}$ and $m^{(J)}_{2}$ already known
from \cite{MR15}. Making use of a partial fraction expansion, we find
{\small \begin{multline}
  m^{(J)}_{3} = \frac{5}{16}\,N
	+\frac{1}{32}\frac{5\,\alpha_{1}-11\,\alpha_{2}+3\,\kappa -3}{\kappa}
\\
	-\frac{1}{128}\frac{(\alpha_{1}-\alpha_{2})
  (\alpha_{1}+\alpha_{2}-2\,\kappa +2)}{{\kappa}^{3}}
    \\ \times
    \frac{\left[(\alpha_{1}-\alpha_{2})(\alpha_{1}+\alpha_{2}-2\kappa+2)-4\,\kappa \right] \left[(\alpha_{1}-\alpha_{2})(\alpha_{1}+\alpha_{2}-2\kappa+2)-8\,\kappa \right]}
         {(2\,\kappa N+\alpha_{1}+\alpha_{2}-2\,\kappa +2)}
\\
	+\frac{1}{32}\frac{(\alpha_{1}-\alpha_{2}+1)(\alpha_{1}-\alpha_{2}-1)(\alpha_{1}+\alpha_{2}-2\,\kappa+1)(\alpha_{1}+\alpha_{2}-2\,\kappa +3)}
	{\kappa (\kappa +1 )( 2\,\kappa +1)}
	\\ \times
	\frac{\left[ (\alpha_{1}-\alpha_{2})(\alpha_{1}+\alpha_{2}-2\kappa+2)-6\,\kappa -3 \right]}{( 2\,\kappa N+\alpha_{1}+\alpha_{2}-2\,\kappa +3)}
\\
	+\frac{1}{32}\frac{(\alpha_{1}-\alpha_{2}-\kappa)(\alpha_{1}-\alpha_{2}+\kappa)(\alpha_{1}+\alpha_{2}-3\,\kappa +2)(\alpha_{1}+\alpha_{2}-\kappa +2)}{{\kappa}^{3}(\kappa +1)(\kappa +2)}
    \\ \times
    \frac{\left[(\alpha_{1}-\alpha_{2})(\alpha_{1}+\alpha_{2}-2\kappa+2)-3\,{\kappa}^{2}-6\,\kappa \right]}{(2\,\kappa N+\alpha_{1}+\alpha_{2}-3\,\kappa +2)}
\\
	-\frac{1}{128}\frac{(\alpha_{1}-\alpha_{2}+2\,\kappa)(\alpha_{1}-\alpha_{2})(\alpha_{1}-\alpha_{2}-2\,\kappa)}
                       {{\kappa}^{3}(\kappa +1)(2\,\kappa +1)}
    \\ \times
    \frac{(\alpha_{1}+\alpha_{2}+2)(\alpha_{1}+\alpha_{2}-2\,\kappa +2)(\alpha_{1}+\alpha_{2}-4\,\kappa +2)}{(2\,\kappa N+\alpha_{1}+\alpha_{2}-4\,\kappa +2)}
\\
	-\frac{1}{128}\frac{(\alpha_{1}-\alpha_{2}-2)(\alpha_{1}-\alpha_{2})(\alpha_{1}-\alpha_{2}+2)}
                       {\kappa(\kappa +1)(\kappa +2)}
     \\ \times
     \frac{(\alpha_{2}+\alpha_{1}-2\,\kappa)(\alpha_{1}+\alpha_{2}-2\,\kappa +4)(\alpha_{1}+\alpha_{2}-2\,\kappa +2)}{(2\,\kappa N+\alpha_{1}+\alpha_{2}-2\,\kappa +4)} .
\end{multline}}

\subsection*{Dyson Circular $ \beta $ Ensemble}\label{}

It follows from \cite[Prop.~3.9.1]{Fo10} that
the specialisation $ \alpha_{1} \mapsto -\kappa N+\kappa -1 $, $ \alpha_{2} \mapsto 0 $ yields the moments 
$m^{(C)}_{l}$
of the 
connected two-point correlation (see \cite[Eq.~(1.12)]{WF14}) for the Dyson Circular $ \beta $ Ensemble
according to the prescription
\begin{equation}
     M_{l} = \left.\frac{m^{(J)}_{l}(\kappa, N, \alpha_{1}, \alpha_{2})}{(\kappa^{-1}\alpha_{1}+\kappa^{-1}+N-1)}\right|_{\alpha_{1} \mapsto -\kappa N+\kappa -1,\alpha_{2} \mapsto 0} , \qquad
      M_l = \frac{\kappa}{l}\left( m_l^{(C)}(N,\kappa)+N \right).
\end{equation}
This allows us to reclaim  the equations for $M_1$ and $M_2$ of \cite[Eq.~(4.17) \& (4.18)]{WF14}
and further allows us to compute, using knowledge of $m^{(J)}_{3}$, that
\begin{multline}
  M_{3} = 1+    
		  \frac{(\kappa-1)}{(\kappa N-\kappa +1)}
\\
		 -4\,\frac{\kappa(\kappa -1)(\kappa-2)}{(\kappa +1)(2\,\kappa +1)(\kappa N-\kappa +2)}
		 +4\,\frac{(\kappa-1)(2\,\kappa -1)}{(\kappa +1)(\kappa +2)(\kappa N-2\,\kappa +1)}
\\
		 +\frac{(\kappa-1)(\kappa -2)(\kappa -3)}{(\kappa +1)(\kappa +2) (\kappa N-\kappa +3)}
		 +\frac{( \kappa -1)(2\,\kappa-1)(3\,\kappa -1)}{(\kappa +1)(2\,\kappa +1)(\kappa N-3\,\kappa +1)}.
\label{}
\end{multline}

%\bibliographystyle{amsplain}
%\bibliography{book1t16}

\providecommand{\bysame}{\leavevmode\hbox to3em{\hrulefill}\thinspace}
\providecommand{\MR}{\relax\ifhmode\unskip\space\fi MR }
% \MRhref is called by the amsart/book/proc definition of \MR.
\providecommand{\MRhref}[2]{%
  \href{http://www.ams.org/mathscinet-getitem?mr=#1}{#2}
}
\providecommand{\href}[2]{#2}

\end{document}